\documentclass[11pt]{article}

\usepackage{setspace}
\usepackage[margin=1.5in]{geometry}

\usepackage[english]{babel}

\usepackage{amsmath}
\usepackage{amsfonts}
\usepackage{amssymb}
\usepackage{amsthm}
\usepackage{graphicx}
\usepackage{tikz-cd}
\usepackage{mathrsfs}
\usepackage{xfrac}
\usepackage{url}
\usepackage{color}
\usepackage[colorlinks=true, citecolor=blue]{hyperref}
\usepackage{enumitem}
\usepackage[usestackEOL]{stackengine}
\usepackage{accents}
\usepackage{bbm}
\usepackage{authblk}

\numberwithin{equation}{section}
\date{May 12, 2026}

\def\a{\alpha}

\def\be{\begin{equation}}
\def\ee{\end{equation}}
\def\ba#1\ea{\begin{align}#1\end{align}}
\def\no{\nonumber\\ }
\def\ra{\rangle}
\def\la{\langle}

\def\pu{\partial_u}
\def\pup{\partial_{u'}}
\def\ang{\hat r}
\def\mem{{\scr M}}

\newcommand{\ca}[1]{\mathcal{#1}}
\newcommand{\bb}[1]{\mathbb{#1}}
\newcommand{\fr}[1]{\mathfrak{#1}}
\newcommand{\scr}[1]{\mathscr{#1}}

\newcommand{\wt}[1]{\widetilde{#1}}

\newcommand{\ol}[1]{\overline{#1}}
\newcommand{\un}[1]{\underline{#1}}

\newtheorem{definition}{Definition}
\newtheorem{theorem}{Theorem}
\newtheorem{lemma}{Lemma}
\newtheorem{proposal}{Proposal}

\graphicspath{{Figures/}}

\newcommand{\barray}{\begin{array}}
\newcommand{\earray}{\end{array}}
\newcommand{\bea}{\begin{eqnarray}}
\newcommand{\eea}{\end{eqnarray}}

\newcommand{\bit}{\begin{itemize}}
\newcommand{\eit}{\end{itemize}}
\newcommand{\bd}{\begin{description}}
\newcommand{\ed}{\end{description}}

\newcommand{\re}{\mathrm{Re}}
\newcommand{\im}{\mathrm{Im}}

\def\la{\langle}
\def\ra{\rangle}

\def\w{\wedge}
\newcommand{\p}{\partial}

\newcommand{\f}{\frac}

\renewcommand{\a}{\alpha}    
\renewcommand{\d}{\delta}  \newcommand{\eps}{\epsilon} \newcommand{\z}{\zeta}
 \renewcommand{\th}{\theta}

 \newcommand{\s}{\sigma}      
    \let\L=\Lambda 
 \let\Om=\Omega

\usepackage{slashed}

\usepackage{pgf}
\makeatletter
\newcommand*{\pgfunderleftarrow}{%
  \@ifstar
    {\let\ifpgf@depth\iftrue\mathpalette\@pgfunderleftarrow}
    {\let\ifpgf@depth\iffalse\mathpalette\@pgfunderleftarrow}%
}
\newcommand*{\@pgfunderleftarrow}[2]{%
  #2%
  \edef\pgf@math@fam{\the\fam}%
  \pgfpicture
    \pgfsetbaseline{0pt}
    \pgf@relevantforpicturesizefalse      
    \pgfsetroundcap                       
    \pgfsetarrowsend{to}
    \pgfutil@tempdima=0.28pt%
    \advance\pgfutil@tempdima by.8\pgflinewidth%
    \pgfutil@tempdima-4\pgfutil@tempdima
    \sbox\pgfutil@tempboxa{$\m@th\fam\pgf@math@fam#1#2$}%
    \advance\pgfutil@tempdima-\dp\pgfutil@tempboxa
    \pgfutil@tempdimb\wd\pgfutil@tempboxa
    \pgfpathmoveto{\pgfqpoint{0pt}{\pgfutil@tempdima}}%
    \pgfpathlineto{\pgfqpoint{-\pgfutil@tempdimb}{\pgfutil@tempdima}}%
    \pgfusepath{stroke}
    \ifpgf@depth
      \pgf@relevantforpicturesizetrue
      \pgfpathmoveto{\pgfqpoint{0pt}{-\pgfutil@tempdimb}}%
      \pgfusepath{use as bounding box}%
    \fi
  \endpgfpicture
}
\makeatother

\newcommand{\sscr}{\scriptscriptstyle\rm}

\usepackage{manfnt}
\reversemarginpar

\DeclareFontFamily{U}{matha}{\hyphenchar\font45}
\DeclareFontShape{U}{matha}{m}{n}{
      <5> <6> <7> <8> <9> <10> gen * matha
      <10.95> matha10 <12> <14.4> <17.28> <20.74> <24.88> matha12
      }{}
\DeclareSymbolFont{matha}{U}{matha}{m}{n}
\DeclareMathSymbol{\oright}       {2}{matha}{"69}

\usepackage{pifont}

\usepackage{mathrsfs}
\newcommand{\Dd}{D}
\newcommand{\scri}{{\mathscr{I}}} 
 
\def\con{K}
\newcommand{\DY}{\text{div}(Y)}

\newcommand{\se}{s^{\sscr e}} 
\newcommand{\sm}{s^{\sscr m}} 

\title{An algebra of proper observables at null infinity:\\ Dirac brackets, Memory and Goldstone probes
\vspace{5mm}
}
\date{June 1, 2026}
\author[1]{Rodrigo Andrade e Silva\thanks{randradeesilva@perimeterinstitute.ca}}
\author[2]{Simone Speziale\thanks{simone.speziale@cpt.univ-mrs.fr}}
\affil[1]{\it \small Perimeter Institute for Theoretical Physics, Waterloo, ON, Canada}
\affil[2]{\it \small Aix Marseille Univ., Univ. de Toulon, CNRS, CPT, UMR 7332, 13288 Marseille, France}

\begin{document}
\begin{titlepage}
\maketitle
\thispagestyle{empty}

\begin{abstract}
We develop a rigorous evaluation of Dirac brackets for classical observables on the phase space of radiative gravitational modes at null infinity that naturally incorporates memory effects. 
Considering the Ashtekar-Streubel phase space, with boundary conditions in time given by vanishing {\it news} and purely electric {\it shear}, and taking into account the infinite dimensionality of the phase space, we identify the algebra of proper observables (understood as
functions on phase space that can be associated with smooth symplectic flows).
We show that the action of supertranslation charges generate the correct transformations on the shear.
We also show that the conventional definition of the ``Goldstone mode'' adopted in the literature cannot be associated with a proper observable, but nevertheless there exists an infinite family of proper observables, which we call {\it Goldstone probes}, that are capable of measuring the Goldstone mode.
We notice that there are no Goldstone probes constructed only out of the shear {\it or} the news, providing a possible explanation for why attempts to construct a (separable) Hilbert space with different memory states have failed so far.
Finally, we derive formulas for distributional Dirac brackets between local shear and news, and show that they contain non-local corrections. 
\end{abstract}

\end{titlepage}

\tableofcontents

\section{Introduction}
\label{SecIntro}

Defining quantum states carrying gravitational memory is an important open question, crucial for the understanding of infrared physics and the existence of the gravitational $S$-matrix, see e.g. \cite{Ashtekar:1981sf,Ashtekar84book,He:2014laa,Strominger:2017zoo,Ashtekar:2018lor,Prabhu:2022zcr}.
This problem has a classical counterpart which, in our opinion, remains open: there is no satisfactory characterization of the algebra of proper observables on the radiative phase space at null infinity $\scri$ that includes memory as a non-central element.
By proper observables we mean functions on phase space that can be associated with (smooth) Hamiltonian vector fields, and therefore inherit a natural algebraic structure from the symplectic form, via the Poisson brackets.
Since quantization relies fundamentally on the classical algebra of (proper) observables, resolving the classical problem could shed some light on the quantum problem as well.
A possible step forward appeared in \cite{He:2014laa}, as a consequence of attempting to solve a seemingly different problem. Namely, they noticed an apparent issue with the action of supertranslations on the shear (a missing factor of $2$), via a computation of Dirac brackets, which led them to postulate a {\it Goldstone mode} canonically conjugated to the memory.  
While this idea has been quite influential, an actual derivation of the brackets including the Goldstone mode 
has not been provided in the literature yet.
(Further investigations of this matter, and some proposals, have appeared in \cite{Campiglia:2021bap,Freidel:2021dfs,Campiglia:2024uqq}.)
It is thus crucial to ascertain the status of the Goldstone mode, and moreover to characterize the complete algebra of radiative gravitational observables inherited from the symplectic structure at null infinity. 
In this paper we will address these points.

We consider the Ashtekar-Streubel phase space of radiative gravitational modes at $\scri$, with non-vanishing asymptotic limits of the shear, necessary to include memory. (While we assume purely electric shear in the limits, our results can be straightforwardly extended to the case with arbitrary asymptotic limits of the shear.)
Taking into careful account the infinite dimensionality of the phase space and the non-trivial boundary conditions, we obtain a complete characterization of the {\it algebra of proper gravitational observables} at $\scri$.
We furthermore clarify non-trivial properties of the Dirac brackets on this phase space, and their evaluation leads to significant novelties compared to the literature.
With our construction, the action of supertranslations on the shear is correct, and one has a clear understanding of what the observables are. 
Notably, we find that the Goldstone mode cannot be associated with a proper observable, and therefore its brackets are formally ill-defined. 
Nevertheless, there exists an infinite family of proper observables, all built from mixed shear {\it and} news, that do not commute with the memory. 
As they are capable of ``indirectly'' measuring the Goldstone mode, we call them {\it Goldstone probes}.
Importantly, it follows that the memory, which belongs to the algebra of proper observables, is not central.

The gravitational radiative phase space and its basic properties were established a long time ago \cite{Ashtekar:1981bq,Ashtekar:1981hw,Ashtekar:1990gc}. 
The first observables that were constructed were the news smeared against rapidly-decaying test functions \cite{Ashtekar:1981sf,Ashtekar:2018lor}, and their algebra is the starting point for the Fock quantization of the radiative phase space \cite{Ashtekar84book,He:2014laa,Strominger:2017zoo,Ashtekar:2018lor,Prabhu:2022zcr}.
This construction excludes memory, and 
in existing attempts
to restore it, it appears as a central element in the classical algebra, and super-selected at the quantum level.
If one uses a Dirac delta as smearing function, one gets schematically
\be\label{NNbra}
\{N(u),N(u')\}\sim \p_u\d(u,u').
\ee
This is the bracket used in \cite{He:2014laa}. 
The problem with using conclusions obtained in this manner is that the bracket 
\eqref{NNbra} is \emph{inconsistent with the presence of memory}, as we establish in this paper. 
It is thus ill-fated to use it to compute brackets on $\scri$ when memory is allowed.
To gain some intuition about the problem, notice that 
the right-hand side structure in \eqref{NNbra} is typical of any light-front canonical analysis, and indeed the relation between the news and the shear has precisely the structure of the second-class constraint known as light-front condition \cite{Torre:1985rw,Heinzl:1993px,Heinzl:2000ht,Alexandrov:2014rta,DePaoli:2017sar}. Work on the light-front canonical analysis shows that a distributional Dirac bracket like \eqref{NNbra} is obtained under trivial or periodic boundary conditions \cite{Heinzl:1993px,Heinzl:2000ht,Barnich:2024aln}, both incompatible with memory. 
When discussing observables, non-trivial boundary conditions on the fields typically imply also non-trivial boundary conditions on the smearings, and we will see in detail how this happens. Then, a fatal flaw of an expression like \eqref{NNbra} becomes apparent: for spaces of smearings that do not vanish asymptotically, derivatives of Dirac deltas do {\it not} satisfy the familiar property $\pup\delta(u,u') = - \pu\delta(u,u')$, and consequently such a bracket is not even anti-symmetric. As we will see, to construct proper observables that don't commute with memory (our Goldstone probes) it is essential to consider such smearings of the news that do not vanish asymptotically.

In order to extend the construction of \cite{Ashtekar:1981sf,Ashtekar84book} to include non-central memory, we
employ a rigorous extension of the Dirac brackets analysis to infinite dimensions.
As mentioned before, the crux of the matter is that observables should correspond to functions with smooth symplectic flows tangent to the physical phase space.
In finite dimensional systems this is a relatively trivial step, but the situation changes in field theories because the symplectic $2$-form tends to be only weakly non-degenerate. 
While this fact is well-known, in many situations the consequences are mild and do not affect the physics in essential ways. For example, it may require that observables are smooth functions with smooth functional derivatives, and one can still manipulate local fields $\phi(x)$ as distributions, without much care, and still obtain correct results.
What makes it peculiar, and interesting, in the gravitational radiative phase space is that the non-trivial boundary conditions, together with the light-front constraint, imply a non-local constraint on the space of smearings, ``dual'' in a certain sense to the phase space constraint.
This, in particular, is what prevents Goldstone probes to be constructed only out of shears or only out of news.
Moreover, the need to work with spaces of smearings that do not vanish asymptotically imply that distributions behave in possibly unfamiliar ways, as indicated earlier.
In view of those subtleties, we find that a certain amount of mathematical precision seems to be helpful and actually necessary in order to make progress on this matter.

\subsection{Summary}

We will now proceed with a brief summary of the paper.

In Sec.~\ref{SecSympScri}, we begin with a basic review of the structure of null infinity, $\scri$, and describe the phase space of radiative gravitational modes according to Ashtekar-Streubel. 
While their symplectic form is constructed using covariant phase space methods, it can also be taken as a starting point for a canonical Dirac analysis by considering the news and the shear as independent variables related by a primary constraint. This allows us to frame the results in a language that is familiar to a large part of the community.
In Sec.~\ref{SecBMS}, we describe the action of the BMS algebra on the phase space.
In Sec.~\ref{Seccomplexvariables}, we review the formalism of Newman-Penrose, which repackages the two degrees of freedom of the shear into a single complex object (and similarly for the news). This offers a convenient parametrization for the fields, simplifying the manipulations and expressions.
In Sec.~\ref{SecBoundaryC}, we discuss more carefully the boundary conditions necessary for our analysis, allowing for memory states and admitting a regular action of the BMS algebra. The resulting choices for the spaces of shears and news are inspired by the original proposal of Ashtekar-Streubel, with a slight modification.

In Sec.~\ref{SecDirbracketsscri}, we study Dirac brackets on $\scri$. We begin with a quick review of the standard Dirac brackets formula, explain subtleties in extending it to infinite dimensions and summarize our proposal for such a generalization.
(The details of this proposal are explained in App.~\ref{AppDirbrackets}.)
We also introduce {\it directional brackets}, which are not brackets in the standard sense but merely a notation to indicate how regular functions (i.e., associated with smooth Hamiltonian flows) act on smooth (not necessarily regular) functions.
In Sec.~\ref{SecKreg}, we study the conditions for (smeared) constraints to be regular and prove, in the language of our formalism, that the set of regular constraints is second class.
In Sec.~\ref{SecTEshearnews}, we consider the {\it shear-news functions}, $\Gamma_{\lambda,\alpha}$, defined by smearings of the shear and news with functions $\lambda$ and $\alpha$, respectively. We derive the conditions for the (kinematical) regularity and tangential extensibility of $\Gamma_{\lambda,\alpha}$ (i.e., such that there is a regular function $\wt \Gamma_{\lambda,\alpha}$, equal on-shell to $\Gamma_{\lambda,\alpha}$, whose symplectic flow is tangent to the constraint surface).
Importantly, we find that $\lambda$ and $\alpha$ must satisfy a non-local, ``dual'' constraint.
In Sec.~\ref{SecDBsN}, we construct their Dirac brackets, and also their directional Dirac brackets.
In Sec.~\ref{SecSuperTflow}, we show how the supertranslation charge generates the correct transformation of the shear.
We see how non-local contributions to the brackets, related to the tangential extensions, are essential for obtaining the correct ``factor of $2$'' for the action of the soft component of the supertranslation.

In Sec.~\ref{SecRedPS}, we test the validity of our Dirac brackets formalism by considering a reduced phase space analysis. Indeed, we confirm that tangential extensibility on the kinematical phase space is (on-shell) equivalent to regularity on the reduced phase space. Also, we show that the Dirac brackets match (on-shell) to the Poisson brackets on the reduced phase space.

In Sec.~\ref{Secalgebra}, we discuss more generally the algebra of proper observables and some of its properties.
In Sec.~\ref{SecEntireA}, we describe the entire algebra of proper observables. Their Dirac brackets are obtained analogously to the derivation of the brackets for the shear-news functions.
In Sec.~\ref{SecMemGP}, we discuss the status of the memory and the Goldstone mode. We prove that the only functionals of the asymptotic values of the shear that are equal (on-shell) to proper observables are functionals of the memory mode alone. (The proof uses a lemma established in App.~\ref{AppLemma1formPK}.) It follows that the Goldstone mode cannot be turned into a proper observable, and therefore its brackets are not well-defined. 
We also introduce the concept of {\it Goldstone probes}, which are proper observables that do not commute with memory. 
Importantly, we notice that there are no Goldstone probes constructed out of the shear or the news alone. 
Finally, we mention how one can construct a family of proper observables that, in a limiting sense, converge (at ``$C^0$''-level) to a ``dressed shear'' and, further, to the standard ``Goldstone mode'', but it is still not possible to consistently define brackets for those limits.
In Sec.~\ref{SecCompleteSubA}, we discuss complete subalgebras of proper observables. In particular, we show that the set of shear-news functions (together with the constant function $1$) forms a complete subalgebra.
This subalgebra is not minimal, and we describe two smaller complete subalgebras of it.

In Sec.~\ref{SecDDIB}, we extend the Dirac brackets to complex and ``localized'' functions. 
In Sec.~\ref{SecComplexB}, we consider an algebraic extension to complex valued shear-news functions. Those can be used as tools for computing real brackets, but only via suitable linear combinations.
In Sec.~\ref{SecDistbrackets}, we consider {\it distributional brackets}, giving a precise meaning to brackets between ``localized'' functions, like $\{\sigma(x), N(x')\}_D$.
In Sec.~\ref{SecDefDistB}, we explain that those are not brackets in the standard sense, but should be fundamentally understood as bidistributional objects. Specifically, their defining property is that brackets between proper observables can be computed by suitably smearing the distributional brackets.
In Sec.~\ref{Secfuncanalysis}, we discuss aspects of the functional analysis necessary to rigorously define those brackets. In particular, we define a series of relevant distributions on these spaces of smearings (some of which contain functions that do not vanish asymptotically). We highlight some unfamiliar properties that these distributions may have, such as  $\pup\delta(u,u') \ne - \pu\delta(u,u')$ and $\pu H(u,u') \ne \delta(u,u')$. 
In Sec.~\ref{DistBracketsFormulas}, we display the formulas for the distributional proper Dirac brackets and the distributional directional Dirac brackets. It is apparent that these formulas contain boundary corrections, not encountered in the standard formulas of the literature such as in \eqref{NNbra}. 
(The derivation of these formulas are given, in great detail, in App.~\ref{DistBracketsEval}.)

In Sec.~\ref{SecDiscussion}, we conclude by recapitulating the main results of the paper and commenting on some of their aspects and prospects.

\subsection{Conventions and notation}

\begin{itemize}[
	label=,
     labelsep=1.7em,
     itemsep=7pt,
     leftmargin=2em,
     itemindent=-2em
]

\item[] \un{\it Scri}~~ 
We denote points in $\scri$ as 
\[
x = (u,\ang) \in \scri
\]
where $u \in \bb R$ is the affine parameter along null rays and $\ang \in S^2$. Tensors will be denoted with Latin indices, e.g. $\sigma_{ab}$ for the shear and $N_{ab}$ for the news. The future-pointing null vector by $n = \pu$.
The volume elements on $\scri$ and $S^2$ are respectively $\epsilon_\scri$ and $\epsilon_S$.
Integrals without a limit indicate integration over the whole domain, e.g.
\[
\int\!du := \int_{-\infty}^{+\infty}\!du \,,
\]
and integrals over the sphere are indicated by $\int_S$. 
The Lie derivative along $n$ will be denoted simply as $\pounds_n = \pu$. 

\item[] \un{\it Newman-Penrose}~~ 
We work in the language of the Newman-Penrose formalism, where the shear $\sigma$ and news $N$ are expressed as a complex functions on $\scri$, with spin-weights $2$ and $-2$, respectively. The space of smooth, complex spin-$s$ functions on a manifold $\ca M$ is denoted by
\[
C^\infty(\ca M,\bb C_s) \,.
\]
(Such functions may also carry a conformal weight, but we keep it implicit.) 
The space of smooth, purely electric, spin-$2$ functions on the sphere is denoted by $\ca E$. Complex-conjugation is denoted with a bar, $\sigma \mapsto \bar\sigma$.

\item[] \un{\it Phase spaces}~~ 
The (kinematical) phase space is denoted by $\ca P$, with symplectic form $\Omega$, and the constrained phase space by $\ca P_K$, with constraint $K$. Equality on-shell of the constraint is expressed with $\doteq$. Points in phase space are denoted by $p$ and vectors by $X$. The exterior derivative is $\delta$ and the interior product is $I$ (where $I_X$ inserts $X$ into the first slot of a form). Products of forms on the phase space are understood as exterior products, so $\delta N \,\delta \sigma (X,X') = \delta N(X) \delta\sigma(X') - \delta N(X') \delta\sigma(X)$. We denote the action of vectors on phase space functions simply by
\[
XF := \delta F(X)\,.
\]
The reduced phase space is denoted by $\ca P_R$, with induced symplectic form $\Omega_R$, and tangent vectors are denoted by $V$.

\item[] \un{\it Brackets}~~ 
Functions on the phase space are generically denoted by $F$ or $G$.
We say that $F$ is {\it regular} if it can be associated with a smooth Hamiltonian vector field $X_F$. 
We say that $F$ is {\it tangentially extensible} if it admits a tangential extension $\wt F$. The Hamiltonian vector field of $\wt F$ is denoted by $\wt X_F$.
Poisson and Dirac brackets are denoted with a comma, respectively $\{\,\,,\,\}$ and $\{\,\,,\,\}_D$. Directional brackets are denoted with a semicolon, $\{\,\,;\,\}$ and $\{\,\,;\,\}_D$, with the convention
\[
\{F;G\} := - X_F G\,,
\]
that is, ``the first entry flows the second (with a minus sign)''.

\item[] \un{\it Function spaces}~~ 
The space of shears $\sigma$ is denoted by $\ca F_0$ and the space of (conjugated) news $\bar N$ is denoted by $\ca F_1$. 
The letter $\varphi$ is used for the smearing parameter of the constraint $K$, and the space of $\varphi$ which make $K_\varphi$ regular is called $\ca K$.
The letters $\lambda$ and $\alpha$ are used for the smearing parameters of the shear and news, respectively, in the shear-news function $\Gamma_{\lambda,\alpha}$. The spaces of smooth $(\lambda, \alpha)$ which make $\Gamma_{\lambda,\alpha}$ regular and tangentially extensible are called, respectively, $\ca G_R$ and $\ca G_{TE}$.

\item[] \un{\it Other}~~ 
For the smearing parameter $\lambda$, of the shear in $\Gamma_{\lambda,\alpha}$, we introduce the notation 
\[
\Lambda(x) := \int_{-\infty}^u\!du'\,\lambda(u',\ang)
\]
For any function $f$ on $\scri$, 
\[
f_\pm(\ang) := \lim_{u\to\pm\infty}f(u,\ang)\,.
\]
We use units where $G=c=1$. The set of natural numbers (not including zero) is denoted by $\bb N$.
\end{itemize}

\section{The symplectic structure on $\scri$}
\label{SecSympScri}

We will work with a standard Bondi completion, in which the induced metric at null infinity, $\scri$, has unit-round sphere cross sections. (For definiteness, we shall restrict our attention to future null infinity, $\scri^+$, and denote it $\scri$ for short, but all the constructions can naturally be carried over to past null infinity, $\scri^-$.)
As $\scri$ has topology $\bb R\times S^2$, each point $x\in \scri$ can be decomposed as 
\be
x=(u,\ang) \in\scri
\ee
with $u\in \bb R$ being a (complete) affine parameter and $\ang\in S^2$ a point on the sphere.
We use coordinates $(u,\th,\phi)$, so that the null vector tangent to $\scri$ is $n=\p_u$, $(\theta,\phi)$ are standard angular coordinates on $S^2$ and $q = d\theta^2 + \sin^2\theta d\phi^2$. 
The gravitational radiative degrees of freedom on $\scri$ can be described in terms of the asymptotic shear $\s_{ab}$ associated with the $u$ foliation via $l_a=-\p_au$,
and the news tensor $N_{ab} = 2 \pounds_n\s_{ab}$. The shear and news tensors are  symmetric, 
spatial (in the sense that $\sigma_{ab}n^b = N_{ab}n^b = 0$), and traceless.\footnote{For spatial tensors, raising indices and taking traces are independent of the choice of inverse metric. If need be to make a choice of inverse, we will use the one that annihilates $l_a$.} 
They thus contain two independent components each, which can be thought of as their ``electric'' and ``magnetic'' components. For example, for the shear 
we denote them respectively $\se$ and $\sm$, defined via the decomposition
\be\label{shearem}
\s_{ab}=\Dd_{\la a}\Dd_{b\ra} \se+ \eps_{(a}{}^{c}\Dd_{b)}\Dd_c \sm
\ee
 where $\Dd$ is the spatial covariant derivative compatible with $q$ and $\Dd_{\la a}\Dd_{b\ra} := \Dd_{a}\Dd_{b} - \frac{1}{2}q_{ab}\Dd_c\Dd^c$ is the trace-free part of $\Dd_{a}\Dd_{b}$.

The Ashtekar-Streubel radiative phase space at $\scri$ \cite{Ashtekar:1981hw,Ashtekar:1981bq} is defined including boundary conditions at early and late times, characterized by, at least, vanishing news and purely electric shear
\be\label{bc}
\lim_{u\to\pm \infty} N_{ab} =0, \qquad \lim_{u\to\pm\infty} \s_{ab}=\Dd_{\la a}\Dd_{b\ra} \se_\pm.
\ee
Here $\se_\pm$ are two real functions on the sphere. 
The difference between the final and initial values of the shear defines the \emph{displacement memory} \cite{Christodoulou:1991cr}
\be\label{defmem}
\mem := \Dd^{a}\Dd^{b}{\s}_{ab} \Big|^{\infty}_{-\infty} = \f12(\Dd^2+2)\Dd^2(\se_+-\se_-),
\ee
which is generically produced in dynamical situations such as 
gravitational coalescence or scattering.
The boundary configurations \eqref{bc} correspond to vacua of the gravitational field \cite{Ashtekar:1981hw}, identified from the vanishing `outgoing' Weyl tensor components. Weaker boundary conditions allowing for magnetic shear are called weak vacua, and have received some attention in the more recent literature, see e.g. \cite{Winicour:2014ska,Satishchandran:2019pyc,Bieri:2020zki}. (While we will not consider such boundary conditions explicitly in this paper, our analysis applies to this case as well, with the results essentially unchanged.)
Stronger boundary conditions with fully vanishing shear are on the other hand too restrictive, describing a solution space without memory, of limited physical interest.
In order to have a sufficiently well-behaved phase space including memory, while still admitting a large class of regular observables, we will have to assume a restriction of the phase space where the news approaches its vanishing asymptotic limits in a sufficiently fast manner. The particular restriction will be made precise in Sec.~\ref{Seccomplexvariables}.

The Ashtekar-Streubel symplectic form on $\scri$ is \cite{Ashtekar:1981bq}
\be\label{OmASos}
\Omega_R = \frac{1}{8\pi} \int \eps_\scri\, \d\sigma^{ab}\pounds_n\d \s_{ab},
\ee
where $\eps_\scri = du \w \left(\sin\th d\th \w d\phi\right)$ is the volume element compatible with $q_{ab}$ and $n$, $\delta$ is the exterior derivative in field space and the exterior product is implicit (whenever there are products of field-space forms).
Hereinafter an undefined integral means integration over the entire domain.
To take it as starting point for the Dirac analysis, we write it in canonical form as
\be\label{OmAS}
\Omega = -\frac{1}{16\pi} \int \eps_\scri \, \delta N_{ab}\,\d\sigma^{ab},
\ee
where the news and shear appear as canonically conjugated variables, but are related by the constraint
\be\label{constraint}
\con_{ab}:=N_{ab} - 2 \pounds_n\sigma_{ab} = 0\,.
\ee
Then \eqref{OmASos} is recovered on the reduced phase space from on-shell of this constraint.
The canonical analysis based on \eqref{OmAS} and \eqref{constraint} is a Dirac-type analysis, and the one based on \eqref{OmASos} is a reduced phase space analysis,
hence our label $R$ in \eqref{OmASos}. They give equivalent results, and we will develop both in this paper.

\subsection{The BMS algebra}
\label{SecBMS}

In this paper we restrict attention to the Bondi-van der Burg-Metzner-Sachs (BMS) boundary conditions, and their associated symmetry group and universal structure 
\cite{Bondi:1962px,Sachs:1962zza,Newman:1966ub,Geroch:1977jn,Ashtekar:1981bq} (see \cite{Speziale:2025lkm} for a recent review). 
Using Bondi coordinates, the BMS algebra can be labeled by  
vector fields 
\be
\xi^a = \left(T + \frac{u}{2}\, \text{div}(Y)\right) n^a + Y^a,
\ee
where $T$ is a $u$-independent scalar function on $\scri$ (of conformal weight $1$),\footnote{For the purposes of our analysis it behaves like a standard real function. In this paper, we will not label explicitly the conformal weights of functions (in particular, $\sigma$ and $N$ should carry conformal weights $-1$ and $-2$, respectively).} $Y^a$ is a 
conformal Killing vector of the sphere, satisfying $l_aY^a = 0$, $\pounds_nY^a = 0$ and $\pounds_Y q_{ab} =\DY\, q_{ab}$,
and we introduced the short-hand notation 
$\text{div}(Y) := D_aY^a$ for its (spatial) divergence.

The radiative phase space carries a symplectic realization of this symmetry group \cite{Ashtekar:1981bq,Wald:1999wa,Barnich:2011mi,Flanagan:2015pxa,Grant:2021sxk,Chandrasekaran:2021vyu,Odak:2022ndm,Rignon-Bret:2024gcx,Ashtekar:2024stm}.
Let $X_\xi$ be the vector field tangent to the phase space generating the BMS transformation associated with $\xi$. 
One can prove using the fall-off conditions \eqref{bc} that the flow is Hamiltonian, namely
\be
\delta F_\xi \doteq - I_{X_\xi}\Omega.
\ee
The canonical generator is the Ashtekar-Streubel flux
\be\label{ASflux}
F_\xi =-\frac{1}{16 \pi} \int\!\epsilon_\scri\, N_{ab} \delta_\xi \sigma^{ab},
\ee
where
\be\label{dxis}
\d_{\xi} \s_{ab} = \frac{1}{2}\left(T + \frac{u}{2} \,\DY \right) N_{ab}+ \pounds_Y\sigma_{ab} -  \frac{1}{2} \DY \sigma_{ab} + {\Dd}_{\langle a} \Dd_{b\rangle} T 
\ee
is the BMS transformation of the shear. 
We also recall that the news transforms as
\be
\d_\xi N_{ab} = \left(T + \frac{u}{2} \,\DY \right) \pounds_n N_{ab} + \pounds_YN_{ab}
\ee
and the metric $q_{ab}$ is fixed.

The fluxes provide a symplectic representation of the BMS algebra, without central extensions, as
\be
\{F_\xi,F_\chi\}_D = F_{[\xi,\chi]}.
\ee
where $[\xi, \eta]$ is the Lie bracket of vector fields on $\scri$.
The $D$ marks that the bracket should be understood in the sense of Dirac, as we will review below.
If news and shear were proper observables, one would expect  their transformations to be canonically generated, 
\be\label{dxiF}
\{F_\xi, N_{ab}\}_D \stackrel{?}= -\d_\xi N_{ab}\,, \qquad  \{F_\xi, \s_{ab}\}_D \stackrel{?}= -\d_\xi \s_{ab}.
\ee
As we will see, these brackets are technically ill-defined since the local shear and news are not proper observables. 
Nevertheless, we will give a meaning to an analogous expression via the ``directional brackets'', defined later.

Specifically for a supertranslation $T$, 
the flux reads
\be\label{STflux}
F_T \doteq - \frac{1}{32\pi} \int\!\epsilon_\scri\, T \left( N_{ab} N^{ab} + 2 D_a D_b N^{ab} \right). 
\ee
The \emph{hard} and \emph{soft} parts are given respectively by the terms quadratic and linear in the news.
In the literature, this is referred to as supermomentum flux, but also as supertranslation charge.

\subsection{Complex variables}
\label{Seccomplexvariables}

It is convenient to re-express the shear and news in terms of (spin-weighted) complex functions. 
To that end, we introduce a complex (local) basis of vectors on the sphere $(m^a,\bar m^a)$, satisfying
\be
m^a m_a = 0 \qquad \bar m^a \bar m_a = 0 \qquad m^a \bar m_a = 1.
\ee
The basis is determined by $q_{ab}$ up to an arbitrary phase, which we fix choosing
\be
m = \frac{1}{\sqrt{2}}\left(\frac{\partial}{\partial\theta} - \frac{i}{\sin\theta} \frac{\partial}{\partial\phi}\right)
=P\f{\p}{\p\z}, \qquad P:= \f{\z(1+|\z|^2)}{\sqrt 2|\z|},
\ee
where $\z=-e^{i\phi}\tan\f\th2$ is the stereographic projection from the south pole.
Using this basis, we define the decomposition
\ba
\sigma_{ab} &= - \bar\sigma m_a m_b - \sigma \bar m_a \bar m_b \label{sabs} \\
N_{ab} &= 2N m_a m_b + 2 \bar N \bar m_a \bar m_b. \label{NabN}
\ea
Under a complex rotation $m^a \mapsto e^{i\psi}m^a$  of the basis, we have $(\s,N)  \mapsto (e^{2i\psi}\s, e^{-2i\psi}N)$; thus, $\sigma$ has spin-weight $2$, while $N$ has spin-weight $-2$. Note that we are taking $m^a$ to be constant in $u$, accordingly we restrict to time-independent complex rotations.
The complex variables are convenient for simplifying some calculations and expressions.\footnote{The relation to the ``holomorphic'" notation of \cite{He:2014laa} is
\[
\s
=\f{P^2}2C_{ z z}\,, \qquad \pu {\bar\s} =\f{P^2}2 N_{\bar z\bar z}\,.
\]
We also notice that these variables are related to the metric component $C_{AB}$ used in Bondi coordinates by $\s_{ab}=-\f12 \d_a^A\d_b^B C_{AB}$, $N_{ab}=-\d_a^A\d_b^B\dot C_{AB}$.}

In these variables, the symplectic form \eqref{OmAS} reads 
\be\label{OmASNP}
\Om = \frac{1}{4\pi} \re\!\int\! \eps_\scri\, \delta N \,\delta\s,  
\ee
and the constraint \eqref{constraint} reads
\be\label{constraint1}
\con := \bar N + \pu\sigma = 0, 
\ee
where we have used that $\pounds_n$ acts on complex, spin-weighted variables as $\pu$. Throughout the rest of the paper, we will always use $\pu$ instead of $\pounds_n$. 
The BMS flux generator \eqref{ASflux} reads
\be\label{ASfluxNP}
F_\xi = \f1{4\pi}\re\!\int\!\epsilon_\scri\, N\,\d_\xi \s\,,
\ee
with shear transformation \eqref{dxis}
\be\label{ASfluxNP2}
\d_{\xi} \s = -\left(T+\f u2\DY\right)\bar N+\left(\pounds_Y-\f12\DY\right)\s -\eth^2 T\,.
\ee
where $T \in C^\infty(S^2,\bb R)$. 
Specializing to supertranslations,
\be\label{STfluxNP}
F_T= -\frac{1}{4\pi} \re\!\int\!\epsilon_\scri\, T \left( N \bar N + \eth^2 N \right), \qquad \d_T \s = -T\bar N -\eth^2 T,
\ee
For instance, \eqref{STfluxNP} can be obtained from \eqref{STflux} using \eqref{NabN} and $\eth^2 N = D_a D_b(N m^a m^b)$, where $D_a$ is the spatial derivative.

Another advantage of the complex variables is to simplify the notion of electric/magnetic decomposition introduced in \eqref{shearem}. The covariant derivative on spin-weighted quantities is represented by the $\eth$ ({\it eth}) operator \cite{PenroseRindler2}. For round spheres and a quantity of spin weight $s$ (and vanishing boost weight), it is given by 
\be
\eth\eta := (\sin\theta)^s m[(\sin\theta^{-s}\eta)].
\ee
We also define $\bar\eth$ by conjugation, $\bar\eth\eta = \overline{\eth\bar\eta}$. An important property that follows from the topology of the sphere 
is that for any $\eta$ with spin $s \ge 0$, there always exist a spin-$0$ function $\zeta$ such that $\eta = \eth^s\zeta$. Then we can define the electric and magnetic parts of $\eta$ as the unique decomposition
\be
\eta=\eta_e+\eta_m
\ee
with
\be
\eta_e := \eth^s \re(\zeta), \qquad
\eta_m := i \eth^s \im(\zeta).
\ee
We also have\footnote{This follows from the fact that $[\eth^s, \bar\eth^r] = 0$ whenever it acts on a quantity of spin-weight $(r-s)/2$.}
\be\label{emkernel}
\im[\bar\eth^s\eta_e] = \re[\bar\eth^s\eta_m] = 0.
\ee
Accordingly, $\eta$ is said to be {\it purely electric} if $\eta = \eta_e$ (or $\im[\bar\eth^s\eta] = 0$), and {\it purely magnetic} if $\eta = \eta_m$ (or $\re[\bar\eth^s\eta] = 0$). 
An analogous definition works for $\eta$ with $s<0$. For scalars, $s=0$, being purely electric or purely magnetic are equivalent, respectively, to being purely real or purely imaginary.

Another useful property is the orthogonality of the two components under integration on the sphere, that is
\be\label{emort}
\re\int_S\!\epsilon_S\, w \,\bar\eta = 0
\ee
for all purely electric $\eta$ implies that $w$ is purely magnetic.\footnote{Let $w$ and $\eta$ be smooth spin-$s$ functions on $S^2$, with $s\ge 0$. If $\re\int_S\!\epsilon_S\, w \,\bar\eta = 0$ for all purely electric $\eta$, then as $\eta = \eth^s\zeta$ for some {\it real} scalar function $\zeta$, we have
\[
0 = \re\int_S\!\epsilon_S\, w\, \bar\eta = \re\int_S\!\epsilon_S\,  w\, \bar\eth^s\zeta = (-1)^s\re\int_S\!\epsilon_S\, \bar\eth^s w\, \zeta = (-1)^s\int_S\!\epsilon_S\, \re[\bar\eth^s w] \,\zeta
\]
Since $\zeta$ can be any real function, $\re[\bar\eth^s w] = 0$, so $w$ must be purely magnetic. An analogous proof works for $s<0$.}

\subsection{Boundary conditions}
\label{SecBoundaryC}

The boundary conditions \eqref{bc} read, in complex variables,
\be\label{bc2}
\lim_{u\to\pm \infty} N =0, \qquad
\lim_{u\to\pm\infty} \s = \eth^2 \se_\pm.
\ee
where $\se_\pm$ are real smooth functions on the sphere.
As we explain, it is necessary \cite{Ashtekar:1981bq} to strengthen these boundary conditions slightly so that, at least, the following properties are satisfied:
\begin{enumerate}

\item To consider observables constructed by smearing the news with functions that do not vanish asymptotically, such as memory, we need
$\int\!\eps_\scri\, |N| < \infty$. 

\item To have a well-defined supertranslation flux \eqref{STflux}, we need $\int\!\eps_\scri\, |N|^2 < \infty$. 

\item Infinitesimal supertranslations generate $u$-derivatives of the fields. Thus, in order to have a proper action on the phase space, 
we must assume that the space of shear and the space of news are closed under $u$-differentiation. 

\item Similarly, in order to have a proper infinitesimal action of the Lorentz part of BMS transformations, we also assume closure under the operations
$\sigma \mapsto  u\pu\sigma$, $\sigma \mapsto \pounds_Y\sigma$ and $N\to\pounds_Y N$.

\item We wish to assume that the news and shears approach their respective asymptotic limits \eqref{bc2} in a sufficiently regular manner, to avoid pathological ``angular spikes'' reaching the boundaries. To that end, we require that the limits $N(u,\ang) \to 0$ and $\sigma(u,\ang) \to \sigma_\pm(\ang)$ hold 
within $C^\infty(S^2)$ as $u$ approaches $\pm\infty$.

\item In view of the constraint, we require that for every $\sigma$ in the space of shears, $\pu\sigma$ belongs to the space of news $\bar N$.

\item Finally, as customary in physics, we assume that the spaces of shears and news admit natural Fr\'echet topologies. In this manner, Cauchy sequences of states (and observables) will converge to admissible states (and observables).
\end{enumerate}

A natural choice satisfying all the properties above is the Ashtekar-Streubel space of shears \cite{Ashtekar:1981bq},\footnote{Here we use a combinatorial description of the angular dependence, equivalent to the original analytical definition. We do however need a small modification from the original definition: in the $u$ derivatives, the lowest value of $k$ should be 1 and not 0, in order to allow for non-trivial memory. The family of seminorms that induces the Fr\'echet topology on this space, described in Sec.~\ref{Secfuncanalysis}, is modified in the same way.}
given by the space of functions
\ba\label{caF0def}
\ca F_0 &:= \Big\{ f \in C^\infty(\scri, \bb C_2);\, \text{\it such that } \sup_\scri\left|(1+u^2)^{\frac{k}{2} + \varepsilon}\pu^k \scr D f\right| < \infty   \no
&\hspace{11.5em}\text{\it for all $k \ge 1$, $\scr D \in \text{\rm words}(\eth,\bar\eth)$ and $f_\pm \in \ca E$} \Big\}
\ea
where $C^\infty(\scri, \bb C_s)$ denotes the space of smooth complex, spin-$s$ functions on $\scri$,\footnote{Recall that we are restricting to time-independent frames, so a frame rotation $m^a \mapsto e^{i\psi}m^a$ 
induces a global change by $e^{is\psi}$ in such functions.}
$\varepsilon>0$ is a (fixed) arbitrarily small number, 
$\text{\rm words}(\eth,\bar\eth)$ denote all finite strings of $\eth$ and $\bar\eth$ (e.g., $\scr D = \eth^2\bar\eth \eth$, including $\scr D = 1$),
\be
f_\pm := \lim_{u\to\pm\infty}f
\ee
and
\be
\ca E := \eth^2C^\infty(S^2,\bb R)
\ee
is the space of purely electric, spin-$2$ smooth functions on the sphere.
Thus, we assume 
\[
\sigma \in \ca F_0
\]
in the rest of this paper.
In view of the constraint \eqref{constraint1}, the space of news can, without loss of generality, be defined by $u$-differentiating the space of shears. That is, define
\be\label{caF1def}
\ca F_1 := \pu\ca F_0 
\ee
meaning that $f \in \ca F_1$ if and only if $f = \pu g$ for some $g \in \ca F_0$.\footnote{The labels ``$0$'' and ``$1$'' in $\ca F_i$
are motivated by their relationship via $u$-differentiation.}
More explicitly, we have
\ba\label{caF1def1}
\ca F_1 &= \Big\{ g \in C^\infty(\scri, \bb C_2);\, \text{\it such that } \sup_\scri\left|(1+u^2)^{\frac{k}{2} + \frac{1}{2}+ \varepsilon}\pu^k \scr D g\right| < \infty   \no
&\hspace{10em}\text{\it for all $k \ge 0$, $\scr D \in \text{\rm words}(\eth,\bar\eth)$ and $\int\!du\,g(x)  \in \ca E$} \Big\}
\ea
In the following, we will take
\[
\bar N \in \ca F_1.
\]
The precise topologies on the spaces $\ca F_i$ are not particularly relevant for most of the paper, but they do play a role in our analysis of the distributional brackets and we will discuss them in
Sec.~\ref{SecDistbrackets}.

In view of the above, the kinematical phase space, $\ca P$, will be taken to be
\be
\ca P = \big\{ (\sigma, \bar N) \in \ca F_0 \times \ca F_1 \big\} \simeq \ca F_0 \times \ca F_1
\ee
This means in particular that the asymptotic behaviors of news and shears are respectively
\be
N = O\left(|1/u|^{1+2\varepsilon}\right)\,, \qquad \sigma - \lim_{u\to\pm\infty}\sigma = O\left(|1/u|^{2\varepsilon}\right)\,, 
\ee 
as $u\to\pm\infty$. More generally, $\sigma$ is bounded and $\pu^k\sigma$ falls off at least as fast as $|1/u|^{k+2\varepsilon}$, for $k\ge 1$, as $u\to\pm\infty$, and $\pu^kN$ falls off at least as fast as $|1/u|^{k+1+2\varepsilon}$, for $k\ge 0$, as $u\to\pm\infty$.
The constrained phase space, denoted by $\ca P_\con$, is given by
\be
\ca P_\con = \ca F_0 \times_K \ca F_1 := \big\{ (\sigma, \bar N) \in \ca F_0 \times \ca F_1; \,\,\text{\it satisfying } K := \bar N + \pu\sigma = 0 \big\}
\ee
It is important to keep in mind that the phase spaces, $\ca P$ and $\ca P_K$, and the function spaces $\ca F_i$, should be regarded as real vector spaces,\footnote{The actual radiative phase space is an affine space, but it can be turned into a vector space with a suitable choice of origin in a given gauge \cite{Ashtekar:1981hw}. In our case this step is done by the Bondi coordinates, and the origin is the vanishing of the shear associated with $u$.} even though they are parametrized by complex-valued functions. In particular, note that if $f \in \ca F_0$ has non-zero asymptotic limits, then $af$ belongs to $\ca F_0$ if and only if $a$ is real (because the asymptotic limits must be purely electric). We denote the complex conjugated spaces to $\ca F_i$ by $\bar{\ca F}_i$, so $f \in \bar{\ca F}_i$ if and only if $\bar f \in \ca F_i$. 

We point out that all the results of the paper are expected to hold, without modification, for any sufficiently well-behaved spaces satisfying the properties $1$-$7$ above. It is mainly for the sake of concreteness that we will present the paper in terms of the particular function spaces $\ca F_0$ and $\ca F_1$ introduced above. 
Two examples of ``smaller'' choices, for which our results also apply, are:
\begin{enumerate}[label=\roman*.]
\item A space of shears that is isomorphic to the space of smooth functions on a compactification $\scri$. For example, the map $u\mapsto \arctan(u)$ turns $\bb R$ into $(-\sfrac{\pi}{2}, \sfrac{\pi}{2})$, so the space of shears can be identified with $C^\infty(\scri_c, \bb C_2)$, where $\scri_c := [-\sfrac{\pi}{2}, \sfrac{\pi}{2}]\times S^2$, and $f(\pm\sfrac{\pi}{2}) \in \ca E$;

\item A space of news that is isomorphic to the space of Schwartz functions whose total integral in $u$ belongs to $\ca E$. Then, the space of shears can be taken to be primitives of these Schwartz functions, with purely electric limits as $u\to-\infty$. 
\end{enumerate}

\section{Dirac brackets on $\scri$}
\label{SecDirbracketsscri}

In a finite dimensional phase space $\ca P$, with a (real) non-degenerate symplectic form $\Omega$, every smooth function $F \in C^\infty(\ca P, \bb R)$ can be associated with a Hamiltonian vector field $X_F$ via
\be\label{FXFrel}
\delta F = - I_{X_F}\Omega
\ee
On the other hand, in infinite dimensional phase spaces, the symplectic form is usually only {\it weakly} non-degenerate. 
This means that the map from the tangent spaces to the cotangent spaces, $X \mapsto I_X\Omega$, is injective but not necessarily surjective (i.e., onto). In turn, this implies that every smooth Hamiltonian vector field (i.e., generating a symplectomorphism of $\Omega$) can be (locally) associated with a phase space function via \eqref{FXFrel}, but not all smooth functions can be associated with Hamiltonian vector fields. 
Those functions which can be associated with smooth Hamiltonian vector field will be called {\it regular functions}. 
Notice that, given two regular functions $F$ and $G$, their Poisson bracket is defined by
\be
\{F, G\} := - \Omega(X_F, X_G)
\ee
It is important to emphasize that there is no reasonable way to define Poisson brackets from the symplectic structure between non-regular functions. 

If the phase space $\ca P$ has constraints, particularly of the second class, the physically relevant brackets are the {\it Dirac brackets}. On-shell, they match with the Poisson brackets on the reduced phase space. The standard formula for Dirac brackets in finite dimensions read
\be\label{DBdiscrete0}
\{F, G\}_D := \{F, G\} - \sum_{ij} \{F, K_i\} \Delta^{-1}_{ij} \{K_j, G\}
\ee
where
\be
\Delta_{ij} := \{K_i, K_j\}
\ee
is an (invertible) matrix produced from the set of second class constraints $K_i$.

This formula is often assumed to directly extend, in a formal sense, to infinite dimensional phase spaces, with infinitely many constraints, and may even produce sensible results in some situations.
However, we point out that there are many subtleties with such an extension that need to be addressed in order to make it rigorous. First, one would wish to convert the summation to an integration, which requires one to decide which measure to use. Second, the $\Delta$ matrix becomes a differential operator, and inverting it requires a careful consideration of the boundary conditions. Third, and most important, the Poisson brackets involving ``localized'' constraints, $\Delta_{ix,jy} = \{K_i(x), K_j(y)\}$ and $\{F, C_i(x)\}$, are typically not well-defined according to the discussion above.
In App.~\ref{AppDirbrackets}, we describe a natural generalization of the Dirac brackets formula for the infinite dimensional case, in a way that resolves these subtleties.
The procedure can be summarized as follows:
\begin{enumerate}
\item Consider the smeared constraints
\be\label{Ksmeardef}
K_\varphi := \frac{1}{4\pi}\re \int \epsilon_\scri\,\bar\varphi \left(\bar N + \pu \sigma\right)
\ee
where $\varphi$ is a smooth, spin-weight $2$ complex function on $\scri$ such that $K_\varphi$ is well-defined. Then, determine the ``largest'' set of $\varphi$ for which $K_\varphi$ regular on the kinematical phase space. We denote
\be
\ca K = \text{\it ``Space of smearings for which $K_\varphi$ is regular on $\ca P$''} \nonumber
\ee
It is then possible to define Poisson brackets among $K_\varphi$'s, provided that $\varphi\in \ca K$.

\item Classify the constraints in terms of first versus second class. 
We say that $K$ defines a set of {\it second-class} constraints if, for $\varphi \in \ca K$,
\be\label{secondclass0}
\{K_\varphi, K_{\varphi'}\} \doteq 0\,\,\,\text{\it for all $\varphi' \in \ca K$} \quad \Longleftrightarrow \quad \varphi = 0
\ee

\item 
Let $F$ be a regular function on $\ca P$. Define, if possible, its {\it tangential extension} $\wt F$ by
\be
\wt F := F + K_{\varphi_F}
\ee
where $\varphi_F \in \ca K$ is a solution of
\be\label{varphieq0}
\{K_{\varphi_F} ,K_{\varphi'}\} \doteq - \{F, K_{\varphi'}\}\,\,\,\text{\it for all $\varphi' \in \ca K$}
\ee
If the constraints are second-class, a solution $\varphi_F$ is unique provided it exists. If $\varphi_F$ exists, then $F$ is said to be {\it tangentially extensible}.

\item If $F$ and $G$ are two tangentially extensible functions, their Dirac bracket is defined by
\be
\{F, G\}_D := \{\wt F, \wt G\}
\ee
Note, in particular that $\{F, K_\varphi\}_D \doteq 0$ for all $\varphi \in \ca K$.
\end{enumerate}

\noindent
The property mentioned at the end of Point 4 is the motivation for the nomenclature: $\wt F$ is an extension of $F$, by the addition of a constraint, such that its symplectic flow is tangent to the constraint surface.
This property also implies that, if $F$ and $G$ are tangentially extensible, then it suffices to extend only one of them in the computation of the Dirac brackets, i.e.,
\be
\{F, G\}_D \doteq \{\wt F, \wt G\} \doteq \{\wt F,  G\} \doteq \{ F, \wt G\}
\ee
It is also worth mentioning that, if $F$ and $G$ are tangentially extensible, then $\{F, G\}_D$ is also tangentially extensible.
Thus, the space of tangentially extensible functions forms an algebra under the Dirac brackets. We call those functions {\it proper observables}:
\begin{definition}\label{properobsdef}
A ``proper observable'' is a function on phase space which is tangentially extensible.
\end{definition}
\noindent
As we will see, in Sec.~\ref{SecRedPS}, proper observables are equal, on-shell, to regular functions on the reduced phase space.

We also introduce the notation of {\it directional brackets}. If  $F$ is (kinematically) regular, while $G$ is merely smooth, we can define the directional Poisson brackets by
\be
\{F; G\} := - X_FG
\ee
where a semicolon is used instead of the comma. 
Note that, in general, those are {\it not} Poisson brackets, but simply a notation to indicate how $F$ acts on $G$ (with a minus sign). 
In particular, these brackets will generally not define an algebra, as $\{G; F\}$ may not even make sense.
The convention we use is that the first entry is regular, so
\[
\text{\it the first entry flows the second (with a minus sign)}
\]
The same definition applies also for Dirac brackets. If $F$ is tangentially extensible, the directional Dirac brackets are defined by
\be\label{Dirbrackets0}
\{F; G\}_D := \{\wt F; G\} := -\wt X_FG
\ee
On-shell, this expresses how $F$ acts on $G$ (with a minus sign), while preserving the constraints. If $G$ is also regular or tangentially extensible, then the directional brackets match with the standard brackets.

In this section we will focus on linear phase space functions. Define the {\it shear-news functions} by
\be\label{shearnewsdef}
\Gamma_{\lambda,\alpha} := \re\int\!\epsilon_\scri \left(\bar\lambda \sigma + \alpha  N \right) 
\ee
where $\lambda$ and $\alpha$ are field-independent, smooth, spin-weight $2$ complex functions on $\scri$ such that $\Gamma_{\lambda,\alpha}$ is well-defined. 
We will determine the conditions for their (kinematical) regularity and tangential extensibility. 
The relevant spaces are 
\ba
\ca G_R &= \text{\it ``Space of smearings for which $\Gamma_{\lambda,\alpha}$ is regular on $\ca P$''} \no
\ca G_{TE} &= \text{\it ``Space of smearings for which $\Gamma_{\lambda,\alpha}$ is tangentially extensible''} \nonumber
\ea
The full algebra of proper observables will be discussed in Sec.~\ref{Secalgebra}. The analysis for non-linear functions is essentially the same as for linear functions, since the conditions for regularity and tangential extensibility refer to tangent spaces of each point in phase space (but they must hold for every point).

\subsection{Constraint regularity and classification}
\label{SecKreg}

The first step in our formalism is to determine the space of smearings, $\ca K$, to produce regular constraints $K_\varphi$, defined as in \eqref{Ksmeardef}.
To ensure that $K_\varphi$ is well-defined it suffices, given the assumptions on the finiteness of the total integral of $|N|$ and $|\pu\sigma|$, that $\varphi$ is bounded.
The reason for taking the real part is so that its symplectic flow is tangent to the real phase space.\footnote{That is, it must be true that the flow of $N$ through a parameter $\tau$, $N \mapsto N_\tau$, and of its complex conjugate, $\bar N \mapsto (\bar N)_\tau$, preserves the relation $(\bar N)_\tau = \overline{N_\tau}$; and similarly for $\sigma$. 
Equivalently, if $X$ is the symplectic flow of an observable $F$ (i.e., $\delta F = - I_X\Omega$), then the reality of $F$ (plus the reality of $\Omega$) ensures that $X$ is real, i.e., $\delta\bar N (X) = \overline{\delta N(X)}$ and $\delta\bar\sigma (X) = \overline{\delta\sigma(X)}$.}
Note that $K_\varphi=0$ for all $\varphi$ implies the local constraint \eqref{constraint1} everywhere.

The variation of $K_\varphi$ gives,
\be
\delta K_\varphi := \frac{1}{4\pi}\re \int \epsilon_\scri\,\varphi \left(\delta N + \pu \delta\bar\sigma\right)
\ee
As the symplectic form \eqref{OmASNP} is given by an integral over variations of fields with no derivatives or boundary terms, it is necessary that the expression above can be integrated by parts 
in order for $K_\varphi$ to be associated with a symplectic flow.
This requires, at minimum, the assumption that $\varphi$ has well-defined asymptotic limits.
Then, integration by parts leads, in principle, to boundary terms
\be
\delta K_\varphi := \frac{1}{4\pi}\re \int \epsilon_\scri\, \left(  \varphi \delta N - \pu \varphi \delta\bar\sigma \right) + \frac{1}{4\pi}\re \int_{S^+}\! \epsilon_S\,  \varphi \delta\bar\sigma - \frac{1}{4\pi}\re \int_{S^-}\! \epsilon_S\, \varphi \delta\bar\sigma 
\ee
Let the candidate flow of $K_\varphi$ be denoted simply by $X_\varphi$.
Inserting it into the symplectic form gives
\be
I_{X_\varphi}\Omega = \frac{1}{4\pi} \re\int \epsilon_\scri \left( X_\varphi N\, \delta\sigma - X_\varphi\sigma \delta N \right)
\ee
where $X_\varphi N := \delta N(X_\varphi)$, and similarly for $X_\varphi\sigma$ (or any vector field followed by a function).
We see that, in order for $\delta K_\varphi = - I_{X_\varphi}\Omega$, it must be true that the boundary terms in $\delta K_\varphi$ vanish. 
That is,
\be
\re\int_{S^\pm}\!\epsilon_S\, \varphi \delta\bar\sigma = 0
\ee
As we have that $\sigma$ is purely electric at the boundaries, we conclude 
from \eqref{emort} that
\be\label{fpurelym}
\varphi_\pm\,\, \text{\it are purely magnetic}
\ee
Now, with the boundary terms of $\delta K_\varphi$ vanishing, we find
\ba
& X_\varphi\sigma(x) =  \varphi(x) \\
& X_\varphi \bar N(x) = \pu \varphi(x)
\ea
We must also ensure that $X_\varphi$ preserves the boundary conditions. That is, the action above must preserve the respective spaces of shears and news, which requires
\be
\varphi \in \ca F_0
\ee
In particular, this means that $\varphi$ must have purely electric asymptotic limits.
In combination with \eqref{fpurelym}, we conclude that
\be
\varphi_\pm = 0
\ee
Therefore, the maximum allowed space of smearings is
\be
\ca K = \ca F_0^0 :=  \Big\{\varphi \in \ca F_0;\,\, \text{\it with } \varphi_\pm = 0 \Big\}
\ee

Now we wish to verify that these constraints are second-class. We have, for any $\varphi,\varphi'\in\ca K$,
\ba
\{K_\varphi, K_{\varphi'}\} = \delta K_\varphi(X_{\varphi'}) &= \frac{1}{4\pi}\re \int \epsilon_\scri\,\varphi\, \left( X_{\varphi'}N +\pu X_{\varphi'}\bar\sigma \right) \no
&= \frac{1}{2\pi}\re \int \epsilon_\scri\,\varphi\pu \bar \varphi'
\ea
If this vanishes (on-shell) for all $\varphi \in \ca K$, then
\be
\pu  \bar \varphi' = 0
\ee
and the only solution in $\ca K$ is $\varphi' = 0$. Hence, this system of constraints is indeed of second-class.

\subsection{Tangential extensions}
\label{SecTEshearnews}

Here we consider the tangential extensibility of the shear-news function, $\Gamma_{\lambda, \alpha}$, defined in \eqref{shearnewsdef}. 
Since $\Omega$ has no boundary terms, $\Gamma_{\lambda, \alpha}$ is the most general form of a linear combination of shear and news that could, in principle, be regular.
In order for this function to be well-defined, we need to assume that 
$\int\!du\,|\lambda(x)| < \infty$ and $\alpha$ is bounded.

The symplectic flow of $\Gamma_{\lambda,\alpha}$, solution of $\delta \Gamma_{\lambda,\alpha} = - I_{X_{\lambda,\alpha}}\Omega$, acts on the fields as\footnote{We shall consistently reserve $\varphi$ for the labels of the constraint, and $(\lambda,\alpha)$ for the labels of the shear-news function. In this way, the labels of the vector fields determine which kind of functions they are associated with.}
\ba
& X_{\lambda,\alpha}\sigma(x) = 4\pi\alpha(x) \\
& X_{\lambda,\alpha} N(x) = -4\pi\bar\lambda(x)
\ea
In order for $X_{\lambda, \alpha}$ be a regular flow, it needs to preserve the boundary conditions for the shears and news.
The space of smearings corresponding to (kinematically) regular shear-news functions is thus
\be
\ca G_R = \big\{(\lambda,\alpha) \in \ca F_1 \times \ca F_0 \big\} \simeq \ca F_1 \times \ca F_0 
\ee

The tangential extension of $\Gamma_{\lambda,\alpha}$ is given by $\wt\Gamma_{\lambda,\alpha} = \Gamma_{\lambda,\alpha} + K_{\varphi_{\lambda,\alpha}}$, with
\be
\{\Gamma_{\lambda,\alpha}, K_{\varphi'}\} = - \{K_{\varphi_{\lambda,\alpha}}, K_{\varphi'}\} \,,\,\,\text{\it for all  $\varphi'\in \ca K$}
\ee
We have
\be
\re\int\!\epsilon_\scri \left(\bar\lambda \varphi' + \alpha \pu \bar \varphi'\right) = - \frac{1}{2\pi}\re \int \epsilon_\scri\,\varphi_{\lambda,\alpha} \pu \bar \varphi'
\ee
Integrating by parts in $u$, using the properties of $\ca K$, we obtain
\be
\lambda - \pu  \alpha = \frac{1}{2\pi} \pu  \varphi_{\lambda,\alpha}
\ee
whose general solution is
\be
\varphi_{\lambda,\alpha} = 2\pi\left(\Lambda - \alpha +  c\right)
\ee
where $c \in C^\infty(S^2,\bb C_{2})$ is an integration constant and
\be
\Lambda(x) := \int_{-\infty}^u\!du' \lambda(u',\ang)
\ee
in which $x = (u,\ang)$, with $\ang\in S^2$.
Since $\lim_{u\to-\infty} \varphi_{\lambda,\alpha} = 0$, we have
\be
c = \alpha_-
\ee
However, for this $\varphi_{\lambda,\alpha}$ to exist in $\ca K$ we also need $\lim_{u\to+\infty} \varphi_{\lambda,\alpha} = 0$, which requires
\be\label{GTEintcond}
\Lambda_+ = \alpha_+ - \alpha_- 
\ee
where $\Lambda_+(\ang) := \lim_{u\to+\infty}\Lambda(u,\ang)$.
Thus, in order for $\Gamma_{\lambda,\alpha}$ to be tangentially extensible, $(\lambda,\alpha)$ must belong to
\be\label{GTEdef}
\ca G_{TE} = \ca F_1 \times_{\wt K} \ca F_0 := \Big\{ (\lambda, \alpha) \in \ca F_1 \times \ca F_0; \,\,\text{\it satisfying } \wt K = 0 \Big\}
\ee
where 
\be\label{dualconstraint}
\wt K := \Lambda_+ - \alpha_+ + \alpha_-
\ee
can be understood as a ``{\it dual constraint}'' on the space of smearings.

The tangential extension of the shear-news function is then
\be
\wt\Gamma_{\lambda,\alpha} = \Gamma_{\lambda,\alpha} + 2\pi K_{\Lambda - \alpha + \alpha_-}
\ee
for all $(\lambda,\alpha) \in \ca G_{TE}$. 
Note that $\ca G_{TE} \subsetneq \ca G_R$, so not all regular $\Gamma_{\lambda,\alpha}$ can be tangentially extended. 
This is because not all regular functions on the kinematical phase space are guaranteed to restrict to regular functions on the reduced phase space (see Sec.~\ref{SecRedPS}). 
The flow $\wt X_{\lambda, \alpha}$ of $\wt\Gamma_{\lambda, \alpha}$ acts on the fields as
\ba
& \wt X_{\lambda,\alpha}\sigma(x) = 2\pi\left(\Lambda(x) + \alpha(x) + \alpha_-(x)\right) \label{wtXlambdaalphasigma}\\
&\wt  X_{\lambda,\alpha} N(x) = -2\pi \left(\bar\lambda(x) + \pu \bar\alpha(x) \right) 
\label{wtXlambdaalphaN}
\ea

It may be worth to point out that the condition for tangential extensibility, $(\lambda,\alpha)\in\ca G_{TE}$, implies that ``pure shears'', $\Gamma_{\lambda,0}$, are tangentially extensible only if $\int\!du\,\lambda(x) = 0$ and ``pure news'', $\Gamma_{0,\alpha}$, are tangentially extensible only if $\alpha_+ - \alpha_- = 0$. The ``localized news'', $N(x) \sim \Gamma_{0,\delta_x}$, where $\delta_x(x') := \delta(x',x)$, is technically {\it not} an observable, since $(0,\delta_x) \notin \ca G_{TE}$, due to its irregularity. Nevertheless, it does formally satisfy the dual constraint ($\wt K = 0 - 0 + 0$), so one could perhaps think of it as a ``generalized observable''. However, we do {\it not} find an application, in our analysis, for such objects. (In particular, they play no role in the more useful notion of ``distributional brackets'', described in Sec.~\ref{SecDistbrackets}.) 
On the other hand, the ``localized shear'', $\sigma(x) \sim \Gamma_{\delta_x,0}$, is also not an observable, but not even in such a ``generalized'' sense, since it does not satisfy the dual constraint ($\wt K = 1 - 0 + 0$).\footnote{The same conclusion applies to the ``covariant'', or relative, shear of \cite{Compere:2018ylh,Rignon-Bret:2024mef}.}

\subsection{Dirac brackets for shear-news functions}
\label{SecDBsN}

The Dirac brackets for tangentially extensible shear news functions are
\ba
\{\Gamma_{\lambda,\alpha}, \Gamma_{\lambda',\alpha'}\}_D &= \{\wt\Gamma_{\lambda,\alpha},\Gamma_{\lambda',\alpha'}\} \no
&= - \delta \Gamma_{\lambda',\alpha'}(\wt X_{\lambda,\alpha}) \no
&= -2\pi \re \int\!\epsilon_\scri\left( \bar\lambda' (\Lambda + \alpha + \alpha_-) - \alpha' (\bar\lambda + \pu \bar\alpha) \right) 
\label{DBshearnewsDir0}
\ea
where $(\lambda,\alpha)$ and $(\lambda',\alpha')$ are in $\ca G_{TE}$. Using the properties of $\ca G_{TE}$, we can rewrite these brackets in an explicitly anti-symmetric form as
\be\label{DBshearnewsAS}
\{\Gamma_{\lambda,\alpha}, \Gamma_{\lambda',\alpha'}\}_D = \pi \re \left[\int\!\epsilon_\scri\left( \bar\lambda \Lambda' + 2\bar\lambda \alpha' - \alpha\pu \bar\alpha' \right) + \int_S\!\epsilon_S\, \alpha_+\bar\alpha'_-\right] - \cdots
\ee
where the ``$\cdots$'' denotes the previous terms with primed and unprimed parameters exchanged.

If only the first entry is tangentially extensible, but the second is merely smooth, we can define directional Dirac brackets as in \eqref{Dirbrackets0}. 
In this case, only formula \eqref{DBshearnewsDir0} holds, while \eqref{DBshearnewsAS} may not. So,
\be\label{DBshearnewsDir}
\{\Gamma_{\lambda,\alpha}; \Gamma_{\lambda',\alpha'}\}_D = -2\pi \re \int\!\epsilon_\scri\left( \bar\lambda' (\Lambda + \alpha + \alpha_-) - \alpha' (\bar\lambda + \pu \bar\alpha) \right) 
\ee
with $(\lambda, \alpha) \in \ca G_{TE}$ and $\Gamma_{\lambda',\alpha'}$ smooth.\footnote{Note that $\Gamma_{\lambda, \alpha}$ can be smooth even if the smearings are not. For example, $\Gamma_{\delta_x,0} = \re\,\sigma(x)$ is smooth. The formula above, for the directional brackets, extends to such smearings. \label{nonsmoothsmear}}

\subsection{Supertranslation flow}
\label{SecSuperTflow}

We are now ready compute the Dirac bracket between the supertranslation flux, and the localized shear.
We decompose the supertranslation flux \eqref{STfluxNP} into 
two parts, ``hard'' and ``soft'',
\ba
F_T^\text{\it hard} &:= -\frac{1}{4\pi} \re\int\!\epsilon_\scri\, T N \bar N \\
F_T^\text{\it soft} &:= -\frac{1}{4\pi} \re\int\!\epsilon_\scri\, T \eth^2 N 
\ea
The soft part is the one relevant for the inhomogeneous part of the transformation, and the ``factor of 2'' issue.

The analysis of tangential extensibility of $F_T$ is completely analogous to what we have done for the shear-news functions. Nevertheless, we will present the details here again, so as to make it clear that the non-linearity of $F_T^\text{\it hard}$ does not add any new layers of complexity.
The symplectic flow of $F_T^\text{\it hard}$, denoted by $X_T^\text{\it hard}$, acts on the fields as
\be
X_T^\text{\it hard}\sigma = -2T \bar N \,,\quad
X_T^\text{\it hard}N = 0
\ee
This flow is smooth and respects the boundary conditions since $T\bar N \in \ca F_0$.
The tangential extension, $\wt F_T^\text{\it hard} = F_T^\text{\it hard} + K_{\varphi_T^\text{\it hard}}$, must satisfy
\be
\{F_T^\text{\it hard}, K_{\varphi'}\} = - \{K_{\varphi_T^\text{\it hard}}, K_{\varphi'}\} \,,\,\,\,\text{\it for all $\varphi'\in \ca K$}
\ee
which leads to
\be
\frac{1}{4\pi}\re\int\!\epsilon_\scri\, \pu  \varphi' (-2T N) = -\frac{1}{2\pi}\re\int\!\epsilon_\scri\, \varphi_T^\text{\it hard} \pu \bar\varphi'
\ee
Thus,
\be
\pu\left(\varphi_T^\text{\it hard} - T \bar N\right) = 0
\ee
Since both $N$ and $\varphi_T^\text{\it hard}$ vanish asymptotically, we get
\be
\varphi_T^\text{\it hard} =  T \bar N
\ee
Consequently,
\be
\wt F_T^\text{\it hard} = F_T^\text{\it hard} + K_{T \bar N}
\ee
Notice that, unlike before, the smearing parameter of $K$ is field-dependent. This causes no issues since the Dirac brackets are always to be evaluated on-shell, and the variation of the smearing parameter in $\delta K_\varphi = \frac{1}{4\pi} \re\int\!\epsilon_\scri\, \delta \varphi (N + \pu  \bar\sigma) + \cdots$ never contributes to any relevant computation.

Integrating by parts on the sphere, the soft term becomes simply
\be
F_T^\text{\it soft} = -\frac{1}{4\pi} \re\int\!\epsilon_\scri\, \eth^2T  N = -\frac{1}{4\pi} \Gamma_{0,\eth^2T}
\ee
and, as $T$ is constant in $u$, $(0,\eth^2T) \in \ca G_{TE}$. 
Therefore, both the hard and soft parts of the supertranslation charge are tangentially extensible, which means we are allowed to compute their Dirac brackets with any other tangentially extensible function on phase space, and also their directional Dirac brackets with any smooth function.
In particular, for any smooth $\Gamma_{\lambda,\alpha}$, we have
\ba 
\{F_T^\text{\it hard}; \Gamma_{\lambda,\alpha}\}_D &= \{\wt F_T^\text{\it hard}; \Gamma_{\lambda,\alpha}\} \no
&= \delta  F_T^\text{\it hard}(X_{\lambda,\alpha}) + \delta K_{T\bar N}(X_{\lambda,\alpha}) \no
&= \frac{1}{4\pi} \re\int\!\epsilon_\scri\, 2T \bar N (4\pi\bar\lambda) + \frac{1}{4\pi}\re\int\!\epsilon_\scri\, T\bar N \left(-4\pi\bar\lambda + \pu(4\pi\bar\alpha)\right) \no
&= \re\int\!\epsilon_\scri\,T N \left(\lambda + \pu\alpha\right) \label{Tharddirbrackets}\\
\{F_T^\text{\it soft}; \Gamma_{\lambda,\alpha}\}_D &= - \frac{1}{4\pi}\{\Gamma_{0,\eth^2T}; \Gamma_{\lambda,\alpha}\}_D \no
&= \frac{1}{2} \re\int\!\epsilon_\scri\left(\bar\lambda \eth^2(T + T_-) - \alpha \pu\eth^2T\right) \no
&=  \re\int\!\epsilon_\scri\,\bar\lambda\, \eth^2T \label{Tsoftdirbrackets}
\ea
where we have used that $T$ is $u$-independent, so $T = T_-$.
Putting it together we get
\be\label{Tdirbrackets}
\{F_T; \Gamma_{\lambda,\alpha}\}_D =  \re\int\!\epsilon_\scri\,T N \left(\lambda + \pu\alpha\right) + \re\int_S\!\epsilon_S\,\bar\Lambda_+\, \eth^2T
\ee

We may also consider localized (complex) shears,
\be
\sigma(x) = \Gamma_{\delta_x,0} + i \Gamma_{i\delta_x,0}
\ee
where $\delta_x(x') := \delta(x', x)$.\footnote{According to footnote~\ref{nonsmoothsmear}, formula \eqref{DBshearnewsDir} for the directional brackets extends to such smearings. Also, as discussed in Sec.~\ref{SecComplexB}, the brackets can be extended to complex-valued functions (modulo a subtlety explained there).}
Then,
\be\label{Txisigmax}
\{F_T; \sigma(x)\}_D = T(x)\bar N(x) + \eth^2T(x)
\ee
Note that, on-shell, this matches with $-\delta_T\sigma$, given in \eqref{STfluxNP}.

To make it more evident, let us rewrite the expression above in its tensorial form. From \eqref{sabs} we have
\ba
\{F_T; \sigma_{ab}(x)\}_D &= - \{F_T; \bar\sigma\}_D m_a m_b - \{F_T; \sigma\}_D \bar m_a \bar m_b \no
&= - ( T N + \bar\eth^2T) m_a m_b - ( T \bar N + \eth^2T) \bar m_a \bar m_b \no
&= -\frac{1}{2} T N_{ab} - (\bar m^{a'} \bar m^{b'} D_{a'}D_{b'}T\, m_a m_b + m^{a'} m^{b'} D_{a'}D_{b'}T \,\bar m_a \bar m_b) \no
&= -\frac{1}{2} T N_{ab} - D_{\langle a} D_{b \rangle}T \no
&= -T \pu \sigma_{ab} - D_{\langle a} D_{b \rangle}T
\ea
where $D_{\langle a} D_{b \rangle}T := D_a D_bT - \frac{1}{2} D_c D^cT\, q_{ab}$ is the trace-free part of $D_a D_bT$. (The argument, $x$, of all local functions on the right column is omitted.)
To obtain the second line we conjugated \eqref{Txisigmax}, using the reality of $F_T$ (and of the Dirac brackets); for the third line we used \eqref{NabN} and the fact that $\eth^2T = m^a m^b D_a D_bT$, since $T$ has spin-weight $0$; for the fourth line, one can verify the equality between the quantity in parenthesis with $D_{\langle a} D_{b \rangle}T$ by simply contracting both objects with all pairs of $m^a$ and $\bar m^a$; for the last line, we just used the constraint.
This is the right transformation law for the shear under supertranslations.
In terms of the variable, $C_{ab} := -2\sigma_{ab}$, sometimes used in the literature, we have
\be
\{F_T; C_{ab}\}_D = -T \pu  C_{ab} + 2 D_{\langle a} D_{b \rangle}T
\ee
This is the ``factor of $2$'', in front of the inhomogeneous term, missed in the original analysis of \cite{He:2014laa}.
In the discussion, Sec.~\ref{SecDiscussion}, we will comment on the underlying reason for why our analysis obtains the correct factor.

\section{Reduced phase space}
\label{SecRedPS}

The reduced phase space, $\ca P_R$, is the space of physical degrees of freedom, in which the constraints are satisfied and any gauge is quotiented out. 
For second-class constraints, there is no gauge, so the reduced phase space can be identified with the constraint surface itself,
\be
\ca P_R \simeq \ca P_K
\ee
The symplectic form $\Omega_R$ on $\ca P_R$ is simply the restriction of $\Omega$ to the constraint surface,
\be
\Omega_R = \Omega\big|_{\ca P_K}
\ee
The Dirac brackets should be, in principle, equivalent (on-shell) to the Poisson brackets associated with the induced symplectic form on the reduced phase space.
Here we show that the same results from the previous section, Sec.~\ref{SecDirbracketsscri}, can be obtained from the perspective of the reduced phase space.
We can see this as a check of consistency for our Dirac brackets formalism.

Since $\sigma$ uniquely determines $N$, on-shell, the reduced phase space can be parametrized by $\sigma$ alone. 
In fact, $\ca P_R \simeq \ca F_0$.
The (induced) symplectic form is
\be\label{wtOmega}
\Omega_R = -\frac{1}{4\pi} \re\int \epsilon_\scri\,  \pu \delta \bar\sigma\, \delta\sigma
\ee
Inserting a (real) smooth vector field $V$ into $\Omega_R$, we get
\ba
I_{V} \Omega_R &= -\frac{1}{4\pi} \re\int \epsilon_\scri \left( \pu  (V \bar\sigma)\, \delta\sigma - V \sigma\, \pu \delta\bar\sigma \right) \no
&=  -\frac{1}{4\pi} \re\int \epsilon_\scri \left( \pu  (V \bar\sigma)\, \delta\sigma - V \bar\sigma\, \pu \delta\sigma \right) \no
&=  -\frac{1}{2\pi} \re\int \epsilon_\scri\, \pu  (V \bar\sigma) \,\delta\sigma +\frac{1}{4\pi} \re\int_{S^+}\!\epsilon_S\, V \bar\sigma\, \delta\sigma - \frac{1}{4\pi} \re\int_{S^-}\!\epsilon_S\, V \bar\sigma\, \delta\sigma 
\label{IwtVwtOmega}
\ea
To verify that this is (weakly) non-degenerate, we must show that the only vector $V$ which annihilates $\Omega_R$ is zero. 
The bulk condition implies
$\pu (V \bar\sigma) = 0$, that is, $V\bar\sigma$ should be constant in $u$.
The vanishing of the boundary terms requires that $V$ must be purely magnetic asymptotically, and the preservation of the boundary conditions requires it must be purely electric asymptotically, and thus it must vanish asymptotically. Consequently, the only solution is $V = 0$, confirming the (weak) non-degeneracy.

Consider the shear-news function $\Gamma_{\lambda,\alpha}$, which is expressed on-shell as
\be
\Gamma_{\lambda,\alpha} \doteq \re\int\!\epsilon_\scri \left(\bar\lambda\sigma - \alpha\pu \bar\sigma \right)  
\ee
This is the most general linear function of the shear compatible with the symplectic form \eqref{wtOmega}, since $\Omega_R$ is an integral of differentials of $\sigma$ and $\pu\sigma$.
Its variation gives
\be
\delta\Gamma_{\lambda,\alpha} = \re\int\!\epsilon_\scri \left(\bar\lambda + \pu \bar\alpha \right) \delta\sigma - \re\int_S\!\epsilon_S\, \bar\alpha_+ \delta\sigma_+ + \re\int_S\!\epsilon_S\, \bar\alpha_- \delta\sigma_-
\ee
where we have assumed that $\alpha$ has well-defined asymptotic limits, $\alpha_\pm$. This assumption is necessary for regularity, given the form of \eqref{IwtVwtOmega}.
Note that the pair of parameters $(\lambda,\alpha)$ defining $\Gamma_{\lambda,\alpha}$ is over-complete, on-shell, in the sense that the transformation
\be\label{wgauge}
(\lambda, \alpha) \mapsto (\lambda + \pu w, \alpha - w)
\ee
for any smooth function $w$ with well-defined {\it purely magnetic} asymptotic limits,
does not change the shear-news function. In fact,
\be
\Gamma_{\lambda + \pu w,\alpha-w} =  \Gamma_{\lambda,\alpha} + \re\int_S\!\epsilon_S\, \bar w_+\sigma_+ - \re\int_S\!\epsilon_S\, \bar w_-\sigma_- = \Gamma_{\lambda,\alpha}
\ee
We can use this arbitrariness to impose, without loss of generality, that $\alpha$ has {\it purely electric} asymptotic limits.

The symplectic flow of $\Gamma_{\lambda,\alpha}$, $V_{\lambda,\alpha}$, must then satisfy
\ba
& \lambda + \pu \alpha = \frac{1}{2\pi} \pu (V_{\lambda,\alpha}\sigma) \label{redbulkeq}\\
&\alpha_+ - w_+ = \frac{1}{4\pi} (V_{\lambda,\alpha}\sigma)_+ \label{redboundeq+}\\
& \alpha_- - w_- =  \frac{1}{4\pi} (V_{\lambda,\alpha}\sigma)_-  \label{redboundeq-}
\ea
where $w_\pm$ are two arbitrary, purely magnetic scalars on $S^\pm$. Since $(V_{\lambda,\alpha}\bar\sigma)_\pm$  and $\alpha_\pm$ must be purely electric, we need $w_\pm = 0$. 
Integrating \eqref{redbulkeq} we get
\be\label{wtVsnw}
V_{\lambda,\alpha}\sigma = 2\pi \left(\Lambda + \alpha + c\right)
\ee
where $c$ is $u$-independent. Now taking the limit $u\to-\infty$, and using \eqref{redboundeq-}, we find
\be
c = \alpha_- 
\ee
Taking the limit $u\to+\infty$, and using \eqref{redboundeq+}, we obtain
\be\label{redsnparcond}
\Lambda_+ = \alpha_+ - \alpha_-
\ee
That is precisely the condition defining $\ca G_{TE}$. 
Here it appears as the condition for $\Gamma_{\lambda,\alpha}$ to be regular (associated to the smooth flow $V_{\lambda,\alpha}$).

Note, however, that in the reduced phase space there are no separate conditions on $\lambda$ and $\alpha$, as the boundary conditions are preserved provided only that $\Lambda + \alpha + \alpha_- \in \ca F_0$ or, equivalently (assuming that $\alpha_- \in \ca E$), $\lambda + \pu\alpha \in \ca F_1$.
Nevertheless, this does not mean that there are ``more'' regular shear-news functions on the reduced phase space than there are tangentially extensible shear-news functions on the kinematical phase space, from the perspective of their action on-shell. In fact, the action of $\Gamma_{\lambda,\alpha}$ on the reduced phase space is given, as in \eqref{wtVsnw}, precisely by the combination $\Lambda + \alpha + \alpha_-$, and via a ``gauge'' transformation of the kind \eqref{wgauge} one could always enforce that $\lambda \in \ca F_1$ and $\alpha \in\ca F_0$, because of the condition \eqref{redsnparcond}. 
We thus confirm that regularity in the reduced phase space is equivalent, on-shell, to tangential extensibility in the kinematical phase space.

Finally, we can compute the (reduced) Poisson brackets of shear-news functions
\ba
\{\Gamma_{\lambda,\alpha}, \Gamma_{\lambda',\alpha'}\}_R &= - \delta\Gamma_{\lambda',\alpha'} (V_{\lambda,\alpha}) \no
&= - \re\int\!\epsilon_\scri \left(\bar\lambda' V_{\lambda,\alpha}\sigma - \bar\alpha'\pu (V_{\lambda,\alpha}\sigma) \right)  \no
&= -2\pi \re\int\!\epsilon_\scri \left(\bar\lambda' (\Lambda + \alpha + \alpha_-) - \bar\alpha'(\lambda + \pu \alpha) \right) 
\ea
where $(\lambda,\alpha)$ and $(\lambda',\alpha')$ are in $\ca G_{TE}$.
Note that this matches exactly with the Dirac brackets computed in \eqref{DBshearnewsDir0}.
Or, if $\Gamma_{\lambda',\alpha'}$ is merely smooth, it also matches with the directional formula \eqref{DBshearnewsDir}.
Hence, this confirms the validity of the Dirac brackets construction.

\section{Algebra of proper observables}
\label{Secalgebra}

In classical physics, with a finite number of degrees of freedom, observables are generally defined as real (smooth) functions on the phase space. If gauge is present, then we also require that these functions are constant along the gauge orbits. All observables then form an algebra with product given by the Poisson (or Dirac, in the presence of second-class constraints) brackets. However, for theories with an infinite number of degrees of freedom, we have emphasized that not all smooth functions can be associated with a symplectic flow, and therefore they generally will not form an algebra. Only the regular (or tangentially extensible) ones have well-defined Poisson (or Dirac) brackets, and they form an algebra.\footnote{If $F$ and $G$ are two functions symplectically associated with vector fields $X_F$ and $X_G$, then $\{F,G\}$ is associated with the vector field $-[X_F, X_G]$.} 
The {\it algebra of proper observables}, with product given by the Dirac brackets, will be called $\scr A$.

In quantum mechanics, observables are (self-adjoint) operators on a Hilbert space, and they have a natural $C^*$-algebra structure, which includes all possible commutators between operators. From a quantization perspective, it is assumed that the commutator between two quantum observables should agree, at first order in $\hbar$, with the Poisson (or Dirac) brackets of the corresponding classical observables. In view of this, it is natural to consider that a quantum observable will only have a classical counterpart if it corresponds to a proper observable. Therefore, we expect that proper observables are the only ones relevant for quantization, and in this section we discuss some properties of their algebra. In particular, we will show that the shear-news functions forms a complete subset of proper observables, and that any proper observable that does not commute with the memory is capable of ``measuring'' the goldstone mode (even though the Goldstone mode itself is not a proper observable).

\subsection{The entire algebra}
\label{SecEntireA}

Consider a smooth real function on the phase space, $F: \ca P\to\bb R$. The procedure for determining the conditions for its tangential extensibility is exactly analogous to what was done for the shear-news functions in Sec.~\ref{SecTEshearnews}, and further illustrated with the supertranslation charge in Sec.~\ref{SecSuperTflow}. 
In particular, in order for it to be kinematically regular, given the symplectic structure \eqref{OmASNP}, it must have well-defined, smooth functional derivatives (without boundary terms),
\be\label{properobsfuncder}
\delta F = \int\!\epsilon_\scri \left( \frac{\delta F}{\delta\sigma(x)}\delta\sigma(x) + \frac{\delta F}{\delta\bar N(x)}\delta\bar N(x) + \frac{\delta F}{\delta\bar\sigma(x)}\delta\bar\sigma(x) + \frac{\delta F}{\delta N(x)}\delta N(x) \right)
\ee
The reality condition implies
\be
\frac{\delta F}{\delta\bar\sigma(x)} = \ol{\frac{\delta F}{\delta\sigma(x)}} \quad \text{\it and }\quad \frac{\delta F}{\delta\bar N(x)} = \ol{\frac{\delta F}{\delta N(x)}}
\ee
It is tangentially extensible if and only if
\be\label{FTEcond}
p \in \ca P_K \mapsto \left(\frac{\delta F}{\delta\bar\sigma},\, \frac{\delta F}{\delta N}\right)\!\bigg|_p \in \ca G_{TE} \,\,\, \text{\it is smooth}
\ee
where the functional derivatives are evaluated at each point $p$ of the constrained phase space. 
The corresponding flow of its tangential extension is, analogously to \eqref{wtXlambdaalphasigma} and \eqref{wtXlambdaalphaN},
\ba
& \wt X_F\sigma(x) = 4\pi\left(\frac{\delta F}{\delta\bar\Sigma} + \frac{\delta F}{\delta N} + \frac{\delta F}{\delta N_-}\right) \\
&\wt  X_F N(x) = -4\pi \left(\frac{\delta F}{\delta\sigma} + \pu \frac{\delta F}{\delta\bar N} \right) 
\ea
where
\ba
\frac{\delta}{\delta\Sigma(x)} &:= \int_{-\infty}^u\!du'\, \frac{\delta}{\delta\sigma(u',\ang)} \\
\frac{\delta}{\delta N_\pm(\ang)} &:= \lim_{u\to\pm\infty} \frac{\delta}{\delta N(u,\ang)}
\ea
and similarly for conjugations.
Given two tangentially extensible functions, $F$ and $G$, their Dirac bracket is, analogously to \eqref{DBshearnewsAS},
\be\label{DBcompalg}
\{F, G\}_D = 4\pi \re \left[\int\!\epsilon_\scri\!\left( \frac{\delta F}{\delta\bar\sigma} \frac{\delta G}{\delta\Sigma} + 2\frac{\delta F}{\delta\sigma} \frac{\delta G}{\delta N} - \frac{\delta F}{\delta N}\pu\frac{\delta G}{\delta\bar N} \right) + \int_S\!\epsilon_S\, \frac{\delta F}{\delta N_+} \frac{\delta G}{\delta\bar N_-}\right] - \cdots
\ee
where the ``$\cdots$'' corresponds to the same expression except with $F$ and $G$ swapped.
Notice that for the shear-news function we have
\be
\frac{\delta \Gamma_{\lambda,\alpha}}{\delta\bar\sigma} = \frac{\lambda}{2} \qquad\text{\it and }\qquad \frac{\delta \Gamma_{\lambda,\alpha}}{\delta N} = \frac{\alpha}{2}
\ee
and the formula above reduces to \eqref{DBshearnewsAS} when $F = \Gamma_{\lambda, \alpha}$ and $G = \Gamma_{\lambda',\alpha'}$.

The entire algebra of proper observables is thus
\be
\scr A = \left\{ F \in C^\infty(\ca P, \bb R);\, F\, \text{\it satisfies \eqref{FTEcond}}\right\}
\ee
with the product given by the Dirac brackets,
\be
F * G := \{F,G\}_D
\ee
expressed in \eqref{DBcompalg}.

\subsection{Memory and Goldstone probes}
\label{SecMemGP}

Each point in the constraint surface $\ca P_K$ is characterized by a shear $\sigma$ with well-defined purely electric asymptotic limits $\sigma_\pm$. 
It is of interest to describe observables associated with these asymptotic modes (also called ``soft modes'').

The difference, $\sigma_+ - \sigma_-$, is the {\it memory mode}. 
We could attempt to define memory observables directly in terms of smearing of the memory mode, as\footnote{Due to the boundary conditions of $\sigma_+ - \sigma_-$ being purely electric, we could drop the real part operator. In fact, as $\sigma_+ - \sigma_-$ can be written as $\eth^2\varsigma$ for some real scalar $\varsigma$, 
\[
\mem_\xi \doteq - \re\int \epsilon_S\, \bar\eth^2\xi \,\eth^2\varsigma = - \re\int \epsilon_S\, \eth^2\bar\eth^2\xi \,\varsigma = - \int \epsilon_S\, \eth^2\bar\eth^2(\re\,\xi) \,\varsigma = - \int \epsilon_S\, \bar\eth^2\xi \left(\sigma_+ - \sigma_-\right)
\]}
\be\label{memoryonshell}
\mem_\xi := - \re\int \epsilon_S\, \bar\eth^2\xi \left(\sigma_+ - \sigma_-\right)
\ee
where $\xi \in C^\infty(S^2, \bb R)$.
But note that, technically, this is not tangentially extensible according to our formalism. Nevertheless, since all relevant physical statements must be made on-shell, it suffices that there exists a proper observable which is on-shell equal to \eqref{memoryonshell}. In fact, we have
\be\label{memoryonshell2}
\mem_\xi \doteq \re\int \epsilon_\scri\, \eth^2\xi \, N = \Gamma_{0,\eth^2\xi}
\ee
and we use this as the definition of {\it memory observables}.
In this sense, we say that smearings of the memory mode, as in \eqref{memoryonshell}, can be directly measured by memory observables.\footnote{This example shows that quantities that are equal on-shell may have different symplectic properties. 
More generally, we will say that a function is {\it directly measured by a proper observable}, if it is equal on-shell to a proper observable.}

The sum $\sigma_+ + \sigma_-$, is known as the {\it Goldstone mode}. (Alternatively, only $\sigma_+$ or $\sigma_-$ can also be taken as the ``counterpart'' of the memory; naturally, any two independent linear combinations of $\sigma_+$ and $\sigma_-$ are sufficient for describing the asymptotic sector.)
This is obviously not an observable, but unlike the memory mode, it cannot even be directly measured by any proper observable. That is, there is no proper observable whose value on-shell is a functional of $\sigma_+ + \sigma_-$ alone. In fact, we have
\begin{theorem}\label{nogravitonobs}
The only (smooth, real) functionals of $\sigma_+$ and $\sigma_-$ that are equal (on-shell) to proper observables are functionals of the memory mode alone.
\end{theorem}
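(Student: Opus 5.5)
Suppose $F$ is a proper observable and $F \doteq \Phi(\sigma_+,\sigma_-)$ for some smooth real functional $\Phi$ on $\ca E\times\ca E$. The plan is to work entirely on the reduced phase space, using Sec.~\ref{SecRedPS}. There, proper observables are on-shell equal to regular functions on $(\ca P_R,\Omega_R)$, with $\ca P_R\simeq\ca F_0$. It therefore suffices to show that if $\Phi(\sigma_+,\sigma_-)$, viewed as a function on $\ca F_0$, admits a smooth Hamiltonian vector field $V$ with $\delta\Phi = -I_V\Omega_R$, then $\Phi$ depends only on $\sigma_+-\sigma_-$.

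The first step is to compute $\delta\Phi$ at a point $\sigma$. This is a one-form supported only at the boundaries:
\[
\delta\Phi = \re\int_S\!\epsilon_S \left(\bar\mu_+\,\delta\sigma_+ + \bar\mu_-\,\delta\sigma_-\right),
\]
where $\mu_\pm$ are the functional derivatives of $\Phi$. Since $\sigma_\pm$ are purely electric, $\mu_\pm$ are defined only modulo purely magnetic parts, so we may take them purely electric. Next we match this against \eqref{IwtVwtOmega}. Testing on variations $\delta\sigma$ of compact support in $u$, which is allowed since such variations lie in $\ca F_0$, the bulk term forces $\pu(V\bar\sigma)=0$. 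Hence $V\sigma = c(\ang)$ is $u$-independent, so $(V\sigma)_+=(V\sigma)_-=c$.

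The boundary terms then read $\frac1{4\pi}\re\int_S\epsilon_S\,\bar c\,(\delta\sigma_+-\delta\sigma_-)$. We compare them with $-\delta\Phi$ for arbitrary independent purely electric $\delta\sigma_\pm$; these are realizable because any pair in $\ca E\times\ca E$ arises as the asymptotic limits of some element of $\ca F_0$. Using \eqref{emort}, the purely electric parts must agree, giving $\mu_+ = -\frac{1}{4\pi}c_e$ and $\mu_-=+\frac1{4\pi}c_e$. Hence $\mu_++\mu_-=0$ at every point of $\ca P_K$.

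The final step turns $\mu_++\mu_-=0$ into dependence on the memory alone. Change variables to $m=\sigma_+-\sigma_-$ and $g=\sigma_++\sigma_-$. The derivative of $\Phi$ in the $g$ direction is proportional to $\mu_++\mu_-$, which vanishes identically. Since $\ca E$ is a vector space, and hence connected along straight lines in $g$, the fundamental theorem of calculus along $t\mapsto\Phi(m,g+t h)$ shows that $\Phi$ is independent of $g$. The main subtlety lies in the boundary-term matching. One has to check that the asymptotic limits of $V\sigma$ are forced to be equal, rather than, say, that the boundary terms could be absorbed by $V$ differing at $\pm\infty$. This is exactly what the bulk condition guarantees, and it is the step I would write out most carefully (this is presumably where the lemma of App.~\ref{AppLemma1formPK} on one-forms on $\ca P_K$ enters). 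Conversely, functionals of the memory are indeed realized, by \eqref{memoryonshell2} composed with smooth functions, which confirms the statement is sharp.
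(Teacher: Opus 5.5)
Your argument is correct, but it takes a different route from the paper's proof.

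\textbf{The paper's route.} The paper stays on the kinematical phase space. It invokes Lemma~\ref{lemma1formPK} to write $\delta(\Theta-F)\doteq\delta K_\varphi$ for some distributional $\varphi$, and then argues that $\varphi$ must be smooth with asymptotic limits. Matching bulk terms gives $\delta F/\delta\bar N=-\bar\varphi/8\pi$ and $\delta F/\delta\sigma=\pu\bar\varphi/8\pi$, and matching boundary terms gives $\vartheta_\pm=\pm\bar\varphi_\pm/4\pi$ modulo magnetic parts. Finally, the dual constraint in \eqref{FTEcond}, applied to the functional derivatives of $F$, forces $\varphi_+=\varphi_-$ and hence $\vartheta_+=-\vartheta_-$.

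\textbf{Your route.} You pass to $(\ca P_R,\Omega_R)$. There the bulk part of \eqref{IwtVwtOmega} forces $V\sigma=c(\hat r)$. The boundary terms then automatically pair $c$ with $\delta\sigma_+-\delta\sigma_-$. The $u$-independence of $V\sigma$ plays exactly the role of the paper's $\varphi_+=\varphi_-$, i.e.\ of the dual constraint.

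\textbf{Comparison.} Your route does not need Lemma~\ref{lemma1formPK}, contrary to your guess; the bulk equation alone does the work. It also avoids the paper's rather heuristic step that the integral representative $\varphi$ is smooth with limits. The paper's route, in exchange, stays inside the Dirac formalism. It also makes explicit the link with the Goldstone-probe criterion $\delta F/\delta\bar N_+-\delta F/\delta\bar N_-\neq0$.

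\textbf{What you should write out.} You should justify the reduction itself. The paper only asserts that proper observables restrict to regular functions on the reduced phase space, and checks this only for $\Gamma_{\lambda,\alpha}$. For general $F$ it follows in three steps:
\begin{itemize}
\item[(i)] $\{\wt F,K_{\varphi'}\}\doteq 0$ for all compactly supported $\varphi'\in\ca K$ gives $\delta K(\wt X_F)\doteq 0$, so $\wt X_F$ is tangent to $\ca P_K$.
\item[(ii)] $\delta K_{\varphi_F}(W)\doteq 0$ for every $W$ tangent to $\ca P_K$.
\item[(iii)] Hence $\delta F|_{T\ca P_K}=-I_{\wt X_F}\Omega_R$, and $V=\wt X_F|_{\ca P_K}$ is the required smooth Hamiltonian vector field.
\end{itemize}

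\textbf{A side remark.} Your closing ``sharpness'' claim overreaches slightly. Only functionals of the memory with smooth derivative are realized, for example smooth functions of finitely many $\mem_\xi$. This is not part of the theorem, so it does not affect the proof.
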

\begin{proof}
Let $\Theta(\sigma_+, \sigma_-)$ be a real-valued smooth functional of the asymptotic modes and assume that on the constraint surface $\ca P_K$ it is equal to a proper observable $F$. The exterior derivative of their difference must be annihilated by any vector tangent to $\ca P_K$, and from Lemma~\ref{lemma1formPK} (see App.~\ref{AppLemma1formPK}) we have 
\be\label{ThetaFCf}
\delta\left(\Theta - F\right) \doteq \delta K_\varphi
\ee
for some integral representative $\varphi$ of a (real, continuous) distribution on $\ca F_1$, which can depend on the dynamical variables.\footnote{The topology of $\ca F_1$ and associated distributions will be discussed in Sec.~\ref{Secfuncanalysis}. However, for the purposes of this proof, it is acceptable to think of $\varphi$ as a spin-weight $2$ complex-valued function on $\scri$ such that $K_\varphi$ is well-defined and smooth. (Notice that $\varphi$ needs not to be a function in $\ca K$, since we are not requiring that $K_\varphi$ generates a regular symplectic flow.)}
Since $\Theta$ depends only on the asymptotic modes, its differential must be concentrated at the asymptotic boundaries,
\be
\delta\Theta = \re\left[\int_{S_+}\!\epsilon_S\, \vartheta_+(\ang) \delta\sigma_+(\ang) + \int_{S_-}\!\epsilon_S\, \vartheta_-(\ang) \delta\sigma_-(\ang) \right]
\ee
for some (possibly field-dependent) functions $\vartheta_\pm \in \bar{\ca E}$.  Without loss of generality, we can assume that $\vartheta_\pm$ are purely electric. As we see from \eqref{properobsfuncder}, the differential of $F$ contains no boundary terms
\be
\delta F = 2\,\re\int\!\epsilon_\scri \left( \frac{\delta F}{\delta\sigma(x)}\delta\sigma(x) + \frac{\delta F}{\delta\bar N(x)}\delta\bar N(x) \right)
\ee
In order for \eqref{ThetaFCf} to possibly have a solution, the differential of $K_\varphi$ need to have a similar form, so $\varphi$ must be associated with a smooth function with well-defined asymptotic limits, thereby allowing integration by parts. We then have
\be
\delta K_\varphi \doteq \frac{1}{4\pi}\re \int\! \epsilon_\scri \left(\bar\varphi\delta\bar N - \pu \bar\varphi \delta\sigma\right) + \frac{1}{4\pi}\re\left[\int_{S_+}\!\epsilon_S\, \bar\varphi_+ \delta\sigma_+ - \int_{S_-}\!\epsilon_S\, \bar\varphi_- \delta\sigma_- \right]
\ee
We obtain the following equations (on-shell)
\ba
\frac{\delta F}{\delta\bar N(x)} = - \frac{1}{8\pi}\bar\varphi \qquad &\text{\it and }\qquad \frac{\delta F}{\delta\sigma(x)} = \frac{1}{8\pi}\pu \bar\varphi \\
\vartheta_+ = \frac{1}{4\pi}\bar\varphi_+ + w_+ \qquad &\text{\it and }\qquad \vartheta_- = -\frac{1}{4\pi}\bar\varphi_- + w_- \label{varthetapmeq}
\ea
for purely magnetic functions $w_\pm$. 
But notice that
\be
\frac{\delta F}{\delta\bar N_+} - \frac{\delta F}{\delta\bar N_-}  = - \frac{1}{8\pi}\left(\bar\varphi_+ - \bar\varphi_-\right) \quad\,\, \text{\it and }\quad\,\, \int_{-\infty}^{+\infty}\!du\,\frac{\delta F}{\delta\sigma(x)} = \frac{1}{8\pi}\left(\bar\varphi_+ - \bar\varphi_-\right)
\ee
which implies, from \eqref{FTEcond}, that
\be
\varphi_+ = \varphi_-
\ee
The electric component of \eqref{varthetapmeq} then yields
\be
\vartheta_+ = - \vartheta_- =: \vartheta
\ee
We conclude that
\be
\delta\Theta \doteq \re\int_{S}\!\epsilon_S\, \vartheta\, \delta\!\left(\sigma_+ - \sigma_-\right)
\ee
meaning that $\Theta$ must depend only on the memory mode.
\end{proof}

This leads us to the concept of {\it Goldstone probes},
\begin{definition}
A ``Goldstone probe'' is any proper observable that is capable of indirectly measuring the Goldstone mode, in the sense that its value is sensitive to changes in the state along (at least some) Goldstone directions. More precisely, $\Pi : \ca P \to \bb R$ is a Goldstone probe if and only if
\ba
&\quad i)\,\,\text{$\Pi$ is a proper observable;} \no
&\quad\!\! ii)\,\,\text{$\delta\Pi(V) \neq 0$ for some vector field $V \ne 0$, tangent to $\ca P_K$, satisfying $\pu(V\sigma) = 0$.} \nonumber
\ea
\end{definition}
Given a vector $V$ tangent to $\ca P_K$, satisfying $\pu(V\sigma) = 0$, we have
\ba
\delta \Pi(V) &= 2\,\re\int\!\epsilon_\scri \left( \frac{\delta \Pi}{\delta\sigma}\, V\sigma + \frac{\delta \Pi}{\delta\bar N}\, V\bar N \right) \no
&\doteq 2\,\re\int\!\epsilon_\scri \,\frac{\delta \Pi}{\delta\sigma}\, V\sigma \no
&= 2\,\re\int_S\!\epsilon_S \left(\int_{-\infty}^{+\infty}\!du\,\frac{\delta \Pi}{\delta\sigma}\right) V\sigma \label{deltaPiV}
\ea
Since $\Pi$ is a proper observable, we conclude from \eqref{FTEcond} that it is a Goldstone probe if and only if
\be
\int_{-\infty}^{+\infty}\!du\,\frac{\delta \Pi}{\delta\sigma} = \frac{\delta \Pi}{\delta\bar N_+} - \frac{\delta \Pi}{\delta\bar N_-} \ne 0
\ee
In particular, this means that there are {\it no} Goldstone probes constructed only from the shear or only from the news.
Note also that
\be
\{\Pi, \mem_\xi\}_D = 8\pi \re \int_S \epsilon_S\, \eth^2\xi \left( \frac{\delta \Pi}{\delta\bar N_+} - \frac{\delta \Pi}{\delta\bar N_-}\right)
\ee
which means that $\Pi$ is a Goldstone probe if and only if it does not commute with a memory observable.

A simple class of Goldstone probes is provided by the shear-news function. As we have seen above, a shear-news function $\Gamma_{\lambda,\alpha}$ is a Goldstone probe if and only if $\alpha_+ \ne \alpha_-$. Note that there is no ``canonical'' conjugate to the memory, but any such shear-news functions is a conjugate to memory observables,
\be
\{\Gamma_{\lambda,\alpha}, \mem_\xi\}_D = 4\pi \re \int_S \epsilon_S\, \eth^2\xi \left( \bar\alpha_+ - \bar\alpha_-\right)
\ee
In particular, any observable with $\L_\infty=\a_+-\a_-=1$ would look like a canonically conjugated variable.
In order to measure the graviton soft modes, say $\sigma_+$, to arbitrary precision, it is necessary to consider an infinite number of Goldstone probes. Nevertheless, there are certain families of Goldstone probes that are more efficient in measuring the soft modes.
For example, consider the sequence
\be
\Pi_{\tau,u_0,n} := \Gamma_{\lambda_n^{\tau,u_0}, \alpha_n^{\tau,u_0}}\,,\quad n \in \bb N
\ee
with
\ba
\lambda_n^{\tau,u_0}(x) &= \frac{n}{\sqrt{\pi}}e^{-n^2(u-u_0)^2} \,\eth^2\tau \\
\alpha_n^{\tau,u_0}(x) &= \frac{1}{1 + e^{-n(u-u_0)}} \,\eth^2\tau
\ea
where $\tau \in C^\infty(S,\bb R)$ and $\eth^2\tau \ne 0$. For each $n$, $(\lambda_n^{\tau,u_0}, \alpha_n^{\tau,u_0}) \in \ca G_{TE}$, so $\Pi_{n,u_0}$ is a proper observable. Moreover, $(\alpha_n^{\tau,u_0})_+ - (\alpha_n^{\tau,u_0})_- = \eth^2\tau$, so $\Pi_{\tau,u_0,n}$ are Goldstone probes, whose bracket with the memory gives
\be
\{\Pi_{\tau,u_0,n}, \mem_\xi\}_D = 4\pi \re \int_S \epsilon_S\, \eth^2\xi \,\bar\eth^2\tau
\ee
In the limit $n\to\infty$, $\lambda_n^{\tau,u_0}$ tends to a Dirac delta centered at $u_0$, and $\alpha_n^{\tau,u_0}$ tends to a Heaviside function supported on $u>u_0$. Thus, 
\be
\Pi_{\tau,u_0} := \lim_{n\to\infty}\Pi_{\tau,u_0,n} = \re\int_S\!\epsilon_S\, \bar\eth^2\tau \left( \sigma(u_0,\ang) + \int_{u_0}^{+\infty}\!du\, \bar N(u,\ang) \right) 
\ee
This can be interpreted as a {\it dressed} version of the shear at a fixed time $u=u_0$.
Since the boundary conditions assume that $N$ vanishes asymptotically, the ``dressing'' term contributes little to the value of $\Pi_{u_0}$ at large $u_0$.
Note that it is not (strictly) a proper observable because the symplectic flow it would generate is not smooth.
In the limit $u_0 \to\infty$ we get
\be
\Pi_{\tau,+} :=  \lim_{u_0\to\infty}\Pi_{\tau,u_0} = \re\int_S\!\epsilon_S\, \bar\eth^2\tau \, \sigma_+ 
\ee
This is not an observable, even in the weak sense (i.e., relaxing the condition of smoothness), because it does not satisfy the dual constraint.
The lesson, nevertheless, is that one should be able to measure the value of $\sigma_+$, to a fairly good degree of accuracy, by measuring some $\Pi_{\tau,u_0,n}$ with sufficiently large $n$ and $u_0$. 

One should {\it not} be tempted to interpret the construction above as a way of ``defining'' the brackets $\{\Pi_{\tau,+},\,\cdot\,\}_D$ via $\lim_{u_0\to\infty} \lim_{n\to\infty} \{\Pi_{\tau,u_0,n},\,\cdot\,\}_D$. 
In fact, there are infinitely many other families of proper observables that also converge to $\Pi_{\tau,+}$, at ``$C^0$ level'', but define different brackets in the limit. For example, consider the sequence
\be
\Upsilon_{\tau,u_0,n} := \Gamma_{\lambda_n^{\tau,u_0}, 0}\,,\quad n \in \bb N
\ee
with
\be
\lambda_n^{\tau,u_0}(x) = \frac{1}{\sqrt{\pi}} \left(n e^{-n^2(u-u_0)^2} - \frac{1}{n}e^{-(u-u_0)^2/n^2} \right)\eth^2\tau 
\ee
for the same $\tau$ as before. For each $n$, $\Upsilon_{\tau,u_0,n}$ is a proper observable. But none of them are Goldstone probes, and they all commute with the memory.
On the other hand, in the limit $n\to\infty$, the first Gaussian tends to a Dirac delta centered at $u_0$, while the second tends to zero, so
\be
\Upsilon_{\tau,+} := \lim_{u_0\to\infty} \lim_{n\to\infty} \Upsilon_{\tau,u_0,n} =  \re\int_S\!\epsilon_S\, \bar\eth^2\tau \, \sigma_+ 
\ee
which is equal to $\Pi_{\tau,+}$. However,
\be
\lim_{u_0\to\infty} \lim_{n\to\infty} \{\Upsilon_{\tau,u_0,n},\mem_\xi\} = 0,
\ee
in disagreement with the brackets that would have been defined by the limit $\Pi_{\tau,u_0,n}\to\Pi_{\tau,+}$.
This is further evidence that there is no consistent and unambiguous manner to define brackets for the graviton soft mode, in accordance with the fact that there are no proper observables that are equal on-shell to functionals of it alone (Theorem~\ref{nogravitonobs}).

\subsection{Complete subalgebras}
\label{SecCompleteSubA}

A subalgebra of proper observables, $\scr U \subset \scr A$, is said to be {\it complete} if all proper observables can be expressed, on-shell, as functions of elements of $\scr U$. A more manageable, but equivalent, definition is
\begin{definition}
A complete subalgebra of proper observables $\scr U$ is a subalgebra of $\scr A$ which separates points of the 
constraint surface.
\end{definition}
\noindent
That is, the specification of the values of all functions in $\scr U$ uniquely determines a state (i.e., point) in $\ca P_K$.\footnote{It is assumed that $\scr A$ separates points in $\ca P_K$, which is true in the present case. If that were not true, there would be states indistinguishable by all physical observables, and it would be natural to quotient the phase space by this ``gauge'' so that $\scr A$ becomes point separating.}
We are not sure, however, whether in infinite dimensions this definition automatically implies that all proper observables in $\scr A$ can be expressed as {\it smooth} functions of elements of $\scr U$, and we will not attempt to establish that for our examples below (moreover, we do not see a reason for demanding this stronger condition).

Consider the subset of shear-news functions with $\lambda = \pu\alpha$,
\be
\ca G := \big\{\Gamma_\alpha := \Gamma_{\pu\alpha,\alpha}\,;\,\, \text{\it with } \alpha \in \ca F_0 \big\}
\ee
This subset of shear-news functions is actually equivalent, on-shell, to the set of all shear-news functions.
As mentioned in \eqref{wgauge}, the smearing parameters $(\lambda,\alpha)$ are over-complete on-shell: $\Gamma_{\pu w, -w} \doteq 0$ for any $w \in \ca F_0^0$, so $\Gamma_{\lambda, \alpha} \doteq \Gamma_{\lambda + \pu w, \alpha - w}$. Thus, requiring that $\alpha - w = \beta$ and $\lambda + \pu w = \pu\beta$, one always find a (unique) solution $w = -\frac{1}{2}\left(\Lambda - \alpha + \alpha_-\right)$, which means that
\be\label{overcompletelabels}
\Gamma_{\lambda, \alpha} \doteq \Gamma_{\pu\beta, \beta} = \Gamma_\beta\,,\quad \text{\it with } \beta = \frac{1}{2}\left(\Lambda + \alpha + \alpha_-\right).
\ee
In fact, this is the most natural set of shear-news functions from the perspective of the reduced phase space.

We will show that $\ca G$ is point-separating in $\ca P_K$.
Since $\ca P_K$ is a linear space and $\Gamma_\alpha$ is a linear function on $\ca P_K$, it suffices to show that $\ca G$ separates $\sigma = 0$, i.e., that there are no points $(\sigma,N) \in \ca P_K$, other than $(0,0)$, for which all $\Gamma_\alpha \in \ca G$ vanishes. We have
\ba\label{Ga}
\Gamma_\alpha &= \re\int\!\epsilon_\scri \left(\pu\bar\alpha\, \sigma + \bar\alpha \bar N\right) \no
&\doteq \re\int\!\epsilon_\scri \left(\pu\bar\alpha\, \sigma - \bar\alpha \pu\sigma\right) \no
&= -2\,\re\!\int\!\epsilon_\scri \,\bar\alpha \pu\sigma  + \re\!\int_S\!\epsilon_S\left(\bar\alpha_+\sigma_+ - \bar\alpha_-\sigma_-\right).
\ea
If this vanishes for all $\alpha \in \ca F_0$, in particular it vanishes for all compactly supported $\alpha$, which implies $\pu\sigma = 0$. Then, for $\sigma(x) = \sigma_0(\ang)$, we have
\be
\Gamma_\alpha\Big|_{\sigma = \sigma_0} = \re\!\int_S\!\epsilon_S\left(\bar\alpha_+ - \bar\alpha_-\right)\sigma_0
\ee
Now considering $\alpha$'s with $\alpha_+ \ne \alpha_-$ implies $\sigma_0 = 0$. Hence, there is a unique point ($\sigma = 0$) for which all $\Gamma_\alpha$ vanishes, concluding the proof.

The Dirac brackets between elements of $\scr G$ can be directly computed from \eqref{DBshearnewsDir0}, which gives
\be\label{scrUprod}
\{\Gamma_\alpha, \Gamma_{\alpha'}\}_D = 4\pi \,\re\!\int\!\epsilon_\scri\left(\alpha'\pu\bar\alpha - \alpha \pu\bar\alpha' \right)
\ee
Therefore, together with the constant function $1$ (i.e., $p\mapsto 1$), we can define the ``Heisenberg'' subalgebra of proper observables
\be
\scr U := \ca G \oplus_S 1 = \big\{\Gamma_\alpha + a 1\,;\,\, \text{\it with  $\alpha \in \ca F_0$ and $a\in\bb R$} \big\}
\ee
where $\oplus_S$ indicates the semi-direct sum with the central element $1$ (we will omit the $1$ in what follows, denoting $a 1$ simply by $a$). The product on $\scr U$ is given, from \eqref{scrUprod}, by
\be\label{scrUprod1}
(\Gamma_\alpha + a ) * (\Gamma_{\alpha'} +a') := \{\Gamma_\alpha + a, \Gamma_{\alpha'} + a'\}_D =  4\pi \,\re\!\int\!\epsilon_\scri\left(\alpha'\pu\bar\alpha - \alpha \pu\bar\alpha' \right)
\ee
where $a,a'\in \bb R$. In this manner, $\scr U$ is a complete subalgebra of $\scr A$.

Notice that $\scr U$ is not {\it minimal}, in the sense that there are non-dense subalgebras of $\scr U$ which are also complete.\footnote{When $\scr U$ is indexed by a family of smearings, such as in this case where it is indexed by $\ca F_0 \times \bb R$, the most natural topology on $\scr U$ is the quotient topology induced by the map $\ca F_0\times \bb R \to \scr U$, $(\alpha, a) \mapsto \Gamma_\alpha + a$. The natural topology on $\ca F_0$ is discussed near \eqref{seminorms0}, and the definition of quotient topology is given in footnote~\ref{quotienttopology}. However, for the purposes of this section, one can think of ``denseness'' in a loose sense.} 
In fact, inspecting the proof above that $\ca G$ is point-separating, we only really needed that $\alpha$ is arbitrary for finite $u$ (or, at least, possibly excluding a subset of isolated points), and that $\alpha_+ \ne \alpha_-$. Therefore, the two subalgebras
\be
\scr U_\pm := \big\{\Gamma_\alpha + a 1\,;\,\, \text{\it with $\alpha \in \ca F_0$, $\alpha_\mp = 0$ and $a\in\bb R$} \big\} 
\ee
are non-dense in $\scr U$ but also complete. 
The product on $\scr U_\pm$ is again given by the Dirac brackets, as in \eqref{scrUprod1}.
Technically, these subalgebras are still not minimal: for example, one could further restrict to smearings $\alpha$ that vanish on a finite subset of points in $\scri$. Nevertheless, it appears that there are no physically natural subalgebras smaller than $\scr U_\pm$. In that sense, we will refer to $\scr U_\pm$ (and even $\scr U$) as ``quasi-minimal'' complete subalgebras.
Generally, a suitable choice of a minimal or quasi-minimal complete subalgebra is expected to play a central role in the quantization.\footnote{Even in finite-dimensional phase spaces, only a ``sufficiently small'' complete subalgebra (like the Heisenberg algebra of positions and momenta) is assumed to undergo canonical quantization, in view of Groenewold-Van Hove type obstructions.} 

While $\scr U_\pm$ do not include memory observables, all the necessary information about the memory is encoded in these subalgebras.
In fact, it is possible to ``generate'' the memory observables via a limiting procedure. Define the sequence of observables
\be
\mem_{\xi,n} := \frac{1}{2} \Gamma_{\chi_n\eth^2\xi} \doteq \re\int\!\epsilon_\scri\, \chi_n \eth^2\xi N 
\ee
where $\chi_n(u)$ are smooth, compactly-supported real functions equal to $1$ in the interval $[-n,n]$ and with absolute value bounded by $1$. 
For each $n\in\bb N$, $\mem_{\xi,n}$ belongs to both $\scr U_\pm$.
Since $N$ approaches zero faster than $|u|^{-1-\varepsilon}$ as $u \to \pm\infty$, we see that
\be
\lim_{n\to\infty} \mem_{\xi,n} \doteq \mem_{\xi}
\ee
Moreover, this limit is ``uniform'' in the observable algebra, in the sense that it commutes with the Dirac brackets.
In fact, for any proper observable $F$ we have
\be
\{F,\mem_{\xi,n}\}_D = 2\pi\,\re\!\int\!\epsilon_\scri\, \chi_n \eth^2\xi \left( \frac{\delta F}{\delta\sigma} + \pu\frac{\delta F}{\delta\bar N}\right)
\ee
and since the term in parenthesis belongs to $\ca F_1$, the only contribution to the integral, as $n\to\infty$, will come from the interval $u\in[-n,n]$, where $\chi_n(u)=1$. Therefore,
\be
\lim_{n\to\infty}\{F,\mem_{\xi,n}\}_D  = \{F,\mem_{\xi}\}_D
\ee
showing that memory observables can be approximated by sequences in $\scr U_\pm$ that converge ``uniformly'' in $\scr A$.

It is interesting to note that, in all these subalgebras, the supertranslations act automorphically on-shell (i.e., it preserves the subalgebras). To see this, note from \eqref{Tdirbrackets} that
\ba
\{F_T, \Gamma_{\alpha}\}_D &=  2\re\int\!\epsilon_\scri\,T N \,\pu\alpha + \re\int_S\!\epsilon_S\left(\alpha_+ - \alpha_-\right) \bar\eth^2T \no
&= 2\Gamma_{0, T\pu\alpha} + \re\int_S\!\epsilon_S\left(\alpha_+ - \alpha_-\right) \bar\eth^2T \no
&\doteq \Gamma_{T\pu\alpha} + \re\int_S\!\epsilon_S\left(\alpha_+ - \alpha_-\right) \bar\eth^2T
\ea
where in the last line we used the (on-shell) equivalence described in footnote~\eqref{overcompletelabels}, $\Gamma_{0, T\pu\alpha} \doteq \Gamma_{\pu\beta, \beta}$ with $\beta = \frac{1}{2}\left(0+T\pu\alpha + (T\pu\alpha)_-\right) = \frac{1}{2}T\pu\alpha$.
In particular, the {\it Bondi flux}, $F_1$, acts as
\be
\{F_1, \Gamma_{\alpha}\}_D \doteq \Gamma_{\pu\alpha}
\ee
and it commutes with the memory observables.

\section{Complex and distributional brackets}
\label{SecDDIB}

It is possible to extend the notion of brackets to complex and localized functions, such as $N(x)$ and $\sigma(x)$. These should not be understood as standard brackets between observables, since such functions are not tangentially extensible (they would, in principle, generate non-real and non-smooth flows, or even fail to generate any flow). Nevertheless, such brackets can still be given precise and useful definitions, serving as tools for computing well-defined brackets between proper observables through appropriate smearings and linear combinations.
This will provide the accurate replacement for \eqref{NNbra}, in the presence of memory.

\subsection{Complex brackets}
\label{SecComplexB}

It is convenient, for explicit calculations, to also consider a complex parametrization of the smeared canonical pair. To that end, we define the complex shear-news function by
\ba
\Gamma_{\lambda, \alpha}^{\bb C} &:= \int\!\epsilon_\scri\left(\bar\lambda\sigma + \bar\alpha \bar N\right) \no
&= \re\int\!\epsilon_\scri\left(\bar\lambda\sigma + \bar\alpha \bar N\right) + i\, \im\int\!\epsilon_\scri\left(\bar\lambda\sigma + \bar\alpha \bar N\right) \no
&= \re\int\!\epsilon_\scri\left(\bar\lambda\sigma + \alpha  N\right) + i\, \re\int\!\epsilon_\scri\left(\overline{i\lambda}\sigma + i\alpha N\right) \no
&= \Gamma_{\lambda, \alpha} + i\, \Gamma_{i\lambda, i\alpha}
\ea
Then, we are {\it formally} allowed to compute Dirac brackets involving $\Gamma_{\lambda, \alpha}^{\bb C}$ as follows
\be
\{\Gamma_{\lambda, \alpha}^{\bb C}, \Gamma_{\lambda', \alpha'}^{\bb C}\}_D = 
\{\Gamma_{\lambda, \alpha}, \Gamma_{\lambda', \alpha'}\}_D 
- \{\Gamma_{i\lambda, i\alpha}, \Gamma_{i\lambda', i\alpha'}\}_D
+ i \{\Gamma_{\lambda, \alpha}, \Gamma_{i\lambda', i\alpha'}\}_D
+ i \{\Gamma_{i\lambda, i\alpha}, \Gamma_{\lambda', \alpha'}\}_D
\ee
provided that $(\lambda,\alpha)$ and $(\lambda',\alpha')$ are in $\ca G_{TE}$.\footnote{Note that if $(\lambda, \alpha) \in \ca G_{TE}$, then $(i\lambda,i\alpha)$ would belong to a complexified version of $\ca G_{TE}$. That is, in particular, $\int\!du\, i\lambda(u) = (i\alpha)_+ - (i\alpha)_-$. This is why it is important to define $\Gamma_{\lambda,\alpha}^{\bb C}$ with the conjugated news (i.e., $\bar\alpha \bar N$); if instead we had used $\alpha N$, we would have $\Gamma_{\lambda,\alpha} = \Gamma_{\lambda, \alpha} + i\, \Gamma_{i\lambda, -i\alpha}$, and $(i\alpha, -i\lambda)$ would not generally satisfy the integral condition. Note, however, that $\ca G_{TE}$ is fundamentally a real vector space of complex-valued spin-weighted functions, and its complexification would not be compatible with the condition that $\alpha$ must have purely electric asymptotic limits. The implications of this ``abuse'' will be discussed at the end of this subsection.}
(For directional brackets, the second entry can be any smooth $\Gamma_{\lambda', \alpha'}^{\bb C}$.)
We find,
\be\label{GammaCGammaC}
\{\Gamma_{\lambda, \alpha}^{\bb C}, \Gamma_{\lambda', \alpha'}^{\bb C}\}_D = 0
\ee
Similarly we can compute brackets involving also $\bar\Gamma_{\lambda, \alpha}^{\bb C}$. The mixed brackets give
\be\label{GammaCbarGammaC}
\{\Gamma_{\lambda, \alpha}^{\bb C}, \bar\Gamma_{\lambda', \alpha'}^{\bb C}\}_D = -4\pi\int\!\epsilon_\scri\left(\lambda' (\bar\Lambda + \bar\alpha + \bar\alpha_-) - \alpha'(\bar\lambda + \pu \bar\alpha) \right)
\ee
The brackets commute with complex-conjugation, so
\be
\{\bar\Gamma_{\lambda, \alpha}^{\bb C}, \Gamma_{\lambda', \alpha'}^{\bb C}\}_D = \overline{\{\Gamma_{\lambda, \alpha}^{\bb C}, \bar\Gamma_{\lambda', \alpha'}^{\bb C}\}_D}
\ee
and $\{\bar\Gamma_{\lambda, \alpha}^{\bb C}, \bar\Gamma_{\lambda', \alpha'}^{\bb C}\}_D = 0$.

Formula \eqref{GammaCbarGammaC} holds both for regular or directional brackets, but for regular brackets it is possible to re-express it in a manifestly anti-symmetric form,
\ba
\{\Gamma_{\lambda, \alpha}^{\bb C}, \bar\Gamma_{\lambda', \alpha'}^{\bb C}\}_D =&\,\, 2\pi\int\!\epsilon_\scri\left(\bar\lambda (\Lambda' + 2\alpha') - \lambda'(\bar\Lambda + 2\bar\alpha) -\bar\alpha\partial_u\alpha' + \alpha'\partial_u\bar\alpha \right) \no
&+ 2\pi\int_S\!\epsilon_S \left(\bar\alpha_+\alpha'_- - \alpha'_+\bar\alpha_-\right)
\label{GammaCbarGammaCas}
\ea
with $(\lambda,\alpha)$ and $(\lambda',\alpha')$ in $\ca G_{TE}$.

Let us emphasize that the complex brackets, defined in this manner, are improper and should be understood only as tool to produce real brackets (via appropriate linear combinations). The reason for this is that the complex variables do not satisfy the boundary conditions on the smearing, therefore they are not proper observables. 
It is instructive to see some blatant contradictions that can arise if one ignore this fact and tries to treat them as observables.
For example, let us consider the Dirac brackets between a generic complexified shear-news function and the {\it memory} observable, defined in \eqref{memoryonshell}. The memory $\mem_\xi$, with real $\xi$, can be expressed {\it on-shell} both as $\Gamma_{0,\eth^2\xi}^{\bb C}$ or as $\bar\Gamma_{0,\eth^2\xi}^{\bb C}$. (This equality makes use of the boundary conditions for the shear, which must be asymptotically purely electric.) But we see that $\{\Gamma_{\lambda, \alpha}^{\bb C}, \Gamma_{0,\eth^2\xi}^{\bb C}\}_D \ne \{\Gamma_{\lambda, \alpha}^{\bb C}, \bar\Gamma_{0,\eth^2\xi}^{\bb C}\}_D$.
This is a violation of the fundamental property of Dirac brackets, where two functions that are equal on-shell should have the same brackets (on-shell) with all other observables.
On the other hand, for a real shear-news function $\Gamma_{\lambda, \alpha}$, we have $\{\Gamma_{\lambda, \alpha}, \Gamma_{0,\eth^2\xi}^{\bb C}\}_D = \{\Gamma_{\lambda, \alpha}, \bar\Gamma_{0,\eth^2\xi}^{\bb C}\}_D$, as expected.
Since real observables are the only ones with physical relevance, it is still legitimate to use complexified brackets as a computational tool, despite their intrinsic inconsistencies, as long as only linear combinations producing real functions are considered. 
A similar contradiction can be derived for the conjectured Goldstone bracket of \cite{He:2014laa}, if one tries to derive it integrating the shear-news bracket.

The root of the improperness of the complexified ``observables'' can traced to the boundary conditions. The asymptotic limits of the shear must be purely electric, which translates into $\bar\eth^2\sigma_\pm = \eth^2\bar\sigma_\pm$. A phase space vector field $X$ respecting this boundary condition must satisfy $\bar\eth^2(X\sigma_\pm) = \eth^2(X\bar\sigma_\pm)$. For real flows we have $X\bar\sigma = \ol{X\sigma}$, so the boundary condition is respected provided that $X\sigma_\pm$ are purely electric. For the shear-news functions this required that $\alpha_\pm$ are purely electric. On the other hand, in the complexified phase space, $X\sigma$ and $X\bar\sigma$ need not to be related. In fact, the symplectic flow of $\Gamma_{\lambda, \alpha}^{\bb C}$ gives $X\sigma = 0$ and $X\bar\sigma = 4\pi\bar\alpha$; the boundary condition is only respected if $\eth^2\bar\alpha_\pm = 0$, which implies $\alpha_\pm = 0$. Thus, generic complexified shear-news functions, as defined in this section, tend not to respect the boundary conditions.

\subsection{Distributional brackets}
\label{SecDistbrackets}

We wish to define brackets between {\it localized} shears and news, such as 
\[
\{\sigma(x), N(x')\}_D \,,\quad \{N(x), \bar N(x')\}_D \,,\quad \{\sigma(x), \sigma(x')\}_D \,,\quad \text{\it etc...}
\]
These are to be understood as distributional objects, and can be used to compute proper brackets only when suitably smeared over. 
Although it would be entirely sufficient to work only with proper observables, it is useful to compute these distributional brackets for the purpose of comparison with the standard literature, where these often appear.

\subsubsection{Definition}
\label{SecDefDistB}

One might have thought that these brackets could possibly be defined by simply taking shear-news functions with Dirac delta smearing parameters. However, there are significant issues with such an approach.
First, the only reasonable manner to implement a smearing like $(\delta_x,0)$ would be to realize it as a sequence of smooth functions inside $\ca G_{TE}$, for example $(\lambda_n, 0)$ where $\lambda_n$ is a sequence of Gaussians that become sharper and sharper. But note that such a sequence cannot be constructed within $\ca G_{TE}$ since the integral condition \eqref{GTEintcond} would be violated,
\be
\int_{-\infty}^{+\infty}\!du'\, \delta_x(x') = \delta(\ang',\ang) \ne 0 
\ee
Consequently, $\sigma(x) = \Gamma_{\delta_x,0}^{\bb C}$ is not tangentially-extensible even in this limiting sense.
Second, one could think that at least the news-news brackets could be defined as
\be
\{N(x),\bar N(x')\}_D \stackrel{?}{:=} \{ \bar\Gamma_{0, \delta_x}^{\bb C}, \Gamma_{0, \delta_{x'}}^{\bb C}\}_D
\ee
However, as we will see in this section, that would fail to capture the non-local aspects of the Dirac brackets on $\scri$, since the Dirac delta is compactly supported.

The appropriate manner to proceed is to notice that these localized brackets are fundamentally distributional objects, and must always be treated ``under the integral sign'', with suitable smearing functions. 
Consider a complex smooth function $F$. In order for it to be regular (in the complex sense), given the symplectic structure \eqref{OmASNP}, it must have well-defined, smooth functional derivatives (without boundary terms),
\be
\delta F = \int\!\epsilon_\scri \left( \frac{\delta F}{\delta\sigma(x)}\delta\sigma(x) + \frac{\delta F}{\delta\bar N(x)}\delta\bar N(x) + \frac{\delta F}{\delta\bar\sigma(x)}\delta\bar\sigma(x) + \frac{\delta F}{\delta N(x)}\delta N(x) \right)
\ee
Moreover, similarly to \eqref{FTEcond}, $F$ is tangentially extensible (in the complex sense) if and only if
\ba
p \in \ca P_K &\mapsto \left(\frac{\delta F}{\delta\bar\sigma},\, \frac{\delta F}{\delta N}\right)\!\bigg|_p \in \ca G_{TE} \,\,\, \text{\it is smooth} \label{FGconstraints} \\
p \in \ca P_K &\mapsto \left(\frac{\delta F}{\delta\sigma},\, \frac{\delta F}{\delta \bar N}\right)\!\bigg|_p \in \bar{\ca G}_{TE} \,\,\, \text{\it is smooth}
\ea
Given two tangentially extensible functions, $F$ and $G$, we simply {\it define} distributions $\{\sigma(x), \sigma(x')\}_D$, etc, called {\it proper distributional brackets}, such that
\be\label{DistBracketsDef}
\{F,G\}_D = \int\!\int\!\epsilon_\scri \epsilon_{\scri'} \left( \frac{\delta F}{\delta\sigma(x)} \{\sigma(x), \sigma(x') \}_D \frac{\delta G}{\delta\sigma(x')} + \frac{\delta F}{\delta\sigma(x)} \{\sigma(x), \bar N(x') \}_D \frac{\delta G}{\delta\bar N(x')} + \cdots \right)
\ee
where the ``$\cdots$'' include all other analogous combinations of a functional derivative of $F$ and a functional derivative of $G$.
Similarly, we can define {\it directional distributional brackets} such that, for any tangentially extensible $F$ and smooth $G$, we have
\be\label{DirDistBracketsDef}
\{F;G\}_D = \int\!\int\!\epsilon_\scri \epsilon_{\scri'} \left( \frac{\delta F}{\delta\sigma(x)} \{\sigma(x); \sigma(x') \}_D \frac{\delta G}{\delta\sigma(x')} + \frac{\delta F}{\delta\sigma(x)} \{\sigma(x); \bar N(x') \}_D \frac{\delta G}{\delta\bar N(x')} + \cdots \right)
\ee
where the ``$\cdots$'' have the same meaning as before.
In both cases, we don't expect the procedure to give necessarily unique answers: in fact, the constraints \eqref{FGconstraints} may induce zero modes in this inversion.
Resolving these ambiguities is thus part of the definition of the distributional brackets.

Note that in the right-hand side of the expressions above, the functional derivatives only appear outside the brackets. Therefore, the phase space dependence of the functional derivatives is not relevant for determining the distributional brackets, and we can simply focus on brackets between (complex) shear-news functions
\ba
\{\Gamma_{\lambda, \alpha}^{\bb C},\Gamma_{\lambda', \alpha'}^{\bb C}\}_D &= \int\!\int\!\epsilon_\scri \epsilon_{\scri'} \left( \bar\lambda(x) \bar\lambda'(x') \{\sigma(x), \sigma(x') \}_D + \bar\lambda(x) \bar\alpha'(x') \{\sigma(x), \bar N(x') \}_D + \cdots \right) \no
\{\Gamma_{\lambda, \alpha}^{\bb C},\bar\Gamma_{\lambda', \alpha'}^{\bb C}\}_D &= \int\!\int\!\epsilon_\scri \epsilon_{\scri'} \left( \bar\lambda(x) \lambda'(x') \{\sigma(x), \bar\sigma(x') \}_D + \bar\lambda(x) \alpha'(x') \{\sigma(x), N(x') \}_D  + \cdots \right)
\ea
for all $(\lambda, \alpha),\,(\lambda', \alpha') \in \ca G_{TE}$. The same applies for directional brackets.

A complication in solving the equations above, for the distributional brackets, is that the space of smearings is a constrained product of two spaces, as in \eqref{GTEdef},
\be
\ca G_{TE} \simeq \ca F_1 \times_{\wt K} \ca F_0
\ee
It is thus convenient to decompose $\ca G_{TE}$ into two independent factors. Notice that every $(\lambda,\alpha) \in \ca G_{TE}$ can be written as
\be
(\lambda, \alpha) = (\rho, 0) + (\pu \alpha, \alpha)
\ee
with $\int\!du\,\rho(x) = 0$. Thus
\be
\ca G_{TE} \simeq \ca F_1^0 \times \ca F_0
\ee
where 
\be\label{caF10def}
\ca F_1^0 := \Big\{\rho \in \ca F_1;\,\, \text{\it such that } \int\!du\,\rho(x) = 0 \Big\} 
\ee
In terms of this parametrization of the shear-news functions, we have
\ba
\{\Gamma_{\rho+\pu\alpha,\alpha}^{\bb C}, \Gamma_{\rho'+\pu\alpha',\alpha'}^{\bb C}\}_D &= \Big\{\int\!du\epsilon_S \left( (\bar\rho(x) +\pu\bar\alpha(x))\sigma(x) +\bar\alpha(x)\bar N(x)\right), \no
&\qquad \int\!du'\epsilon'_S \left( (\bar\rho'(x') +\pup\bar\alpha'(x'))\sigma(x') +\bar\alpha'(x')\bar N(x')\right) \Big\}_D \no
&\!\!\!\!\!\!\!\!\!\!\!\!\!\!\!\!\!\!\!\!\!\!\!\!\!\!\!\!\!\!\!\!\!\!\!\!\!\!\!\!\!\!\!\!\!
= \int\! du du' \epsilon_S\epsilon'_S \left(
\bar\rho(x)\bar\rho'(x') \{ \sigma(x), \sigma(x')\}_D + \pu\bar\alpha(x)\bar\alpha'(x')\{\sigma(x),\bar N(x')\}_D + \cdots \right)
\label{GammaCGammaCdist0}
\ea
and
\ba
\{\Gamma_{\rho+\pu\alpha,\alpha}^{\bb C}, \bar\Gamma_{\rho'+\pu\alpha',\alpha'}^{\bb C}\}_D &= \Big\{\int\!du\epsilon_S \left( (\bar\rho(x) +\pu\bar\alpha(x))\sigma(x) +\bar\alpha(x)\bar N(x)\right), \no
&\qquad \int\!du'\epsilon'_S \left( (\rho'(x') +\pup\alpha'(x'))\bar\sigma(x') +\alpha'(x') N(x')\right) \Big\}_D \no
&\!\!\!\!\!\!\!\!\!\!\!\!\!\!\!\!\!\!\!\!\!\!\!\!\!\!\!\!\!\!\!\!\!\!\!\!\!\!\!\!\!\!\!\!\!
= \int\! du du' \epsilon_S\epsilon'_S \left(
\bar\rho(x)\rho'(x') \{ \sigma(x), \bar\sigma(x')\}_D +\pu\bar\alpha(x)\alpha'(x')\{\sigma(x), N(x')\}_D + \cdots \right)
\label{GammaCbarGammaCdist0}
\ea
This allows us to reduce the problem to a set of independent equations (one involving $\rho$ and $\rho'$, another involving  $\alpha$ and $\alpha'$, etc).

\subsubsection{Functional analysis}
\label{Secfuncanalysis}

In order to properly define these distributions, we must establish a solid functional theory for the spaces of smearings under consideration. Since it is essential that the smearing parameters $\alpha$ do not vanish asymptotically, the distributions will behave in some aspects differently than the distributions commonly used in physics (which are almost always based on compactly-supported or fast-decaying test functions).

The relevant spaces of smearings are
\be\label{restrictedsmearings}
\alpha \in \ca F_0 \,\,,\quad
\lambda \in \ca F_1 \,\,,\quad
\rho \in \ca F_1^0
\ee
which were defined respectively in \eqref{caF0def}, \eqref{caF1def} and \eqref{caF10def}.
These spaces have a natural Fr\'echet topologies, which we describe now. 
The space $\ca F_0$ can be given the topology associated with the sequence of seminorms 
\be\label{seminorms0}
|\!|f|\!|_{0,n} :=  \sum_{\scr D}^{|\scr D| \le n} \sup_{S^2} |\scr D f_-| + \sum_{k \ge 1, \scr D}^{k + |\scr D| \le n} \sup_\scri \left|(1+u^2)^{\frac{k}{2} + \varepsilon}\pu^k\scr D f \right|
\ee
where $f\in \ca F_0$, $n\ge 1$ and $|\scr D|$ is the number of derivatives in $\scr D \in \text{\rm words}(\eth, \bar\eth)$.\footnote{This is equivalent to the construction of \cite{Ashtekar:1981bq}, up to the minor modification mentioned earlier that we need to exclude $k=0$ in order to allow for memory.}
The space $\ca F_1$ can be given the quotient topology\footnote{Given a topological space $\ca F$ and a quotient map $q: \ca F \to \ca G$, the {\it quotient topology} on $\ca G$ is the finest topology which makes $q$ continuous. 
If $\ca F$ is a topological vector space whose topology is defined by a family of seminorms $|\!|f|\!|_{\ca F,n}$, then the quotient topology on $\ca G$ is defined from the family of seminorms $|\!|g|\!|_{\ca G,n} := \text{inf}_{q(f)=g}|\!|f|\!|_{\ca F,n}$~\cite{bourbaki2013topological}.
\label{quotienttopology}}
induced by the $u$-derivative map,
\be
\pu : \ca F_0 \to \ca F_1
\ee
This topology can be described in terms of the sequence of seminorms (see footnote~\ref{quotienttopology})
\be\label{seminorms1}
|\!|g|\!|_{1,n} = \inf_{\pu f = g} |\!|f|\!|_{0,n} = \sum_{k \ge 0, \scr D}^{k + |\scr D| < n} \sup_\scri \left|(1+u^2)^{\frac{k}{2} + \frac{1}{2}+ \varepsilon}\pu^k\scr D g \right|
\ee
where $g\in \ca F_1$ and $n\ge 1$.
Finally, the space $\ca F_1^0$ is given the topology it inherits as a subspace of $\ca F_1$. The seminorms characterizing its topology are thus the same as in \eqref{seminorms1}.
With the topologies above, the spaces $\ca F_0$, $\ca F_1$ and $\ca F_1^0$ are Fr\'echet.

These spaces satisfy the inclusion relation
\be\label{Aninclusion}
\ca F_0 \supset \ca F_1 \supset \ca F_1^0
\ee
and the inclusion maps are continuous.  However, note that the topology of $\ca F_1$ given above is not equal to the one it would inherit as a subspace of $\ca F_0$. 
Finally, we recall that these spaces are to be regarded as real vector spaces (of complex spin-weight $2$ functions). Accordingly, the conjugated smearing parameters (of spin-weight $-2$)  live in the respective complex-conjugated spaces
\be
\bar\alpha \in \bar{\ca F}_0 \,\,,\quad
\bar\lambda \in \bar{\ca F}_1 \,\,,\quad
\bar\rho \in \bar{\ca F}_1^0
\ee
There is a natural isomorphism between each space and its complex-conjugate.

Distributions are continuous linear functionals on a space of ``test functions''. 
Thus, distributions on a space of functions $\ca F$ are elements of $\ca F^*$, the topological dual of $\ca F$. 
Due to the inclusion property \eqref{Aninclusion}, the respective dual spaces satisfy an inverse inclusion relation\footnote{This inclusion property follows from the fact that any distribution in $\ca F_0^*$, for example, can be {\it restricted} to act on elements of $\ca F_1 \subset \ca F_0$, thereby defining a distribution in $\ca F_1^*$. As the inclusion map $i: \ca F_1 \to \ca F_0$ is continuous in the given topologies, the restriction map on distributions is the pull-back $i^*: \ca F_0^* \to \ca F_1^*$. One should not think that $\ca F_1^*$ contains ``more'' distributions than $\ca F_0^*$, as $i^*$ is not injective. So, the appropriate reading is only that ``every distribution in $\ca F_0^*$ also makes sense as a distribution in $\ca F_1^*$''.}
\be\label{Andualinclusion}
\ca F_0^* \subset \ca F_1^* \subset {\ca F_1^0}^*
\ee
Similarly, we have conjugated distributions in $\bar{\ca F}^*$ acting on $\bar{\ca F}$. (Since the definitions and properties of the conjugated distributions are analogous to those of the non-conjugated ones, we will only make statements about one or the other.)
We denote the action of a distribution $\Phi \in \ca F^*$ on elements $f \in \ca F$ by
\be
\la\Phi, f\ra = \la \Phi(x), f(x)\ra = \int\!\epsilon_\scri\, \Phi(x) f(x)
\ee
where the integral representation should be understood in a formal sense (since $\Phi(x)$ is not a function).\footnote{Note that we include the measure $\epsilon_\scri$ in the integration, in such a way that distributions will behave more like functions rather than densities. Since the measure is a background structure on $\scri$, this choice is merely for convenience. (More precisely, as the true background structure on $\scri$ is conformal, and $\epsilon_\scri$, $\alpha$ and $\lambda$ carry conformal weights $3$, $-1$ and $-2$, respectively, then distributions in $\ca F_0^*$ will actually carry conformal weight $-2$, while distributions in $\ca F_1^*$ will carry conformal weight $-1$.)
All the results in this section can be trivially converted to the language of densities by incorporating $\epsilon_\scri$ ``inside'' the distributions.}
The distributional brackets are actually {\it bidistributions}, since they need to be smeared with two ``test functions'', one depending on $x$ and the other on $x'$. For example, $\{N(x), \bar N(x')\}_D$ are smeared with $\alpha(x) \bar\alpha(x')$, and therefore should be regarded as (continuous) linear functionals on  the (algebraic) tensor product $\ca F_0 \otimes \bar{\ca F}_0$.\footnote{The {\it algebraic} tensor product $\ca F \otimes \ca G$ contains all {\it finite} linear combinations $f(x)g(x')$, with $f \in \ca F$ and $g \in \ca G$. This is all that is needed to compute brackets between observables. (A different concept, often encountered in Quantum Mechanics, is that of {\it topological} tensor products, including also a completion leading to more general functions of the form $\phi(x,x')$, which are not needed for our purposes.) The natural topology on $\ca F \otimes \ca G$ is the {\it projective tensor-product topology}, which is defined in terms of semi-norms
\[
|\!|W|\!|_{\ca F\otimes\ca G,n} := \inf_{W = \sum_i f_ig_i}\sum_i |\!|f_i|\!|_{\ca F,n} |\!|g_i|\!|_{\ca G,n}
\]
where $W\in \ca F\otimes\ca G$ and the topologies of $\ca F$ and $\ca G$ are defined by increasing sequences of seminorms. 
The 
dual $(\ca F\otimes\ca G)^*$ of $\ca F\otimes\ca G$ is defined with respect to this topology.
It follows that there is
a natural association between continuous linear maps $\ca F \otimes \ca G \to \bb R$ (i.e., elements of the dual) and
continuous bilinear maps $\ca F\times\ca G \to \bb R$
~\cite{treves2016topological}.}
We denote the space of such bidistributions on $\ca F\otimes\ca G$ by $(\ca F \otimes \ca G)^*$. 
The action of a distribution $\Phi \in (\ca F \otimes \ca G)^* $ on elements $fg \in \ca F \otimes \ca G$ is expressed as
\be
\la\Phi, fg\ra = \la \Phi(x,x'), f(x)g(x')\ra = \int\!\epsilon_\scri\! \int\!\epsilon'_\scri\,  \Phi(x,x') f(x) g(x')
\ee
Each distributional bracket will live on a suitable dual space, depending on the most general functions it needs to be smeared against. Namely,
\ba
\{\sigma(x), \sigma(x')\}_D &\in (\bar{\ca F}_1 \otimes \bar{\ca F}_1)^* \\
\{\sigma(x), N(x')\}_D &\in (\bar{\ca F}_1 \otimes \ca F_0)^* \\
\{N(x), \sigma(x')\}_D &\in (\ca F_0 \otimes \bar{\ca F}_1)^* \\
\{N(x), N(x')\}_D &\in (\ca F_0 \otimes \ca F_0)^*
\ea
and so on.

Now we describe some important distributions on the spaces $\ca F_0$, $\ca F_1$ and $\ca F_1^0$, highlighting their unfamiliar properties.
\begin{definition}
The ``asymptotic limits'' are two distributions in $\ca F_0^*$, denoted by $L_+$ and $L_-$, and defined as
\be
\la L_\pm(u), f(x)\ra = \int\!\epsilon_\scri\, L_\pm(u) f(x) := \int_S\!\epsilon_S\,f_\pm(\ang)
\ee
where $x = (u,\ang)$. Or, more simply,
\be
\int\!du\, L_\pm(u) f(x) = f_\pm(\ang)
\ee
\end{definition}
\noindent
So, essentially, $L_\pm$ ``picks'' the respective asymptotic limit of $f$. 
Note that $L_\pm$ can be restricted to $\ca F_1$ and $\ca F_1^0$, but it becomes the {\it zero} distribution on those spaces.

\begin{definition}
The $u$-derivative, $\fr D_u$, of distributions can be defined as a map
\be
\fr D_u : \ca F_1^* \to \ca F_0^*
\ee
such that, for $\Phi \in \ca F_1^*$, then $\fr D_u \Phi \in \ca F_0^*$ is given by
\be
\la\fr D_u\Phi, f(x)\ra := - \la \Phi, \pu f(x)\ra
\ee
\end{definition}
\noindent
The reason for using a different symbol for derivatives of distributions, as opposed to simply $\partial_u$, is because of the following peculiarity when functions are identified with distributions: 
\begin{lemma}
Any $\bar\alpha \in \bar{\ca F}_0$ defines a distribution in $\ca F_1^*$ given by
\be
\la\bar\alpha, \lambda\ra := \int\!\epsilon_\scri\, \bar\alpha(x) \lambda(x)
\ee
where the right-hand side is a standard integration of two functions.
The $u$-derivative of $\bar\alpha \in \ca F_1^*$ is a distribution in $\ca F_0^*$ given by
\be
\fr D_u\bar\alpha = \pu\bar\alpha - \bar\alpha_+ L_+ + \bar\alpha_- L_-
\ee
\end{lemma}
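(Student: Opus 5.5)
The lemma has two parts, and I will prove them in order: first that $\bar\alpha$ defines a continuous linear functional on $\ca F_1$, then the formula for $\fr D_u\bar\alpha$.

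For the first part, linearity of $\lambda\mapsto\int\!\epsilon_\scri\,\bar\alpha\lambda$ is immediate. Since elements of $\ca F_0$ are bounded (they have finite limits $f_\pm$ and bounded derivatives), I will bound
\[
\Big|\int\!\epsilon_\scri\,\bar\alpha\lambda\Big| \le \sup_\scri|\alpha|\int\!du\,\epsilon_S\,(1+u^2)^{-\frac12-\varepsilon}\,\sup_\scri\big|(1+u^2)^{\frac12+\varepsilon}\lambda\big| \le C_\alpha\, |\!|\lambda|\!|_{1,1}.
\]
This uses the $k=0$, $\scr D=1$ term of the seminorm \eqref{seminorms1} together with integrability of $(1+u^2)^{-1/2-\varepsilon}$. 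It gives both convergence of the integral and continuity in the Fréchet topology of $\ca F_1$. Finiteness of $\sup|\alpha|$ should follow from $\alpha = \alpha_- + \int_{-\infty}^u\pu\alpha$ with $|\pu\alpha|\lesssim(1+u^2)^{-1/2-\varepsilon}$, uniformly on $S^2$.

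For the second part, by definition $\la\fr D_u\bar\alpha,f\ra = -\la\bar\alpha,\pu f\ra = -\int\!\epsilon_\scri\,\bar\alpha\,\pu f$ for $f\in\ca F_0$, which is well defined because $\pu f\in\ca F_1$. I will integrate by parts in $u$ on a finite interval $[-U,U]$ at fixed $\ang$ and then let $U\to\infty$:
\[
-\int_{-U}^{U}\!du\,\bar\alpha\,\pu f = \int_{-U}^{U}\!du\,\pu\bar\alpha\, f - \big[\bar\alpha f\big]_{-U}^{U}.
\]
The boundary term converges to $\bar\alpha_+f_+ - \bar\alpha_-f_-$ uniformly in $\ang$, by the $C^\infty(S^2)$ convergence assumed for asymptotic limits (property 5). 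The bulk term converges because $\pu\bar\alpha\in\bar{\ca F}_1$ decays like $|u|^{-1-2\varepsilon}$ and $f$ is bounded. Integrating over $S^2$ then gives
\[
\la\fr D_u\bar\alpha,f\ra = \int\!\epsilon_\scri\,\pu\bar\alpha\,f - \int_S\!\epsilon_S\,\bar\alpha_+f_+ + \int_S\!\epsilon_S\,\bar\alpha_-f_-.
\]
Using the definition of $L_\pm$, with $\bar\alpha_\pm(\ang)$ acting as a $u$-independent multiplier, this is exactly $\la \pu\bar\alpha - \bar\alpha_+L_+ + \bar\alpha_-L_-, f\ra$.

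The main step needing care is that each of the three pieces is separately a continuous element of $\ca F_0^*$, so that the identity holds in $\ca F_0^*$ and not merely pointwise in $f$.
\begin{itemize}
\item $L_\pm$: continuity of $f\mapsto f_-$ follows from the $\sup_{S^2}|f_-|$ term in \eqref{seminorms0}. For $f_+$, I will write $f_+ = f_- + \int\!du\,\pu f$ and bound the integral by the $k=1$ seminorm.
\item $\pu\bar\alpha$: the bound $\int|\pu\bar\alpha||f|\le\|\pu\alpha\|_{L^1}\sup|f|$ reduces continuity to $\sup|f|\lesssim|\!|f|\!|_{0,1}$. This follows by the same $f=f_-+\int\pu f$ argument.
\end{itemize}
This is routine, so I expect the only real subtlety to be bookkeeping that the limits commute with the sphere integration, which the uniform convergence hypothesis handles.
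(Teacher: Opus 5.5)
Your proof is correct and follows the same route as the paper. Convergence of the pairing comes from the boundedness of $\bar\alpha$ together with the $u$-integrability of $\lambda$, and the derivative formula comes from a single integration by parts in $u$, which produces the boundary terms $-\bar\alpha_+ f_+ + \bar\alpha_- f_-$. The only difference is that you also check continuity with respect to the Fr\'echet seminorms on $\ca F_1$ and $\ca F_0$, which the paper leaves implicit.
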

\begin{proof}
Since $\bar\alpha$ is bounded and $\int\!du\,|\lambda(x)|$ is finite, the integral defining $\la\bar\alpha, \lambda\ra$ converges.
We have
\ba
\la \fr D_u \bar\alpha, \alpha'\ra = - \la \bar\alpha, \pu\alpha' \ra &= - \int\!\epsilon_\scri\, \bar\alpha(x) \pu\alpha'(x) \no
&= \int\!\epsilon_\scri\, \pu\bar\alpha(x) \alpha'(x) - \bar\alpha_+ \alpha'_+ + \bar\alpha_- \alpha'_- \no
&= \la\pu\bar\alpha - \bar\alpha_+ L_+ + \bar\alpha_- L_-, \alpha'\ra
\ea
for all $\alpha' \in \ca F_0$.
\end{proof}

\begin{definition}
The Dirac delta, $\delta(x,x')$, is a bidistribution which can be defined on all tensor products $\ca F_0 \otimes \ca F_1$,  $\ca F_1 \otimes \ca F_1$, etc, except for $\ca F_0 \otimes \ca F_0$. As usual,
\be
\la \delta(x,x'), f(x)g(x') \ra := \int\!\epsilon_\scri\, f(x)g(x)
\ee
\end{definition}
\noindent
On $\ca F_0 \otimes \ca F_0$ the Dirac delta is ill-defined since the integral would generally not converge.
Naturally, we can define $\delta(u,u')$ and $\delta(\ang,\ang')$ in the obvious way, and $\delta(x,x') = \delta(u,u')\delta(\ang,\ang')$.
The $u$- and $u'$-derivatives of the Dirac delta are well-defined on all tensor products, including $\ca F_0 \otimes \ca F_0$,
\ba
\la \fr D_u\delta(x,x'), f(x)g(x')\ra &= - \la \delta(x,x'), \pu f(x) g(x')\ra \\
\la \fr D_{u'}\delta(x,x'), f(x)g(x')\ra &= - \la \delta(x,x'), f(x) \pup g(x')\ra
\ea
It is worthwhile to highlight that, in $(\ca F_0 \otimes \ca F_0)^*$, we have
\[\fr D_{u'}\delta(u,u') \ne - \fr D_{u}\delta(u,u')
\]
In fact,
\ba
\la (\fr D_u + \fr D_{u'})\delta(x,x'), \alpha(x)\alpha'(x')\ra &= - \int\!\epsilon_\scri\left( \pu\alpha(x) \alpha'(x) + \alpha(x)\pu\alpha'(x) \right) \no
&= -\alpha_+ \alpha'_+ + \alpha_- \alpha'_-
\ea
so we conclude that
\be
\fr D_{u'}\delta(u,u') = - \fr D_{u}\delta(u,u') - L_+(u)L_+(u') + L_-(u)L_-(u')
\ee
It is convenient to define the symmetric and anti-symmetric parts of $\fr D_u\delta(u,u')$ as
\ba
\fr D_u^+\delta(u,u') &:= \frac{\fr D_u + \fr D_{u'}}{2}\delta(u,u') = -\frac{L_+(u)L_+(u') - L_-(u)L_-(u')}{2} \\
\fr D_u^-\delta(u,u') &:= \frac{\fr D_u - \fr D_{u'}}{2}\delta(u,u') = \fr D_u\delta(u,u') +\frac{L_+(u)L_+(u') - L_-(u)L_-(u')}{2}
\ea

\begin{definition}
The Heaviside bidistribution, $H(u,u')$, defined on any tensor product which does not include any factors of $\ca F_0$. On $\ca F_1 \otimes \ca F_1$, for example,
\be
\la H(u,u'), \lambda(x)\lambda'(x')\ra := \int\!\epsilon_\scri \int\!\epsilon'_\scri\, H(u,u')\lambda(x)\lambda'(x')
\ee
where, on the right-hand side, $H(u,u')$ is the function which equals $1$ if $u>u'$ and $0$ otherwise.
\end{definition}
\noindent
One can easily verify that $H(u,u') + H(u',u) = 1$ and
\be\label{DuHuu'}
\fr D_u H(u,u') = \delta(u,u') - L_+(u)
\ee
which also shows an unfamiliar behavior compared to distributions on compactly-supported test functions.

\subsubsection{Formulas for the brackets}
\label{DistBracketsFormulas}

Here we display the formulas for the distributional brackets, both in the proper (anti-symmetric) and directional forms. 
For the derivations, see App.~\ref{DistBracketsEval}.

The proper (anti-symmetric) distributional brackets can be used to compute Dirac brackets between tangentially extensible functions, according to \eqref{DistBracketsDef}. They are given by
\ba
\{\sigma(x), \bar\sigma(x')\}_D &= 2\pi\delta(\ang,\ang') \left( H(u,u') - H(u',u)\right) \\
\{\sigma(x), N(x')\}_D &= 4\pi \delta(x,x') + 2\pi\delta(\ang,\ang') L_-(u') \\
\{\bar N(x),N(x')\}_D &= -4\pi\,\fr D_u^-\delta(x,x') \\
\{\sigma(x),\sigma(x')\}_D &= 0 \\
\{\sigma(x),\bar N(x')\}_D &= 0 \\
\{\bar N(x),\bar N(x')\}_D &= 0
\ea
and all other brackets can be computed by antisymmetry and reality.
We note that the distributional brackets cannot be uniquely determined, due to the constraints satisfied by the allowed smearings. In App.~\ref{DistBracketsEval} we characterize the most general solution, and the expressions above can be regarded as the ``minimal choice''.
Remarkably, the shear-shear bracket matches the naive one that can be deduced ignoring boundary terms, see e.g. \cite{He:2014laa,Ashtekar:2018lor}. We do not have any intuitive reason as to why, but it is in any case clear that it would not be part of a consistent algebra without the non-local corrections to the other brackets (nor it affects in any way the statement that the localized shear is not an observable).

The directional distributional brackets can be used to compute directional brackets where the first entry is a tangentially extensible function and the second is a smooth function, according to \eqref{DirDistBracketsDef}. They are given by
\ba
\{\sigma(x); \bar\sigma(x')\}_D &= - 4\pi \delta(\ang,\ang')H(u',u) \\
\{\bar N(x); \bar\sigma(x')\}_D &= -4\pi\delta(x,x') - 4\pi \delta(\ang,\ang') L_-(u) \\
\{\sigma(x); N(x')\}_D &= 4\pi \delta(x,x') \\
\{\bar N(x); N(x')\}_D &= -4\pi\, \fr D_u\delta(x,x') \\
\{\bar\sigma(x); \bar\sigma(x')\}_D &= 0 \\
\{\bar\sigma(x); N(x')\}_D &= 0 \\
\{ N(x); \bar\sigma(x')\}_D &= 0 \\
\{ N(x); N(x')\}_D &= 0
\ea
and all other brackets can be computed by reality. Again, a ``minimal choice'' was made and the general solution is described in App.~\ref{DistBracketsEval}.

As a check of consistency, it is interesting to rederive the action of the supertranslation charge on the localized shear. We have
\ba
\{F_T^\text{\it hard}; \sigma(x)\}_D &= -\frac{1}{4\pi} \int\!\epsilon_\scri'\, T(x') \{N(x')\bar N(x'); \sigma(x)\}_D \no
&= -\frac{1}{4\pi} \int\!\epsilon_\scri'\, T(x') \left(N(x')\{\bar N(x'); \sigma(x)\}_D + \bar N(x')\{N(x'); \sigma(x)\}_D \right) \no
&= -\frac{1}{4\pi} \int\!\epsilon_\scri'\, T(x') \bar N(x') \left( -4\pi\delta(x',x) - 4\pi \delta(\ang',\ang) L_-(u') \right) \no
&= T(x) \bar N(x) \\
\{F_T^\text{\it soft}; \sigma(x)\}_D &= -\frac{1}{8\pi} \int\!\epsilon_\scri'\, \eth^2T(x') \{N(x'); \sigma(x)\}_D + \frac{1}{8\pi} \int\!\epsilon_\scri'\, \bar\eth^2T(x') \{\bar N(x'); \sigma(x)\}_D \no
&= -\frac{1}{8\pi} \int_\scri\!\epsilon_\scri'\, \eth^2T(x') \left( -4\pi\delta(x',x) - 4\pi \delta(\ang',\ang) L_-(u') \right) \no
&= \frac{1}{2} \eth^2\!\left(T + T_-\right) \big|_x \no
&=  \eth^2T(x) \label{TsoftDist}
\ea
where, in the last line, we used that $T$ is time-independent (so $T = T_-$). This agrees with \eqref{Txisigmax}.

A point to emphasize is that, even though in the beginning of this section we mentioned that in principle $\{N(x); \,\cdot\,\,\}_D$ could be defined by replacing $N(x)$ with $\bar\Gamma_{0,\delta_x}^{\bb C}$, our formulas show that this would not be valid.
By taking $\alpha = \delta_x$ before computing the brackets, one loses all information about the asymptotic behavior of $\alpha$, and consequently fails to capture the non-local contributions to the brackets. In particular, if one were to compute $\{N(x'); \sigma(x)\}_D$ as $\{\bar\Gamma_{0,\delta_{x'}}^{\bb C}; \Gamma_{\delta_x,0}^{\bb C}\}_D$, they would obtain $\tfrac{1}{2}\eth^2T(x)$ instead of \eqref{TsoftDist}, for $T_-$ would be missed \cite{He:2014laa}.

\section{Discussion}
\label{SecDiscussion}

In this paper we have developed a careful analysis of the algebra of proper observables on the Ashtekar-Streubel phase space of radiative gravitational degrees of freedom, with boundary conditions that account for distinct memory states. 
With this, we believe to have provided a more solid framework on which questions related to the gravitational dynamics at null infinity can be more clearly posed and, perhaps, answered. 
In particular, since quantization fundamentally refers to the algebra of proper observables, the results from this work may offer some insights into the quantum memory problem. 
In this discussion, we recapitulate the main conclusions from the paper and comment on some of their aspects and possible implications.

As we have emphasized, in an infinite dimensional phase space it is of essential importance to take into account the weak non-degeneracy of the symplectic form.
In particular, since the Ashtekar-Streubel phase space has second-class constraints, $\bar N = - \pu\sigma$, the algebra of {\it physical} observables is associated with the Dirac brackets.
In this paper, we have proposed a rigorous procedure to compute Dirac brackets in field theories subjected to an infinite set of second-class constraints, including non-trivial boundary conditions. The procedure starts by defining the largest set of (kinematically) regular constraints via suitable smearings of the local constraints. Then, given any regular function on the kinematical phase space, one 
extends it (if necessary and possible) by adding a regular constraint in such a way that the resulting symplectic flow is tangent to the constraint surface. Functions for which such an extension is possible are said to be tangentially extensible, and Dirac brackets between such functions are defined as Poisson brackets between their tangential extensions. 
As we have clarified, the Ashtekar-Streubel phase space has the peculiar feature that not all regular functions are tangentially extensible.
In fact, for the shear-news functions, $\Gamma_{\lambda, \alpha}$, the smearing parameters were subjected to a ``dual'' constraint, $\int\!du\, \lambda = \alpha_+ - \alpha_-$. 
This procedure allows us to compute the Dirac bracket between any
tangentially extensible shear-news functions.

Two key results of our constructions are that we
clarified the status of the (localized) shear as an observable, and how to recover
the correct action of supertranslations.
We showed that the localized shear, $\sigma(x)$, is not tangentially extensible even in a non-smooth sense, since $(\lambda,\alpha) = (\delta_x,0)$ does not satisfy the dual constraint ($1\ne 0$). That is, there is not even a ``distributional'' symplectic flow associated with $\sigma(x)$, and consequently there is no consistent way to define its Dirac brackets with other observables.
This explains why computations based on a bracket like \eqref{NNbra}, but involving the shear, is ill-motivated.
In particular, 
the Dirac bracket ``$\{F_T, \sigma(x)\}_D$'' is not well-defined, and one should not expect to obtain the correct action of the supertranslation on the shear in this manner. 
Nonetheless, we have explained that is possible to define ``directional brackets'', where only one entry is a proper observable. Applying this concept to $\{F_T; \sigma(x)\}_D$, we obtained the correct  action of the supertranslation on the shear,  with the right ``factor of $2$''. 
This result crucially relies on properly including non-local aspects associated with the tangential extensions into the Dirac brackets.
Namely, we found that $\{F_T^\text{\it soft}; \sigma(x)\} = - \frac{1}{2}\eth^2(T(x) + T_-(\ang))$, which contains the non-local term $T_-$.
It is the time-independence of $T$ that implies $T + T_- = 2T$, and recovers the correct ``factor of $2$''.
Note that any attempt to compute this directional brackets based on first evaluating brackets between local fields, like $\{N(x'); \sigma(x)\}$, will fail to capture such non-local aspects. In particular, if one represents the soft term action by first smearing
$N(x')$ with $\eth^2 \delta(x',x'')$ and then evaluate the brackets, the non-local term would vanish since the Dirac delta is compactly supported. Proceeding with integration against $T(x'')$ would thus give only $\{F_T^\text{\it soft}; \sigma(x)\} = - \frac{1}{2}\eth^2T(x)$, as the term $T_-$ went missing.
This provides a concrete understanding of the missing ``factor of $2$'' in \cite{He:2014laa} from the perspective of the proper Dirac brackets.

One of the main results of this paper was to identify the complete algebra of proper (gravitational) observables at $\scri$, which includes non-central memory, with the given boundary conditions. The algebra,
$\scr A$, is characterized by all tangentially extensible functions on the Ashtekar-Streubel phase space. The Dirac analysis is essentially analogous to that applied for shear-news functions. In particular, the condition for tangential extensibility becomes simply that the map from the constrained phase space to the pair of functional derivatives in $\bar\sigma$ and $N$ lands into $\ca G_{TE}$ and is smooth.
An important result is that the only functionals of the ``soft modes'' (i.e., asymptotic values of the shear) which are equal, on-shell, to proper observables are functionals of the memory mode alone.
This means that the Goldstone mode cannot be directly measured by any proper observables. Accordingly, since Dirac brackets involving them are technically ill-defined, any results that rely on some ``ad hoc'' prescription for their brackets must be taken with reasonable caution.
While the Goldstone mode cannot be directly made into an observable, we have described a large family of proper observables, the Goldstone probes, that play a similar role. Namely, they are symplectically conjugated to the memory (albeit not in a canonical way) and are thus capable of indirectly measuring the Goldstone mode.
In this manner, $\scr A$ is not blind to the presence of the Goldstone mode, and any physically meaningful information about this mode is contained in the algebra. 
In the paper, we also discussed sequences of Goldstones probes that ``approximate'' the Goldstone mode. As expected, these sequences do not converge to proper observables. At decreasing levels of ``regularity'', they would first converge  to a ``dressed shear'', $\sigma(x) + \int_u^\infty\!du'\, \bar N(u',\ang)$, which can be thought of as a ``generalized observable'' like the localized news (i.e., associated with a non-smooth flow that formally satisfies the dual constraint), and second they would converge to $\sigma_+$ as $u\to \infty$, which is completely irregular, just like the undressed local shear. 
Consequently, even if one could naturally define a ``dressed Goldstone'' as a limit of the ``dressed shear'', there is no consistent way to define brackets for such quantities.
We made this point particularly explicit by showing
that different sequences of proper observables with the same ``$C^0$ limits'' representing the Goldstone mode lead to different limits for the Dirac brackets.

Let us also add some comments on the complex and distributional brackets that we introduced in this paper. Those are extensions of the Poisson (or Dirac) brackets to include complex functions and ``localized'' fields, to be understood as tools for computing brackets of real, proper observables. 
The complex brackets were defined simply as an algebraic extension of proper brackets to complex-valued functions. We stress that they are not regular brackets since they generally do not preserve the boundary conditions of purely electric shears. 
Regarding the distributional brackets, it is interesting to point out that there are two fundamentally different notions which are often conflated. For illustration, consider a standard scalar field theory whose phase space contains fields $\phi(x)$ and its conjugated momentum $\pi(x)$, with symplectic form $\Omega = \int\!dx\,\delta\pi(x)\,\delta\phi(x)$.
Let $\Phi_f := \int\!dx\, f(x)\phi(x)$ and $\Pi_g := \int\!dx\, g(x)\pi(x)$ be proper smearings of the fields, where $f$ and $g$ fall off sufficiently fast towards the boundaries. The Poisson brackets would be $\{\Phi_f, \Pi_g\} = \int\!dx\, f(x)g(x)$, and the two notions for distributional brackets are the following:
\begin{enumerate}
\item Define $\{\phi(x), \pi(x')\} := \{\Phi_{\delta_x}, \Pi_{\delta_{x'}}\} = \delta(x,x')$. This is equivalent to attempting to derive $\{\phi(x), \pi(x')\}$ directly from the symplectic form;

\item Define $\{\phi(x), \pi(x')\}$ as distributional objects that allows one to recover the proper brackets by integrating functional derivatives against them. Namely, given two proper observables $F$ and $G$, let 
\[
\{F,G\} = \int\!dx\int\!dx'\, \frac{\delta F}{\delta \phi(x)} \{\phi(x), \pi(x')\} \frac{\delta G}{\delta \pi(x')} + \cdots
\]
In this case, it follows that $\{\phi(x), \pi(x')\} = \delta(x,x')$.
\end{enumerate}
Notice that the two notions give the same result for the bracket $\{\phi(x), \pi(x')\}$, equal to $\delta(x,x')$, and this happens in many familiar situations. 
However, in the case of $\scri$ they would disagree, since notion $1$ would fail to capture the non-local aspects of the brackets, as mentioned before.
We advocate that notion $2$ is the most useful one, and that is how we defined the distributional brackets between shears and news. 
In fact, we obtain boundary contributions associated with smearings that do not vanish asymptotically.
With these contributions naturally incorporated, it is possible to re-obtain the correct action of the supertranslations on the shear. 
We suggest that any results in the literature derived from ``localized'' brackets (defined from notion $1$) should be reviewed in light of the corrected form of the distributional brackets.

Finally, it would also be valuable to study in greater mathematical detail the method for constructing Dirac brackets for infinite-dimensional systems introduced here, and understand under which conditions, topologies, etc, it would apply for any given infinite dimensional phase space. For the purposes of this paper, we contented ourselves with establishing its consistency with the results obtained from the reduced phase space.

The results presented in this paper apply to the BMS group and purely electric boundary conditions at early and late times. Our methods can also be applied to different boundary conditions, or larger symmetries corresponding to weaker fall-off conditions.
Let us mention some possible directions of further developments. It has been recently shown that the symplectic center-less representation of the BMS algebra holds also for the partial flux between two arbitrary cross sections \cite{Rignon-Bret:2024gcx,Rignon-Bret:2024wlu}, thanks to the general covariance of the Ashtekar-Streubel symplectic potential \cite{Wald:1999wa,Odak:2022ndm}. It would then be interesting to apply our methods to the partial radiative phase space between two cross sections, where the non-trivial boundary conditions may now include a non-vanishing news, in order to have a well-defined notion of Dirac bracket also in this context. Larger symmetries, such as eBMS \cite{Barnich:2010ojg,Barnich:2011mi,Rignon-Bret:2024gcx}, gBMS \cite{Campiglia:2014yka,Compere:2018ylh} or BMSW \cite{Freidel:2021yqe}, all require corner improvements of the symplectic form, as well as potentially different boundary conditions. These are likely to give additional non-local boundary contributions to the distributional brackets, as proposed for instance in \cite{Campiglia:2024uqq} for gBMS. Our methods can be applied also to these larger symmetries, and it would be interesting to do so to understand their algebra of proper observables, especially since these larger symmetries are also motivated by quantum investigations. For further discussions on the hard/soft factorization of the phase space, see e.g.
\cite{Himwich:2020rro,Arkani-Hamed:2020gyp,Campiglia:2021bap,Campiglia:2020qvc,Guevara:2021abz,Donnay:2022hkf,Rignon-Bret:2024mef,Ciambelli:2024swv}.

Our classical analysis of the algebra of observables led to some insights into the quantum memory problem. 
The attempts so far to quantize the radiative gravitational degrees of freedom have failed to incorporate states with arbitrary memory within a single, separable Hilbert space (e.g. \cite{Ashtekar:1981sf,Ashtekar84book,He:2014laa,Strominger:2017zoo,Ashtekar:2018lor,Prabhu:2022zcr}; see also \cite{Himwich:2020rro,Arkani-Hamed:2020gyp,Campiglia:2021bap,Guevara:2021abz,Donnay:2022hkf,Freidel:2022skz,Riello:2023ptb,Rignon-Bret:2024mef,Ciambelli:2024swv,Prabhu:2024zwl,Prabhu:2024lmg,Agrawal:2025bsy,Donnay:2026urd}). 
However, they usually contain at least one of the following features:
\begin{enumerate}
\item The representation is constructed out of the algebra of the shear {\it or} news only;
\item A Fock space construction based on a split of positive/negative frequencies, based on a Fourier decomposition;
\item A quantization based on the algebra of hard modes plus the ({\it ad hoc}) algebra of soft modes, which includes the Goldstone mode.
\end{enumerate}
Each of these features leads to a shortcoming, either technical or conceptual, if one accepts that only proper observables should be quantized.
Regarding point 1, any algebra of proper observables constructed only out of the shear or the news, 
e.g. $\{\Gamma_{\lambda,0}\}$ or $\{\Gamma_{0,\alpha}\}$, will not contain any Goldstone probes because of the dual constraint. Therefore all memory observables will belong to the center of the algebra, and by Schur's lemma the memory will be proportional to the identity in any (complex) irreducible representation. 
Point 2 suffers from the same issue,
as it is necessary to restrict to smearing parameters that fall off sufficiently rapidly as $u\to\pm\infty$, 
thereby excluding all Goldstone probes from the algebra.
Point 3 fails to take seriously the principle that only proper observables should be quantized, since the Goldstone mode (as usually defined) is not a proper observable, casting doubt in the conceptual foundation of such approaches.

It has been proposed \cite{Prabhu:2022zcr} that perhaps one should abandon the goal of constructing a Hilbert space and instead be content with only an algebraic notion of the $S$-matrix. However, we believe that until the points above are carefully addressed in the quantization, the idea of a standard $S$-matrix, mapping between separable Hilbert spaces, should not be entirely discarded.
In canonical quantization, it is generally necessary to restrict to a minimal (or quasi-minimal) subalgebra of observables, to avoid Groenewold-Van Hove obstructions.
We have shown that the subalgebras consisting of shear-news functions $\Gamma_\alpha := \Gamma_{\pu\alpha, \alpha}$, with $\alpha \in \ca F_0$, are complete, and they can even be restricted further to subalgebras where $\alpha \in \ca F_0^\pm$ (i.e., where $\alpha_+$ {\it or} $\alpha_-$ vanishes).
We do not claim that it is possible to construct a separable Hilbert space by quantizing such subalgebras, but it would be interesting to pursue this idea further.

As a preliminary step in this direction, let us observe the following.
In Fock quantization, one usually constructs the $1$-particle Hilbert space from a split of the solution space into positive and negative frequencies, and let the vacuum be annihilated by all negative frequency modes, and particles be created by acting on the vacuum with the positive modes. 
The main obstruction here is that such
a negative/positive frequency split is defined from the Fourier transform, which requires one to work in a space of functions that fall off sufficiently rapidly at infinity, thereby excluding all Goldstone probes.
Nevertheless, it is understood 
\cite{Ashtekar:1980yw,Najmi:1984vn,Najmi:1985omb} that the fundamental ingredient in Fock quantization is a K\"ahler structure on the space of solutions (or, in our case, on the space of smearings), turning the (anti-symmetric) symplectic form appearing in the commutation relations,
\[
\{\Gamma_{\a}, \Gamma_{\a'}\}_D = 4\pi\,\re\!\int\!\eps_\scri\left(\a' \pu\bar\a - \bar\a \pu\a'\right) =: \omega(\a,\a')
\]
into a (positive-definite) inner-product for the Hilbert space.\footnote{The negative/positive frequency split is just a particular realization of a complex structure that does this job in the usual scenarios.}
Thus, if there exists a complex structure $J$ on $\ca F_0$ (or $\ca F_0^\pm$) compatible with $\omega$, one can define annihilation and creation operators
\[
a_\a^{} = \frac{1}{\sqrt{2}}\left(\Gamma_\a + i \Gamma_{J\a}\right) \,,\,\quad  a_\a^\dag = \frac{1}{\sqrt{2}}\left(\Gamma_\a - i \Gamma_{J\a}\right)
\]
and it would follow that $\la\a|\a\ra = \omega(\a,J\a) \ge 0$, where $|\a\ra := a_\a^\dag|0\ra$ and $a_\a|0\ra = 0$.
To ensure invariance of the vacuum (up to a phase) under global time translations, 
it seems necessary (assuming no anomalies in the commutator of the Bondi flux, $F_1$, with the shear-news functions) to require also compatibility with $u$-differentiation, $[\pu, J] = 0$.
This may pose a real obstruction to the existence of such a complex structure. But, if such a $J$ exists, then it is promising that this proposal would imply that the Bondi flux is bounded from above (also assuming no anomalies).

\subsection*{Acknowledgements}
We would like to thank Andrew Strominger and Temple He for bringing up the question of the ``problem of $2$'' during a seminar, and Abhay Ashtekar, Glenn Barnich, Laurent Freidel, Ted Jacobson and Simon Langenscheidt for discussions on the topics of this paper.
This work was supported in part by Perimeter Institute for Theoretical Physics. 
Research at Perimeter Institute is supported in part by the Government of Canada through the Department of Innovation, Science and Economic Development and by the Province of Ontario through the Ministry of Colleges and Universities.

\appendix
\section{Dirac brackets in infinite dimensions}
\label{AppDirbrackets}

In this appendix we propose a rigorous method for computing Dirac brackets on infinite dimensional phase spaces, like those of field theories, subjected to an infinite set of (second-class) constraints.
While it is directly inspired by a well-known formula in finite dimensions, we have not seen in the literature its enunciation in this form, for infinite dimensional phase spaces.

Let us first review the finite dimensional case, with a finite number of constraints $K_i$, $i \in \{1,2,\ldots s\}$. This set of constraints is said to be of \emph{second class} if their Poisson brackets matrix,
\be
\Delta_{ij} := \{K_i, K_j\}
\ee
is invertible (on shell). 
Then the Dirac brackets
between two smooth phase space functions $F$ and $G$ is defined by
\be\label{DBdiscrete}
\{F, G\}_D := \{F, G\} - \sum_{ij} \{F, K_i\} \Delta^{-1}_{ij} \{K_j, G\}
\ee
This bracket has all the typical properties of a Poisson bracket, and it is designed so as to ``project out'' the flow generated by the constraints away from the constraint surface, i.e., $\{F, K_i\}_D = 0$ for all $F$ and constraint $K_i$. This means that the Dirac brackets depend only on the on-shell value of its arguments, i.e., if $F \doteq F'$ 
then $\{F, G\}_D \doteq \{F', G\}_D$ for all $G$. In fact, it is true that $\{F, G\}_D$ is equal, on-shell, to the Poisson brackets associated with the induced symplectic form on the constraint surface --- this is the defining property of the Dirac brackets.

In the infinite dimensional case the constraints may be labeled with a collection of discrete and continuum indices, $K_{i\alpha}$, and the ``naive'' generalization of the Dirac brackets, for smooth functions $F$ and $G$ would take the form
\be\label{Diracnaive}
\{F, G\}_D = \{F, G\} - \sum_{ij} \int\! d\alpha\! \int\! d\beta\, \{F, K_{i\alpha}\} \Delta^{-1}_{i\alpha,j\beta} \{K_{j\beta}, G\}
\ee
where $\Delta^{-1}_{i\alpha,j\beta}$ must be interpreted as a Green's function for the differential equation related to the operator $\Delta_{i\alpha,j\beta} = \{K_{i\alpha},K_{j\beta}\}$. 
There are, however, some subtleties with this formula. First, it is necessary to determine what is the correct measure used in the ``sum'' over $\alpha$ and $\beta$.
Second, it is necessary to take into careful account the boundary conditions used for the Green's function $\Delta^{-1}_{i\alpha,j\beta}$.
Moreover, there is a serious obstruction in making \eqref{Diracnaive} even well-defined.
In infinite dimensions, the symplectic form is usually only {\it weakly} non-degenerate, so not all smooth functions on the phase space can be associated with a (smooth) Hamiltonian vector field, and therefore their Poisson brackets cannot be defined from the symplectic structure. In particular, $\Delta_{i\alpha,j\beta}$ and  $\{F,K_{i\alpha}\}$ are often ill-defined. 
In order to construct {\it regular} constraints it is necessary to smear $K_{i\alpha}$ with suitable functions $f(i\alpha)$, defining $K_f := \sum_i\int\! d\alpha \, f(i\alpha) K_{i\alpha}$. For a field theory, $\alpha$ may label points in a Cauchy slice, and the $f$'s may need to satisfy special boundary (or asymptotic) conditions in order for $K_f$ to be a regular constraint. 

The cleanest way to deal with these complications is to note a convenient, and well-known \cite{henneaux1992quantization}, reformulation of formula \eqref{DBdiscrete}. Consider the following:

\begin{theorem}
Let $F$ and $G$ be regular phase space functions and $\{K_i,\, i=1,2,\ldots s\}$ be a finite set of regular second class constraints. Then there exists a differentiable function $\wt F$, called the ``tangential extension of $F$'', such that $\wt F \doteq F$ and 
\be
\{\wt F, K_i\} \doteq 0\,,\,\, \text{for all $i$}
\ee
Moreover, it follows that
\be
\{F, G\}_D \doteq \{\wt F, \wt G\} \doteq \{\wt F, G\} \doteq \{F, \wt G\}
\ee
\end{theorem}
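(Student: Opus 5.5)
We prove the finite-dimensional statement by solving the linear system that defines the tangential extension, and then checking that the resulting bracket reproduces the standard formula \eqref{DBdiscrete}.

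\textbf{Construction of $\wt F$.} We make the ansatz
\[
\wt F := F + \sum_i c_i K_i ,
\]
with coefficients $c_i$ to be determined as smooth functions on phase space. On the constraint surface $K_i \doteq 0$, so $\wt F \doteq F$ regardless of the choice of $c_i$. Taking the bracket with $K_j$ gives
\[
\{\wt F, K_j\} = \{F,K_j\} + \sum_i c_i \Delta_{ij} + \sum_i K_i\{c_i,K_j\} .
\]
The last sum vanishes on-shell, since each term carries a factor $K_i$. It therefore suffices to solve $\sum_i c_i\Delta_{ij} \doteq -\{F,K_j\}$. Second class means that $\Delta$ is invertible on-shell. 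By continuity it is then invertible on a neighbourhood of the constraint surface, and there $\Delta^{-1}$ is smooth. We set
\[
c_i = -\sum_j \{F,K_j\}\Delta^{-1}_{ji} .
\]
To get a function defined on all of phase space, we multiply by a bump function equal to $1$ near the constraint surface. This changes nothing on-shell. Regularity of $\wt F$ follows because sums and products of regular functions are regular in finite dimensions.

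\textbf{Bracket identities.} Expanding $\{\wt F, G\}$ with the Leibniz rule, every term in which the bracket hits $c_i$ keeps a factor $K_i$ and so vanishes on-shell. What remains is
\[
\{\wt F, G\} \doteq \{F,G\} + \sum_i c_i\{K_i,G\} = \{F,G\} - \sum_{ij}\{F,K_j\}\Delta^{-1}_{ji}\{K_i,G\} ,
\]
which is exactly $\{F,G\}_D$ as defined in \eqref{DBdiscrete}. By antisymmetry of both sides, the same argument gives $\{F,\wt G\} \doteq \{F,G\}_D$.

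For $\{\wt F,\wt G\}$, write $\wt G = G + \sum_j d_j K_j$. Then
\[
\{\wt F,\wt G\} = \{\wt F, G\} + \sum_j d_j\{\wt F, K_j\} + \sum_j K_j \{\wt F, d_j\} .
\]
The second sum vanishes on-shell by construction of $\wt F$, and the third because of the factor $K_j$. Hence $\{\wt F,\wt G\} \doteq \{\wt F,G\}$.

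\textbf{Main obstacle.} The only delicate step is making $c_i$ smooth off-shell. I will handle it by using the open invertibility locus of $\Delta$ together with the cutoff. Everything else is Leibniz-rule bookkeeping and uses only the on-shell values of the brackets.
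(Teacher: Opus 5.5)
Your proof is correct and follows essentially the same route as the paper: an extension $\wt F = F + \sum_i c_i K_i$ with $c_i = -\sum_j \{F,K_j\}\Delta^{-1}_{ji}$ obtained by inverting $\Delta_{ij}$, followed by Leibniz-rule bookkeeping for the bracket identities. The differences are minor. The paper first shows that every $\wt F \doteq F$ has this form, using coordinates adapted to the regular constraints, so it also records the freedom up to terms quadratic in the $K_i$; you simply posit the ansatz, which suffices for existence. In addition, you handle off-shell smoothness with the neighbourhood-plus-cutoff argument, and you obtain $\{F,\wt G\}$ by antisymmetry, which implicitly uses that $\Delta^{-1}$ is antisymmetric.
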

\begin{proof}
Since the set of constraints is regular (i.e., any point in the constraint surface has a phase space neighborhood  that can be covered with coordinates $\{K_i\}\cup \{\chi_\mu\}$, such that the constraint surface is the level set $K_i = 0$, for all $i$, and $\{\chi_\mu\}$ are transverse coordinates), then $\wt F \doteq F$ implies that 
\be
\wt F = F + \varphi_i K_i
\ee
for some set of functions $\varphi_i$ (with the convention of summing over repeated indices). We wish to solve $\{\wt F, K_j\} \doteq 0$, which is equivalent to
\be
\{\wt F, K_j\} = \mu_{jk} K_k
\ee
for some set of functions $\mu_{jk}$. Thus we have
\be
\{F + \varphi_i K_i, K_j\} = \{F, K_j\} + \varphi_i \{K_i, K_j\} +  \{\varphi_i, K_j\} K_i = \mu_{jk} K_k
\ee
but since $\{K_i, K_j\} = \Delta_{ij}$ is invertible we have
\be
\varphi_l = - \{F,K_j\}\Delta^{-1}_{jl} + \mu'_{jk}K_k \Delta^{-1}_{jl}
\ee
where $\mu'_{jk} := \mu_{jk} - \{\varphi_k,K_j\}$ is some function. Therefore the tangential extension of $F$ always exists and is given explicitly by
\be
\wt F = F - \{F,K_i\}\Delta^{-1}_{ij} K_j + \text{\it ``terms quadratic in $K_i$''}
\ee
Note that the quadratic terms in $K_i$ are irrelevant in any on-shell computation of Poisson brackets. Finally,
\be
\{\wt F, G\} = \{F - \{F,K_i\}\Delta^{-1}_{ij} K_j, G\} \doteq \{F, G\} - \{F,K_i\}\Delta^{-1}_{ij} \{K_j, G\} = \{F, G\}_D
\ee
Moreover, since $\{\wt F, K_i\} = 0$, then $\{\wt F, G\} \doteq \{\wt F, \wt G\}$,
proving the theorem.
\end{proof}

This alternative expression for the Dirac brackets has a great advantage in the generalization for field theories: there is no sum over continuous indices appearing in the final formula, there is no explicit mention of $\Delta^{-1}$ and, most importantly, it is naturally adapted for working with smeared constraints. We thus propose the following:

\begin{proposal}
Consider a phase space $\mathcal P$ subjected to a collection of regular constraints $K_\varphi$ whose labels $\varphi$ are elements of a given linear space of functions $\ca K$.\footnote{More precisely, in this language the constraints are described by a linear map $C : \ca K \rightarrow C^\omega(\mathcal P, \mathbb R)$, where $C^\omega(\mathcal P, \mathbb R)$ is the space of symplectically differentiable real functions on $(\mathcal P; \omega)$, meaning that $\delta K_\varphi = - i_{X[\varphi]}\omega$ has at least one regular vector field solution $X[\varphi]$.} We say that the constraints are ``second class'' if 
\be\label{secondclass}
\{K_\varphi, K_{\varphi'}\} \doteq 0\,\text{ for all $\varphi' \in \ca K$} \quad \Longleftrightarrow \quad \varphi = 0
\ee
Let $F$ be a regular function on the phase space. If possible, define the ``tangential extension of $F$'', denoted by $\wt F$, as the function given by
\be
\wt F = F + K_\varphi
\ee
where $\varphi \in \ca K$, if it exists, is the unique solution of 
\be\label{varphieq}
\{K_\varphi ,K_{\varphi'}\} \doteq - \{F, K_{\varphi'}\}\,\text{ for all $\varphi' \in \ca K$}
\ee
This is equivalent to demanding that $\wt F \doteq F$ and $\{\wt F, K_{\varphi'}\}\doteq 0$ for all $\varphi' \in \ca K$. Note that $\varphi$ may depend on the phase space point. 
If a solution $\varphi$ exists, then $F$ is said to be ``tangentially extensible''.
Note that $\wt F$ is regular since it is the sum of two regular functions. 
If $F$ and $G$ are two tangentially extensible functions, their Dirac bracket is given by
\be
\{F, G\}_D \doteq \{\wt F, \wt G\} \doteq \{\wt F, G\} \doteq \{F, \wt G\}
\ee
which is a well-defined, unambiguous formula. 
\end{proposal}

The reason for assuming that $\ca K$ is a linear space is that we wish to think of $\varphi$ as ``smearing functions'' over the set of constraints. If $K_\varphi$ and $K_{\varphi'}$ are two regular constraints, then $a K_\varphi +  a' K_{\varphi'}$ is another regular constraint, for any $a, a' \in \mathbb R$ (or $\bb C$, if the phase space is complex). Therefore, it is natural to require that $a\varphi + a' \varphi' \in \ca K$ and define $K_{a\varphi + a' \varphi'} := a K_\varphi + a' K_{\varphi'}$. 
From the linearity of $\varphi \mapsto K_\varphi$, we can easily see that the second-class condition \eqref{secondclass} implies that if a solution of \eqref{varphieq} exists, then it must be unique: if \eqref{varphieq} had two solutions $\varphi_1$ and $\varphi_2$, then by subtracting the equations we would have, for all $\varphi' \in \ca K$,
\be
0 \doteq \{K_{\varphi_2}, K_{\varphi'} \} - \{K_{\varphi_1}, K_{\varphi'} \}= \{K_{\varphi_2}-K_{\varphi_1}, K_{\varphi'} \} =  \{K_{\varphi_2-\varphi_1}, K_{\varphi'} \} 
\ee
implying that $\varphi_2 - \varphi_1 = 0$. 
The existence of a solution is not guaranteed, so generally we expect to have the following inclusion of function spaces
\[
\text{\it tangentially extensible} \subset \text{\it regular} \subset \text{\it smooth}
\]
However, we have not attempted to prove this relation.\footnote{It is conceivable, for example, that there are tangentially extensible functions which are not regular, i.e., $F + K_\varphi$ may be regular for some $\varphi \notin \ca K$. However, this does not need to be considered in the case of this paper, since we showed that our prescription produces all possible proper observables (confirmed by the reduced phase space analysis).}

\section{A lemma related to Theorem~\ref{nogravitonobs}}
\label{AppLemma1formPK}

In this appendix we prove a lemma used in the proof of Theorem~\ref{nogravitonobs}.

\begin{lemma}\label{lemma1formPK}
If a continuous 1-form $\mu : T\ca P \to \bb R$ is annihilated by all vectors tangent to $\ca P_K$, then $\mu \doteq \delta K_\varphi$, where $\varphi$ is an integral representative of some (possibly field-dependent) real, continuous distribution on $\ca F_1$.
\end{lemma}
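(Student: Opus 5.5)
The argument is pointwise in $p\in\ca P_K$, and the tool is the linearity of $\ca P$: the tangent space at any point is $T_p\ca P\simeq\ca F_0\times\ca F_1$, and $T_p\ca P_K$ is the kernel of the continuous linear map $\ca K_{\rm lin}:(X\sigma,X\bar N)\mapsto X\bar N+\pu(X\sigma)$. This map sends $\ca F_0\times\ca F_1$ into $\ca F_1$, and it is surjective: for any $g\in\ca F_1$ the vector $(0,g)$ is a preimage. Thus $\mu_p$ is a continuous linear functional on $\ca F_0\times\ca F_1$ vanishing on $\ker\ca K_{\rm lin}$. The claim reduces to a factorization statement: there should be a continuous $\Phi\in\ca F_1^*$ with $\mu_p=\Phi\circ\ca K_{\rm lin}$. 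Writing $\Phi$ formally as $\frac1{4\pi}\re\int\epsilon_\scri\,\bar\varphi(\cdot)$ gives exactly $\mu_p=\delta K_\varphi|_p$, because $\delta K_\varphi(X)=\frac1{4\pi}\re\int\epsilon_\scri\,\bar\varphi\,(X\bar N+\pu X\sigma)$. Here $\varphi$ depends on $p$, which is the allowed field-dependence.

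To construct $\Phi$, I would avoid the open mapping theorem and use the explicit splitting $T_p\ca P=\ker\ca K_{\rm lin}\oplus(\{0\}\times\ca F_1)$:
\[
(a,b)=(a,-\pu a)+(0,\,b+\pu a).
\]
Both pieces depend continuously on $(a,b)$, since $\pu:\ca F_0\to\ca F_1$ is continuous. This holds by construction of the quotient topology on $\ca F_1$, whose seminorms in \eqref{seminorms1} are dominated by those in \eqref{seminorms0}. I would then define $\Phi(g):=\mu_p(0,g)$, which is manifestly continuous on $\ca F_1$. Since $\mu_p$ annihilates the first summand,
\[
\mu_p(a,b)=\mu_p(0,b+\pu a)=\Phi\big(\ca K_{\rm lin}(a,b)\big).
\]
Reality of $\mu$ makes $\Phi$ a real functional on the real vector space $\ca F_1$. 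I also need a one-line check that $(a,-\pu a)$ really lies in $\ca F_0\times\ca F_1$ and is tangent to $\ca P_K$, which holds since $\ca F_1=\pu\ca F_0$.

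The main obstacle is interpretive rather than analytic: in what sense $\varphi$ is an ``integral representative''. A real continuous functional on $\ca F_1$, a real space of spin-2 complex functions, should be written as $g\mapsto\frac1{4\pi}\re\langle\bar\varphi,g\rangle$ with $\bar\varphi\in\bar{\ca F}_1^{\,*}$ understood in the formal integral notation of Sec.~\ref{Secfuncanalysis}. So I would simply define $\varphi$ through this pairing rather than claim it is a function. The subsequent use in the proof of Theorem~\ref{nogravitonobs}, where $\varphi$ is shown to be a smooth function with limits, is handled there by comparison with $\delta F$ and $\delta\Theta$.

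Finally, because the conclusion is stated on-shell ($\doteq$), only the values of $\mu$ at points of $\ca P_K$ matter. The pointwise construction $p\mapsto\varphi_p$ therefore suffices, and no smoothness in $p$ is required.
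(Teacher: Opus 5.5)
Your proposal is correct and is essentially the paper's proof. Your splitting $(a,b)=(a,-\pu a)+(0,b+\pu a)$ is exactly the paper's $\wt X = X+V$ with $V\in T_p\ca P_K$ chosen so that $V\sigma=-X\sigma$, and your $\Phi(g):=\mu_p(0,g)$ is the paper's $\Phi(f):=\mu(X_f)$; this yields $\mu(X)=\Phi(\delta K(X))$, followed by the same formal integral representation via $\varphi$ (the only thing worth making explicit is that, for field-dependent $\varphi$, the extra term $\propto \delta\varphi\, K$ in $\delta K_\varphi$ vanishes on $\ca P_K$, which is why the conclusion is stated with $\doteq$).
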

\begin{proof}
The exterior derivative of the local constraint is a linear map $\delta K : T\ca P \to \ca F_1$, 
\be
X \mapsto X\bar N + \pu(X\sigma)
\ee
A vector $V$ is tangent to $\ca P_K$ if and only if $\delta K(V) = 0$. That is, $T\ca P_K \simeq \textrm{ker}(\delta K)$. 
Given any vector $X \in T_p\ca P$, with $p\in \ca P_K$, define $\wt X := X + V$ such that $\wt X\sigma = 0$, with $V \in T_p\ca P_K$. This is always possible since, for every $X$, there exists a $V$ such that $V\sigma = - X\sigma$, and thus
\ba
\wt X\sigma &= X\sigma + V\sigma = 0 \\
\wt X\bar N &= X\bar N + V\bar N = X\bar N - \pu(V\sigma) = X\bar N + \pu(X\sigma) = \delta K(X) \label{lemmawtXbarN}
\ea
Now let $\Phi$ be a (possibly field-dependent) real, continuous distribution on $\ca F_1$ given by
\be
\Phi(f)\big|_p := \mu(X_f)\big|_p
\ee
where $f\in \ca F_1$ and $X_f$ is a vector defined by
\be
X_f\sigma = 0\,,\quad X_f\bar N = f
\ee
Since $\mu$ is annihilated by $\textrm{ker}(\delta K)$, then for every $X \in T_p\ca P$, with $p\in \ca P_K$, 
\be
\mu(X) = \mu(\wt X) = \Phi(\delta K(X))
\ee
where we have used \eqref{lemmawtXbarN}.
Formally, the distribution $\Phi$ can be represented as an integral
\be
\Phi(f) = \frac{1}{4\pi}\re\int\!\epsilon_\scri\, \bar\varphi(x) f(x)
\ee
where $\varphi$ is said to be an {\it integral representative} of $\Phi$.
In this manner, the relation above is expressed as
\be
\mu(X) = \frac{1}{4\pi}\re\int\!\epsilon_\scri\, \bar\varphi\, \delta K(X)
\ee
and, as this is true for every $X \in T_p\ca P$, with $p\in \ca P_K$,
\be
\mu = \frac{1}{4\pi}\re\int\!\epsilon_\scri\, \bar\varphi\, \delta K \doteq \delta\left[\frac{1}{4\pi}\re\int\!\epsilon_\scri\, \bar\varphi K \right] = \delta K_\varphi
\ee
concluding the proof.
\end{proof}

\section{Evaluation of the distributional brackets}
\label{DistBracketsEval}

In this appendix we present the derivation of the formulas for the distributional brackets, displayed in Sec.~\ref{DistBracketsFormulas}.
We will begin with some general results which will be useful, then proceed with the directional brackets (whose derivation is slightly simpler) and finally discuss the proper brackets.

In solving for the distributional brackets, we will often encounter equations where a distribution defined on a certain space of test functions vanish when restricted to a subspace. For example, $\{\bar\sigma(x), \bar\sigma(x')\}_D$ is a bidistribution in $(\ca F_1 \otimes \ca F_1)^*$ and, as we will see, the equation it satisfies is
\be
\int\!\epsilon_\scri\!\int\!\epsilon'_\scri\,\rho(x)\rho'(x') \{ \bar\sigma(x), \bar\sigma(x')\}_D = 0 \quad\text{\it for all }\, \rho\rho' \in \ca F_1^0 \otimes \ca F_1^0
\ee
The following theorem characterizes the solutions of the analogous equation for univariate distributions.
\begin{theorem}\label{theodist0}
If $\Phi \in \ca F_1^*$ and 
\be\label{Phirho0cond}
\la\Phi, \rho\ra = 0 \quad \text{for all $\rho \in \ca F_1^0$}
\ee
then
\be
\Phi(x) = \phi(\ang)
\ee
for some distribution $\phi \in \ca E^*$.\footnote{Recall that $\ca E := \eth^2C^\infty(S^2, \bb R)$ is the space of purely electric, complex spin-weight $2$ smooth functions on the sphere. Any distribution $\phi$ in $\ca E^*$ is also naturally a distribution in $\ca F_1^*$, acting as
\[
\la\phi(\ang), \lambda(x)\ra = \int\!\epsilon_\scri\, \phi(\ang)\lambda(x) = \int_S\!\epsilon_S\, \phi(\ang) \int_{-\infty}^{\infty}\!du\,\lambda(u,\ang) = \la \phi(\ang), \Lambda_+(\ang)\ra_S
\]
where $\Lambda_+(\ang) = \int_{-\infty}^{\infty}\!du\,\lambda(u,\ang) \in \ca E$ and $\la \,\cdot\,, \cdot\,\ra_S$ is the pairing for distributions acting on $\ca E$.\label{ftphiext}}
\end{theorem}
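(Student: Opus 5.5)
The natural route is to show that every $\lambda\in\ca F_1$ splits as an element of $\ca F_1^0$ plus a fixed ``profile'' times its total integral $\Lambda_+\in\ca E$. Then $\Phi$ can only detect $\Lambda_+$, and we define $\phi$ by transporting $\Phi$ through the splitting.

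Fix once and for all a smooth real function $\chi(u)$ with compact support (or Schwartz decay) and $\int\!du\,\chi(u)=1$. For $\eta\in\ca E$, set $(\iota\eta)(x):=\chi(u)\eta(\ang)$. First check that $\iota\eta\in\ca F_1$. It is a $u$-derivative of $g(u)\eta(\ang)$ with $g(u)=\int_{-\infty}^u\chi$. This primitive is smooth and bounded, has limits $0$ and $\eta$ (both in $\ca E$), and all its $u$-derivatives of order $\ge 1$ are compactly supported. The angular words $\scr D$ act only on $\eta$, so all the seminorms in \eqref{caF0def} are finite.

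Continuity of $\iota:\ca E\to\ca F_1$ then follows from \eqref{seminorms1}. The seminorm $|\!|\iota\eta|\!|_{1,n}$ is bounded by a constant (depending on $\chi$ and $n$) times a finite sum of $\sup_{S^2}|\scr D\eta|$. These are exactly the seminorms of the natural $C^\infty$ Fr\'echet topology on $\ca E$.

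Next, for any $\lambda\in\ca F_1$, write $\lambda=\rho+\iota(\Lambda_+)$ with $\rho:=\lambda-\chi\Lambda_+$. By construction $\int\!du\,\rho=0$, and $\rho\in\ca F_1$ since $\ca F_1$ is a real vector space containing both terms. Note that $\Lambda_+\in\ca E$ by the definition of $\ca F_1$ in \eqref{caF1def1}. Hence $\rho\in\ca F_1^0$, and the hypothesis \eqref{Phirho0cond} gives
\[
\la\Phi,\lambda\ra=\la\Phi,\iota(\Lambda_+)\ra .
\]
Define $\phi:=\iota^*\Phi$, that is, $\la\phi,\eta\ra_S:=\la\Phi,\chi\,\eta\ra$. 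This is a continuous linear functional on $\ca E$, being the composition of $\Phi$ with the continuous map $\iota$, so $\phi\in\ca E^*$.

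Finally, by footnote~\ref{ftphiext}, $\phi$ regarded as an element of $\ca F_1^*$ acts on $\lambda$ as $\la\phi,\Lambda_+\ra_S$. This equals $\la\Phi,\iota(\Lambda_+)\ra=\la\Phi,\lambda\ra$ for every $\lambda$, so $\Phi(x)=\phi(\ang)$ as distributions on $\ca F_1$. A byproduct is that $\phi$ does not depend on the choice of $\chi$: two choices differ by an element of $\ca F_1^0$, on which $\Phi$ vanishes.

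The only step requiring care is the continuity of $\iota$, which makes $\phi$ a genuine element of $\ca E^*$. The quotient seminorms \eqref{seminorms1} involve an infimum over primitives, so the cleanest approach is to use the explicit weighted-sup form of the right-hand side of \eqref{seminorms1}. In that form the bound by $C^\infty(S^2)$ seminorms of $\eta$ is immediate, because the weights $(1+u^2)^{k/2+1/2+\varepsilon}$ multiply compactly supported functions of $u$.
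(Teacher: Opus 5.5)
Your proof is correct. It differs from the paper's proof of this theorem, though it matches the technique the paper uses for the next one.

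The paper argues abstractly. Since $\Phi$ vanishes on $\ca F_1^0=\ker I_u$, it descends to the quotient $\ca F_1/\ca F_1^0$. This quotient is identified with $\ca E$ because $I_u:\ca F_1\to\ca E$ is asserted to be surjective, so $\Phi=I_u^*\phi$. That argument is shorter, but it leaves two things unchecked:
\begin{itemize}
\item surjectivity of $I_u$ itself;
\item that the algebraic isomorphism $\ca F_1/\ca F_1^0\cong\ca E$ is also a homeomorphism for the quotient topology, which is needed for the descended functional to be continuous on $\ca E$.
\end{itemize}
The second point could be filled by the open mapping theorem for the continuous surjection $I_u$ between Fr\'echet spaces.

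Your explicit right inverse $\iota\eta=\chi\,\eta$, with $I_u\circ\iota=\mathrm{id}$, handles both points at once with an elementary seminorm bound. The estimate is even simpler than you make it: the quotient seminorm is an infimum over primitives, so it is bounded above by its value on the particular primitive $g\,\eta$. You therefore do not need the explicit weighted-sup form of \eqref{seminorms1}.

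Your splitting $\lambda=(\lambda-\chi\Lambda_+)+\chi\Lambda_+$ is exactly the decomposition $P=S\circ I_u$, $R=\mathrm{id}-P$ that the paper introduces only in the proof of the following bidistribution theorem. So your route makes continuity of $\phi$ explicit and carries over directly to that case, whereas the paper's quotient argument is quicker but less self-contained.
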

\begin{proof}
Any distribution in $\ca F_1^*$ which vanishes when restricted to $\ca F_1^0$ descends to a distribution on the quotient, $\ca F_1/\ca F_1^0$.
Consider the $u$-integration map, $I_u : \ca F_1 \to \ca E$, given by
\be
I_ug(\ang) := \int\!du\, g(u,\ang)
\ee
for $g\in \ca F_1$. Then $I_u$ is surjective onto $\ca E$, and $\ca F_1^0 = \text{\rm ker}(I_u)$, so
\be
\ca F_1/\ca F_1^0 \simeq \text{\rm im}(I_u) = \ca E
\ee
with $I_u$ realizing the quotient map.
The distributions $\Phi \in \ca F_1^*$ satisfying \eqref{Phirho0cond} are thus the pullbacks of distributions $\phi \in \ca E^*$ under $I_u$, $\Phi = I_u^*\phi$.
That is, for $\lambda\in \ca F_1$,
\be
\la\Phi, \lambda\ra = \la\phi, I_u\lambda\ra_S = \la\phi, \Lambda_+\ra_S
\ee
where $\la \,\cdot\,, \cdot\,\ra_S$ is the pairing on $\ca E$.
With a slight abuse of notation, according to footnote~\ref{ftphiext}, this is written simply as $\Phi(x) = \phi(\ang)$.
\end{proof}
This result can also be extended to bidistributions,
\begin{theorem}\label{theodist}
If $\Phi \in (\ca F_1 \otimes \ca F_1)^*$ and 
\be\label{Theorholambda}
\la\Phi(x,x'), \rho(x)\lambda(x')\ra = 0 \quad \text{\it for all }\, \rho\lambda \in \ca F_1^0 \otimes \ca F_1
\ee
then
\be\label{Theorholambdasol}
\Phi(x,x') = \phi(\ang,x')
\ee
where $\phi(\ang,x') \in (\ca E \otimes \ca F_1)^*$. 
(The distribution $\phi(\ang,x')$ on the right-hand side of the equality above is understood as the pull-back of a distribution in $(\ca E \otimes \ca F_1)^*$ under $I_u \times \textrm{id} : \ca F_1 \otimes \ca F_1 \to \ca E \otimes \ca F_1$, where $\textrm{id}$ is the identity map.)

If $\Phi \in (\ca F_1 \otimes \ca F_1)^*$ and 
\be\label{Theorhorho}
\la\Phi(x,x'), \rho(x)\rho'(x')\ra = 0 \quad \text{\it for all }\, \rho\rho' \in \ca F_1^0 \otimes \ca F_1^0
\ee
then
\be\label{Theorhorhosol}
\Phi(x,x') = \phi(\ang, x') + \phi'(x,\ang')
\ee
where $\phi(\ang,x') \in (\ca E \otimes \ca F_1)^*$ and $\phi'(x,\ang') \in (\ca F_1 \otimes \ca E)^*$.
(Again, in the equality above, these are understood as pull-backs under $I_u \times \textrm{id}$ and $\textrm{id} \times I_u$, respectively.) 
\end{theorem}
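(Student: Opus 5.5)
The plan is to reduce both statements to Theorem~\ref{theodist0} by exploiting the correspondence between bidistributions on $\ca F_1\otimes\ca F_1$ and continuous bilinear maps $\ca F_1\times\ca F_1\to\bb R$ (projective tensor topology).

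\emph{First statement.} For fixed $\lambda\in\ca F_1$, the map $\rho\mapsto\la\Phi,\rho\lambda\ra$ is a distribution in $\ca F_1^*$; indeed $f\mapsto \la\Phi, f\lambda\ra$ is a continuous linear functional on $\ca F_1$ by continuity of the bilinear form. By hypothesis it vanishes on $\ca F_1^0$, so Theorem~\ref{theodist0} gives $\la\Phi, f\lambda\ra=\la\phi_\lambda, I_u f\ra_S$ for a unique $\phi_\lambda\in\ca E^*$. Uniqueness follows because $I_u$ is surjective onto $\ca E$. Uniqueness then forces $\lambda\mapsto\phi_\lambda$ to be linear. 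We therefore define the bilinear map
\be
B(e,\lambda):=\la\phi_\lambda,e\ra_S, \qquad e\in\ca E,\ \lambda\in\ca F_1,
\ee
so that $\la\Phi,f\lambda\ra=B(I_uf,\lambda)$.

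The main obstacle is showing that $B$ is jointly continuous, so that it defines an element of $(\ca E\otimes\ca F_1)^*$. My approach is to construct a continuous linear section $s:\ca E\to\ca F_1$ of $I_u$, for example $s(e)(u,\ang)=\chi(u)e(\ang)$ with $\chi$ a fixed Schwartz function satisfying $\int\!du\,\chi=1$. One checks that $s(e)\in\ca F_1$: its integral is $e\in\ca E$, and the seminorms \eqref{seminorms1} are bounded by $C^k$ norms of $e$ on the sphere. Then
\be
B(e,\lambda)=\la\Phi,s(e)\lambda\ra,
\ee
which is a composition of continuous maps and hence jointly continuous. Since both $\ca E$ and $\ca F_1$ are Fréchet, separate continuity alone would in fact already suffice, so this step is not delicate once the section exists. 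Finally $\Phi=(I_u\times\mathrm{id})^*B$: on elementary tensors we have $B(I_uf,\lambda)=\la\Phi, s(I_uf)\lambda\ra=\la\Phi,f\lambda\ra$, because $f-s(I_uf)\in\ca F_1^0$. By linearity this extends to the algebraic tensor product.

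\emph{Second statement.} We use the section $s$ to split $\ca F_1=\ca F_1^0\oplus s(\ca E)$ topologically, via the continuous projection $P=1-s\circ I_u$ onto $\ca F_1^0$. Writing $f=Pf+sI_uf$ and $g=Pg+sI_ug$, bilinearity gives
\be
\la\Phi,fg\ra=\la\Phi,Pf\,Pg\ra+\la\Phi,sI_uf\,g\ra+\la\Phi,Pf\,sI_ug\ra.
\ee
The first term vanishes by hypothesis. The second term is $\phi(\ang,x')$ with $\phi(e,g):=\la\Phi,s(e)g\ra$, which is continuous on $\ca E\times\ca F_1$. The third term is $\phi'(x,\ang')$ with $\phi'(f,e):=\la\Phi,Pf\,s(e)\ra$, which is continuous on $\ca F_1\times\ca E$. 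This yields \eqref{Theorhorhosol}, understood as the stated pull-backs.

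The decomposition is not unique: pieces of the form $\psi(\ang,\ang')$ can be shifted between $\phi$ and $\phi'$. The statement only asserts existence, however, so this non-uniqueness causes no difficulty.
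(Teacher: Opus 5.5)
Your proof is correct and follows the paper's argument. The paper likewise uses the section $\varsigma\mapsto\chi(u)\varsigma(\ang)$ (with $\chi$ compactly supported rather than Schwartz) to build the projector onto $\ca F_1^0$, and it uses the same decomposition of $\la\Phi,\lambda\lambda'\ra$, defining $\phi$ and $\phi'$ exactly as you do. Your preliminary detour through Theorem~\ref{theodist0} in the first part is harmless but unnecessary, since the section alone already gives $\Phi=(I_u\times\mathrm{id})^*\phi$ with $\phi(e,\lambda)=\la\Phi,s(e)\lambda\ra$.
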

\begin{proof}
The distributions in \eqref{Theorholambdasol} and \eqref{Theorhorhosol} are clearly solutions to equations \eqref{Theorholambda} and \eqref{Theorhorho}, respectively. We need to show that these are the general solutions.
Choose a fixed compactly-supported smooth function $\chi:\bb R\to \bb R$ satisfying $\int\!du\,\chi(u) = 1$ and define the map $S:\ca E\to\ca F_1$ by
\be
S(\varsigma)(x) := \chi(u)\varsigma(\ang)
\ee
We see that $S$ is a right-inverse for $I_u$ (i.e., $I_u \circ S = \textrm{id} : \ca E \to\ca E$) and $P := S\circ I_u : \ca F_1\to\ca F_1$ is a projection map. Then $R := \textrm{id} - P$ is a projector from $\ca F_1$ to $\ca F_1^0$. In fact,
\be
R(\lambda)(x) = \lambda(x) - \chi(u)\Lambda_+(\ang)
\ee
and $\int\!du\,R(\lambda) = 0$.
Moreover, it is surjective onto $\ca F_1^0$ (in fact, $\rho = R\rho$). 
Then, given any bidistribution in $(\ca F_1\otimes\ca F_1)^*$ we can decompose
\ba
\la\Phi, \lambda\lambda'\ra &= \la \Phi, (P+R)\lambda\,(P+R)\lambda'\ra \no
&= \la \Phi, P\lambda\,P\lambda'\ra + \la \Phi, P\lambda\,R\lambda'\ra + \la \Phi, R\lambda\,P\lambda'\ra + \la \Phi, R\lambda\,R\lambda'\ra
\label{PhiPRdecomp}
\ea
If $\Phi$ vanishes when restricted to $\ca F_1^0\otimes\ca F_1$, the last two terms will vanish and thus
\be
\la\Phi, \lambda\lambda'\ra = \la\Phi, (P\lambda)\lambda'\ra
\ee
where we regrouped the first two terms into one.
Now define the distribution $\phi \in (\ca E\otimes\ca F_1)^*$ by
\be
\la \phi(\ang,x'), \varsigma(\ang)\lambda'(x')\ra_S := \la \Phi(x,x'), S\varsigma(x)\lambda'(x')\ra
\ee
and note that
\be
\la (I_u\times\mathrm{id})^*\phi, \lambda\lambda'\ra = \la \phi, (I_u\lambda)\lambda'\ra_S = \la \Phi, (SI_u\lambda)\lambda'\ra = \la \Phi, (P\lambda)\lambda'\ra = \la\Phi, \lambda\lambda'\ra
\ee
showing that $\Phi = (I_u\times\mathrm{id})^*\phi$. Or, with an abuse of notation, $\Phi(x,x') = \phi(\ang,x')$.

Consider now the case where $\Phi$ vanishes when restricted to $\ca F_1^0\otimes\ca F_1^0$. The last term in \eqref{PhiPRdecomp} vanishes, which gives
\be
\la\Phi, \lambda\lambda'\ra = \la\Phi, (P\lambda)\lambda'\ra + \la\Phi, R\lambda\,P\lambda'\ra
\ee
where we regrouped the first two terms into one. Define $\phi \in (\ca E\otimes \ca F_1)^*$ in the same way as before, so it accounts for the first term above. For the second term, define $\phi'\in (\ca F_1\otimes\ca E)^*$ by
\be
\la \phi'(x,\ang'), \lambda(x)\varsigma(\ang')\ra_S := \la \Phi(x,x'), R\lambda(x)\, S\varsigma(\ang')\ra
\ee
By a similar computation as before, we see that $\la\Phi, R\lambda\,P\lambda'\ra = \la (\mathrm{id}\times I_u)^*\phi'\ra$. Therefore, with the abuse of notation, $\Phi(x,x') = \phi(\ang, x') + \phi'(x,\ang')$.
\end{proof}

It will be useful to know how the $u$-derivative acts on distributions of the type $\phi(\ang,x')$. If such a $\phi$ is an element of $(\ca F_1\otimes\ca F_1)^*$, then
\ba
\la \fr D_u\phi, \alpha\lambda\ra &= - \la\phi(\ang,x'), \pu\alpha(x)\lambda(x')\ra \no
&= - \int\!\epsilon_\scri\int\!\epsilon_\scri'\, \phi(\ang,x')\pu\alpha(x)\lambda(x') \no
&= -\int_S\!\epsilon_S\int\!\epsilon_\scri'\, \phi(\ang,x')\left(\alpha_+(\ang)-\alpha_-(\ang)\right)\lambda(x') \no
&= - \int\!\epsilon_\scri\int\!\epsilon_\scri'\, \phi(\ang,x')\left(L_+(u)-L_-(u)\right)\alpha(x)\lambda(x') \no
&= \la \left(L_-(u)-L_+(u)\right) \phi(\ang,x'), \alpha(x)\lambda(x')\ra
\ea
That is, $\fr D_u\phi(x,x') = \left(L_-(u)-L_+(u)\right)\phi(\ang,x')$. An analogous formula holds for the $u'$-derivative of distributions in $(\ca F_1\otimes\ca E)^*$.

Finally, let us make a comment about the reality of distributional brackets.
Since the Dirac brackets are real, we will assume that the distributional brackets are also real, in the following sense.
The conjugation mapping between $\ca F$ and $\bar{\ca F}$, $f \mapsto \bar f$, induces a map between distributions: if $\Phi \in \ca F^*$, we define its {\it complex-conjugate} $\bar\Phi \in \bar{\ca F}^*$ by
\be
\la \bar\Phi, \bar f\ra := \ol{\la \Phi, f\ra}\,\,, \quad \text{\it for all $f\in \ca F$}
\ee
This can be automatically extended to bidistributions.
The reality of the distributional brackets is consequently expressed as, for example,
\be
\{\bar\sigma(x), N(x')\}_D = \ol{\{\sigma(x), \bar N(x')\}_D}
\ee
Of course, this is not to say that $\{\bar\sigma(x), N(x')\}_D$ is a ``real-valued distribution''; only the ``brackets themselves'' are real (i.e., they behave as if they come from a real symplectic structure).\footnote{As a matter of fact, there are no ``real-valued distributions'', strictly speaking, since conjugation is never an automorphism (it maps $\ca F^*$ to $\bar{\ca F}^*$).
However, some distributions, like the Dirac delta or the Heaviside function, act in the exact same way as their complex-conjugates --- in that sense, we can say that they are {\it real-valued distributions}. 
Thus, with a slight abuse of notation, we may write things like $\bar\delta(x,x') = \delta(x,x')$, but it should be understood that if the $\delta$ on the right-hand side lives in $(\ca F \otimes \ca G)^*$, then the $\delta$ on the left-hand side lives in $(\bar{\ca F} \otimes \bar{\ca G})^*$.}

\subsection{Directional brackets}

To evaluate the directional distributional brackets, consider the directional Dirac brackets with $\bar\Gamma_{\rho+\pu\alpha,\alpha}^{\bb C}$, where $(\rho,\alpha) \in \ca F_1^0 \times \ca F_0$, in the first entry and $\bar\Gamma_{\lambda',\alpha'}^{\bb C}$, where $(\lambda',\alpha') \in \ca F_1 \times \ca F_0$, in the second entry. We have
\ba
\{\bar\Gamma_{\rho+\pu\alpha,\alpha}^{\bb C}; \bar\Gamma_{\lambda',\alpha'}^{\bb C}\}_D &= \Big\{\int\!\epsilon_\scri \left( (\rho(x) +\pu\alpha(x))\bar\sigma(x) +\alpha(x) N(x)\right); \no
&\qquad \int\!\epsilon'_\scri \left( \lambda'(x')\bar\sigma(x') +\alpha'(x') N(x')\right) \Big\}_D \no
&\!\!\!\!\!\!\!\!\!\!\!\!\!\!\!\!\!\!\!\!\!\!\!\!\!\!\!\!\!\!\!\!\!\!\!\!\!\!
= \int\!\epsilon_\scri\!\int\!\epsilon'_\scri \Big[
\rho(x)\lambda'(x') \{ \bar\sigma(x); \bar\sigma(x')\}_D \nonumber\\[-0.8ex]
&\!+ \alpha(x)\lambda'(x') \{ N(x); \bar\sigma(x')\}_D + \pu\alpha(x)\lambda'(x') \{ \bar\sigma(x); \bar\sigma(x')\}_D \nonumber\\[0.2ex]
&\!+ \rho(x) \alpha'(x') \{ \bar\sigma(x); N(x')\}_D \nonumber\\[0.2ex]
&\!+ \alpha(x)\alpha'(x') \{ N(x); N(x')\}_D + \pu\alpha(x)\alpha'(x') \{ \bar\sigma(x); N(x')\}_D \Big] \no
&\!\!\!\!\!\!\!\!\!\!\!\!\!\!\!\!\!\!\!\!\!\!\!\!\!\!\!\!\!\!\!\!\!\!\!\!\!\!
= \int\!\epsilon_\scri\!\int\!\epsilon'_\scri \Big[
\rho(x)\lambda'(x') \{ \bar\sigma(x); \bar\sigma(x')\}_D \nonumber\\[-0.8ex]
&\!+ \alpha(x)\lambda'(x') \left( \{ N(x); \bar\sigma(x')\}_D - \fr D_u \{ \bar\sigma(x); \bar\sigma(x')\}_D \right) \nonumber\\[0.2ex]
&\!+ \rho(x) \alpha'(x') \{ \bar\sigma(x); N(x')\}_D \nonumber\\[0.2ex]
&\!+ \alpha(x)\alpha'(x') \left( \{ N(x); N(x')\}_D - \fr D_u \{ \bar\sigma(x); N(x')\}_D \right) \Big] 
\label{GammaCGammaCdistdir}
\ea
According to \eqref{GammaCGammaC}, this must evaluate to $0$. 
Each line in the expression above must independently vanish. Note that the equation coming from the first line can be written as
\be
\la \{\bar\sigma(x); \bar\sigma(x')\}_D, \rho(x)\lambda'(x')\ra = 0 \quad \text{for all }\, \rho\lambda' \in \ca F_1^0 \otimes \ca F_1
\ee
Theorem~\ref{theodist} then gives
\be
\{\bar\sigma(x); \bar\sigma(x')\}_D = \phi(\ang,x')
\ee
for some $\phi(\ang,x')\in (\ca E\otimes\ca F_1)^*$.
The second line gives
\be
\la \{ N(x); \bar\sigma(x')\}_D - \fr D_u \{ \bar\sigma(x); \bar\sigma(x')\}_D, \alpha(x)\lambda'(x')\ra = 0 \quad \text{for all }\, \alpha\lambda' \in \ca F_0 \otimes \ca F_1
\ee
Since this distribution belongs to $(\ca F_0 \otimes \ca F_1)^*$, it must be zero
\be
\{ N(x); \bar\sigma(x')\}_D - \fr D_u \{ \bar\sigma(x); \bar\sigma(x')\}_D = 0
\ee
which gives
\be
\{ N(x); \bar\sigma(x')\}_D =  \left(L_-(u) - L_+(u)\right) \phi(\ang,x')
\ee
The third line reads
\be
\la \{\bar\sigma(x); N(x')\}_D, \rho(x) \alpha'(x')\ra = 0 \quad \text{for all }\, \rho\alpha' \in \ca F_1^0 \otimes \ca F_0
\ee
A straightforward modification of Theorem~\ref{theodist} implies
\be\label{sigmaNkappaPsi}
\{\bar\sigma(x); N(x')\}_D = \wt\phi(\ang,x')
\ee
for some $\wt\phi\in (\ca E\otimes\ca F_0)^*$.
Finally, the fourth line yields
\be
\{ N(x); N(x')\}_D = \fr D_u \{ \bar\sigma(x);  N(x')\}_D = \left(L_-(u) - L_+(u)\right) \wt\phi(\ang,x')
\ee
The ``minimal solution'' corresponds to taking $\phi=\wt\phi = 0$, which gives
\ba
\{\bar\sigma(x); \bar\sigma(x')\}_D &= 0 \\
\{\bar\sigma(x); N(x')\}_D &= 0 \\
\{ N(x); \bar\sigma(x')\}_D &= 0 \\
\{ N(x); N(x')\}_D &= 0
\ea

Next, we consider directional brackets between $\Gamma_{\rho+\pu\alpha,\alpha}^{\bb C}$ and $\bar\Gamma_{\lambda',\alpha'}^{\bb C}$. We have
\ba
\{\Gamma_{\rho+\pu\alpha,\alpha}^{\bb C}; \bar\Gamma_{\lambda',\alpha'}^{\bb C}\}_D &= \Big\{\int\!\epsilon_\scri \left( (\bar\rho(x) +\pu\bar\alpha(x))\sigma(x) +\bar\alpha(x)\bar N(x)\right); \no
&\qquad \int\!\epsilon'_\scri \left( \lambda'(x')\bar\sigma(x') +\alpha'(x') N(x')\right) \Big\}_D \no
&\!\!\!\!\!\!\!\!\!\!\!\!\!\!\!\!\!\!\!\!\!\!\!\!\!\!\!\!\!\!\!\!\!\!\!\!\!\!
= \int\!\epsilon_\scri\!\int\!\epsilon'_\scri \Big[
\bar\rho(x)\lambda'(x') \{ \sigma(x); \bar\sigma(x')\}_D \nonumber\\[-0.8ex]
&\!+ \bar\alpha(x)\lambda'(x') \left( \{ \bar N(x); \bar\sigma(x')\}_D - \fr D_u \{ \sigma(x); \bar\sigma(x')\}_D \right) \nonumber\\[0.2ex]
&\!+ \bar\rho(x) \alpha'(x') \{ \sigma(x);  N(x')\}_D \nonumber\\[0.2ex]
&\!+ \bar\alpha(x)\alpha'(x') \left( \{ \bar N(x); N(x')\}_D - \fr D_u \{ \sigma(x); N(x')\}_D \right) \Big] 
\label{GammaCbarGammaCdistdir}
\ea
According to \eqref{GammaCbarGammaC}, this must evaluate to
\ba
\{\Gamma_{\rho + \pu\alpha, \alpha}^{\bb C}, \bar\Gamma_{\lambda', \alpha'}^{\bb C}\}_D &= -4\pi\int\!\epsilon_\scri\left(\lambda' (\bar P + 2\bar\alpha) - \alpha'(\bar\rho + 2\pu\bar\alpha) \right) \no
&\!\!\!\!\!\!\!\!\!\!\!\!\!\!\!\!\!\!\!\!\!\!\!\!\!\!= -4\pi \int\!\epsilon_\scri\!\int\!\epsilon'_\scri\, \delta(\ang,\ang')\Big[
\bar\rho(x)\lambda'(x') H(u',u)  \nonumber\\[-0.8ex]
&\qquad\qquad\qquad\quad + 2\bar\alpha(x)\lambda'(x') \delta(u,u') \nonumber\\[0.2ex]
&\qquad\qquad\qquad\quad - \bar\rho(x) \alpha'(x') \delta(u,u')  \nonumber\\[0.2ex]
&\qquad\qquad\qquad\quad + 2\bar\alpha(x)\alpha'(x') \fr D_u\delta(u,u') \Big] 
\ea
where we have re-expressed it in terms of bidistributions. The following equations are implied
\ba
&\la \{\sigma(x); \bar\sigma(x')\}_D + 4\pi \delta(\ang,\ang')H(u',u), \bar\rho(x) \lambda'(x')\ra = 0 \,\,\,\, \text{\it for all }\, \bar\rho\lambda' \in {\bar{\ca F}_1^0} \otimes \ca F_1 \\
&\la \{\bar N(x); \bar\sigma(x')\}_D - \fr D_u\{\sigma(x); \bar\sigma(x')\}_D + 8\pi \delta(x,x'), \bar\alpha(x) \lambda'(x')\ra = 0 \no
&\hspace{21.8em} \text{\it for all }\, \bar\alpha\lambda' \in \bar{\ca F}_0 \otimes \ca F_1 \\
&\la \{\sigma(x); N(x')\}_D - 4\pi \delta(x,x'), \bar\rho(x) \alpha'(x')\ra = 0 \qquad\,\, \text{\it for all }\, \bar\rho\alpha' \in {\bar{\ca F}_1^0} \otimes \ca F_0 \\
&\la \{\bar N(x); N(x')\}_D - \fr D_u\{\sigma(x); N(x')\}_D + 8\pi \fr D_u\delta(x,x'), \bar\alpha(x) \alpha'(x')\ra = 0 \no
&\hspace{21.8em} \text{\it for all }\, \bar\alpha\alpha' \in \bar{\ca F}_0 \otimes \ca F_0
\ea
The first line gives
\be
\{\sigma(x); \bar\sigma(x')\}_D = - 4\pi \delta(\ang,\ang')H(u',u) + \psi(\ang,x')
\ee
where $\psi \in (\bar{\ca E}\otimes \ca F_1)^*$.
The second line gives
\ba
\{\bar N(x); \bar\sigma(x')\}_D &= \fr D_u\{\sigma(x); \bar\sigma(x')\}_D - 8\pi \delta(x,x') \no
&= -4\pi\delta(x,x') - 4\pi\delta(\ang,\ang')L_-(u) + \left(L_-(u) - L_+(u)\right)\psi(\ang,x')
\ea
where we have used \eqref{DuHuu'} to compute $\fr D_uH(u',u) = \fr D_u\left(1 - H(u,u')\right) =  - \delta(u,u') + L_-(u)$.
The third line gives
\be
\{\sigma(x); N(x')\}_D = 4\pi \delta(x,x') + \wt\psi(\ang,x')
\ee
where $\wt\psi(\ang,x') \in (\bar{\ca E}\otimes\ca F_0)^*$.
The fourth lines gives
\ba
\{\bar N(x); N(x')\}_D &= \fr D_u\{\sigma(x); N(x')\}_D - 8\pi \fr D_u\delta(x,x') \no
&= -4\pi \fr D_u\delta(x,x') + \left(L_-(u) - L_+(u)\right)\wt\psi(\ang,x')
\ea
The ``minimal solution'' consists of taking $\psi = \wt\psi = 0$, which yields 
\ba
\{\sigma(x); \bar\sigma(x')\}_D &= - 4\pi \delta(\ang,\ang')H(u',u) \\
\{\bar N(x); \bar\sigma(x')\}_D &= -4\pi\delta(x,x') - 4\pi \delta(\ang,\ang') L_-(u) \\
\{\sigma(x); N(x')\}_D &= 4\pi \delta(x,x') \\
\{\bar N(x); N(x')\}_D &= -4\pi\, \fr D_u\delta(x,x')
\ea
All the other brackets (involving complex-conjugated entries) could be computed by considering the remaining brackets between shear-news functions, namely $\{\Gamma_{\rho+\pu\alpha,\alpha}^{\bb C}; \Gamma_{\lambda',\alpha'}^{\bb C}\}_D$ and $\{\bar\Gamma_{\rho+\pu\alpha,\alpha}^{\bb C}; \Gamma_{\lambda',\alpha'}^{\bb C}\}_D$. However, since every manipulation behaves ordinarily under complex conjugation, the reality of the Dirac brackets will extend automatically to distributional brackets. So, for example,
\be
\{N(x); \bar N(x')\}_D = \ol{\{\bar N(x); N(x')\}_D} = -4\pi\, \fr D_u\delta(x,x')
\ee
where, in the last equality, we have used the reality of the Dirac delta (and of its derivative).

\subsection{Proper brackets}

The proper distributional brackets are intended to compute the Dirac brackets between two proper observables. Since proper brackets are automatically anti-symmetric, we will explore the multiplicity of solutions for the distributional brackets to make them manifestly anti-symmetric, if possible. Of course this is not necessary, as any solution is just as good for the ultimate purpose of the distributional brackets, which is only to reconstruct brackets between proper observables --- nevertheless, the solutions with the most amount of symmetries should be preferred simply for ``aesthetic'' reasons. 
Thus, we will seek for solutions satisfying, for example,
\be
\{\sigma(x'), \bar N(x)\}_D = - \{\bar N(x), \sigma(x')\}_D
\ee
if possible.

Let us first consider the brackets between $\bar\Gamma_{\rho+\pu\alpha,\alpha}^{\bb C}$ and $\bar\Gamma_{\rho'+\pu\alpha',\alpha'}^{\bb C}$, with $(\rho,\alpha), (\rho',\alpha') \in \ca F_1^0 \times \ca F_0$. We have
\ba
\{\bar\Gamma_{\rho+\pu\alpha,\alpha}^{\bb C}, \bar\Gamma_{\rho'+\pu\alpha',\alpha'}^{\bb C}\}_D &= \Big\{\int\!\epsilon_\scri \left( (\rho(x) +\pu\alpha(x))\bar\sigma(x) + \alpha(x) N(x)\right), \no
&\qquad \int\!\epsilon'_\scri \left( (\rho'(x') +\pup\alpha'(x'))\bar\sigma(x') +\alpha'(x') N(x')\right) \Big\}_D \no
&\!\!\!\!\!\!\!\!\!\!\!\!\!\!\!\!\!\!\!\!\!\!\!\!\!\!\!\!\!\!\!\!\!\!\!\!\!\!
= \int\!\epsilon_\scri\!\int\!\epsilon'_\scri \Big[
\rho(x)\rho'(x') \{ \bar\sigma(x), \bar\sigma(x')\}_D \nonumber\\[-0.8ex]
&\!+\rho(x)\alpha'(x')\{\bar\sigma(x), N(x')\}_D + \rho(x)\pup\alpha'(x') \{\bar\sigma(x),\bar\sigma(x')\}_D  \nonumber\\[0.2ex]
&\!+ \alpha(x)\rho'(x')\{ N(x), \bar\sigma(x')\}_D + \pu\alpha(x)\rho'(x')\{\bar\sigma(x),\bar\sigma(x')\}_D \nonumber\\[0.2ex]
&\!+ \alpha(x)\alpha'(x') \{ N(x), N(x')\}_D + \pu\alpha(x)\pup\alpha'(x') \{\bar\sigma(x), \bar\sigma(x')\}_D \nonumber\\[-0.3ex]
&\!+\pu\alpha(x)\alpha'(x')\{\bar\sigma(x), N(x')\}_D + \alpha(x)\pup\alpha'(x') \{ N(x), \bar\sigma(x')\}_D \Big] \no
&\!\!\!\!\!\!\!\!\!\!\!\!\!\!\!\!\!\!\!\!\!\!\!\!\!\!\!\!\!\!\!\!\!\!\!\!\!\!
= \int\!\epsilon_\scri\!\int\!\epsilon'_\scri \Big[
\rho(x)\rho'(x') \{ \bar\sigma(x), \bar\sigma(x')\}_D \nonumber\\[-0.8ex]
&\!+\rho(x)\alpha'(x')\left( \{\bar\sigma(x), N(x')\}_D - \fr D_{u'}\{\bar\sigma(x), \bar\sigma(x')\}_D \right) \nonumber\\[0.2ex]
&\!+ \alpha(x)\rho'(x')\left( \{ N(x), \bar\sigma(x')\}_D - \fr D_{u}\{\bar\sigma(x), \bar\sigma(x')\}_D\right) \nonumber\\[0.2ex]
&\!+ \alpha(x)\alpha'(x') \big(\{ N(x), N(x')\}_D + \fr D_{u}\fr D_{u'} \{\bar\sigma(x), \bar\sigma(x')\}_D \nonumber\\[-0.3ex]
&\qquad\qquad\qquad -\fr D_{u}\{\bar\sigma(x), N(x')\}_D -\fr D_{u'}\{ N(x), \bar\sigma(x')\}_D \big) \Big]
\label{GammaCGammaCdist}
\ea
According to \eqref{GammaCGammaC}, this should vanish for all $(\rho,\alpha), (\rho',\alpha') \in \ca F_1^0 \times \ca F_0$. We thus get the following equations,
\ba
&\la\{ \bar\sigma(x), \bar\sigma(x')\}_D, \rho(x)\rho'(x')\ra = 0 \qquad\text{\it for all }\, \rho\rho' \in \ca F_1^0 \otimes \ca F_1^0 \no
&\la \{\bar\sigma(x), N(x')\}_D - \fr D_{u'}\{\bar\sigma(x), \bar\sigma(x')\}_D , \rho(x)\alpha'(x')\ra = 0 \qquad\!\! \text{\it for all }\, \rho\alpha' \in \ca F_1^0 \otimes \ca F_0 \no
&\la\{ N(x), \bar\sigma(x')\}_D - \fr D_{u}\{\bar\sigma(x), \bar\sigma(x')\}_D, \alpha(x)\rho'(x')\ra = 0 \qquad\text{\it for all }\, \alpha\rho' \in \ca F_0 \otimes \ca F_1^0 \no
& \la\{ N(x), N(x')\}_D + \fr D_{u}\fr D_{u'} \{\bar\sigma(x), \bar\sigma(x')\}_D -\fr D_{u}\{\bar\sigma(x), N(x')\}_D \no
&\qquad\qquad\qquad\,\, -\fr D_{u'}\{ N(x), \bar\sigma(x')\}_D, \alpha(x)\alpha'(x')\ra = 0 \qquad\text{\it for all }\, \alpha\alpha' \in \ca F_0 \otimes \ca F_0
\ea
The first line gives, using Theorem~\ref{theodist} and imposing anti-symmetry,
\be
\{ \bar\sigma(x), \bar\sigma(x')\}_D = \phi(\ang,x') - \phi(\ang',x)
\ee
where $\phi \in (\ca E\otimes \ca F_1)^*$. 
The second line yields
\be
\{\bar\sigma(x), N(x')\}_D - \fr D_{u'}\{\bar\sigma(x), \bar\sigma(x')\}_D = \wt\phi(\ang,x')
\ee
for $\wt\phi \in (\ca E\otimes \ca F_0)^*$. Evaluating the $u'$-derivative gives
\be\label{barsigmaNshortarg}
\{\bar\sigma(x), N(x')\}_D = \wt\phi(\ang,x') + \fr D_{u'}\phi(\ang,x') + \left(L_+(u') - L_-(u')\right) \phi(\ang',x)
\ee
Note that $\fr D_{u'}\phi(\ang,x')$ is a distribution of the same kind as $\wt\phi(\ang,x')$, and thus it can be eliminated via a redefinition
\be
\wt\phi(\ang,x') + \fr D_{u'}\phi(\ang,x') \to \wt\phi(\ang,x') 
\ee
In this way, we can write simply 
\be\label{barsigmaNshortarg2}
\{\bar\sigma(x), N(x')\}_D = \wt\phi(\ang,x') + \left(L_+(u') - L_-(u')\right) \phi(\ang',x)
\ee
The third line is equivalent to the second under the restriction of anti-symmetry. 
The fourth line gives
\be
\{ N(x), N(x')\}_D + \fr D_{u}\fr D_{u'} \{\bar\sigma(x), \bar\sigma(x')\}_D -\fr D_{u}\{\bar\sigma(x), N(x')\}_D -\fr D_{u'}\{ N(x), \bar\sigma(x')\}_D = 0
\ee
and, evaluating the derivatives,
\be
\{ N(x), N(x')\}_D = \left(L_+(u') - L_-(u')\right) \wt\phi(\ang',x) - \left(L_+(u) - L_-(u)\right) \wt\phi(\ang,x')
\ee
The ``minimal solution'' consists of taking $\phi = \wt\phi = 0$, which yields
\ba
\{ \bar\sigma(x), \bar\sigma(x')\}_D &= 0 \no
\{\bar\sigma(x), N(x')\}_D &= 0 \no
\{ N(x), N(x')\}_D &= 0
\ea

Next, we consider the brackets between $\Gamma_{\rho+\pu\alpha,\alpha}^{\bb C}$ and $\bar\Gamma_{\rho'+\pu\alpha',\alpha'}^{\bb C}$, with $(\rho,\alpha), (\rho',\alpha') \in \ca F_1^0 \times \ca F_0$. We have
\ba
\{\Gamma_{\rho+\pu\alpha,\alpha}^{\bb C}, \bar\Gamma_{\rho'+\pu\alpha',\alpha'}^{\bb C}\}_D &= \Big\{\int\!\epsilon_\scri \left( (\bar\rho(x) +\pu\bar\alpha(x))\sigma(x) + \bar\alpha(x) \bar N(x)\right), \no
&\qquad \int\!\epsilon'_\scri \left( (\rho'(x') +\pup\alpha'(x'))\bar\sigma(x') +\alpha'(x') N(x')\right) \Big\}_D \no
&\!\!\!\!\!\!\!\!\!\!\!\!\!\!\!\!\!\!\!\!\!\!\!\!\!\!\!\!\!\!\!\!\!\!\!\!\!\!
= \int\!\epsilon_\scri\!\int\!\epsilon'_\scri \Big[
\bar\rho(x)\rho'(x') \{ \sigma(x), \bar\sigma(x')\}_D \nonumber\\[-0.8ex]
&\!+\bar\rho(x)\alpha'(x')\{\sigma(x), N(x')\}_D + \bar\rho(x)\pup\alpha'(x') \{\sigma(x),\bar\sigma(x')\}_D  \nonumber\\[0.2ex]
&\!+ \bar\alpha(x)\rho'(x')\{ \bar N(x), \bar\sigma(x')\}_D + \pu\bar\alpha(x)\rho'(x')\{\sigma(x),\bar\sigma(x')\}_D \nonumber\\[0.2ex]
&\!+ \bar\alpha(x)\alpha'(x') \{ \bar N(x), N(x')\}_D + \pu\bar\alpha(x)\pup\alpha'(x') \{\sigma(x), \bar\sigma(x')\}_D \nonumber\\[-0.3ex]
&\!+\pu\bar\alpha(x)\alpha'(x')\{\sigma(x), N(x')\}_D + \bar\alpha(x)\pup\alpha'(x') \{ \bar N(x), \bar\sigma(x')\}_D \Big] \no
&\!\!\!\!\!\!\!\!\!\!\!\!\!\!\!\!\!\!\!\!\!\!\!\!\!\!\!\!\!\!\!\!\!\!\!\!\!\!
= \int\!\epsilon_\scri\!\int\!\epsilon'_\scri \Big[
\bar\rho(x)\rho'(x') \{ \sigma(x), \bar\sigma(x')\}_D \nonumber\\[-0.8ex]
&\!+\bar\rho(x)\alpha'(x')\left( \{\sigma(x), N(x')\}_D - \fr D_{u'}\{\sigma(x), \bar\sigma(x')\}_D \right) \nonumber\\[0.2ex]
&\!+ \bar\alpha(x)\rho'(x')\left( \{ \bar N(x), \bar\sigma(x')\}_D - \fr D_{u}\{\sigma(x), \bar\sigma(x')\}_D\right) \nonumber\\[0.2ex]
&\!+ \bar\alpha(x)\alpha'(x') \big(\{ \bar N(x), N(x')\}_D + \fr D_{u}\fr D_{u'} \{\sigma(x), \bar\sigma(x')\}_D \nonumber\\[-0.3ex]
&\qquad\qquad\qquad -\fr D_{u}\{\sigma(x), N(x')\}_D -\fr D_{u'}\{ \bar N(x), \bar\sigma(x')\}_D \big) \Big]
\label{GammaCbarGammaCdist}
\ea
According to \eqref{GammaCbarGammaCas}, this must evaluate to
\ba
\{\Gamma_{\rho +\pu\alpha, \alpha}^{\bb C}, \bar\Gamma_{\rho'+\pu\alpha', \alpha'}^{\bb C}\}_D =&\,\, 2\pi\int\!\epsilon_\scri\big( \bar\rho P' - \rho' \bar P + 3 \bar\rho \alpha' - 3 \rho'\bar\alpha \nonumber\\[-0.3ex]
&\qquad\qquad\,\, +\pu\bar\alpha P' - \pu\alpha' \bar P + 4 \pu\bar\alpha \alpha' - 4 \pu\alpha' \bar\alpha \big) \nonumber\\[0.3ex]
&\!\!\!\!\!\!\!\!\!\!\!\!\!\!\!\!\!\!\!\!\!\!\!\!\!\!= 2\pi \int\!\epsilon_\scri\!\int\!\epsilon'_\scri\, \delta(\ang,\ang')\Big[
\bar\rho(x)\rho'(x') \left( H(u,u') - H(u',u) \right)  \nonumber\\[-0.8ex]
&\qquad\qquad\qquad\,\, + \bar\rho(x)\alpha'(x') \left( 3\delta(u,u') + \fr D_{u'}H(u',u) \right) \nonumber\\[0.3ex]
&\qquad\qquad\qquad\,\, - \bar\alpha(x) \rho'(x') \left( 3\delta(u,u') + \fr D_{u}H(u,u') \right)  \nonumber\\[0.2ex]
&\qquad\qquad\qquad\,\, - 8\bar\alpha(x)\alpha'(x') \fr D_u^-\delta(u,u') \Big] 
\ea
where $P(x) := \int_{-\infty}^u\!du'\, \rho(u',\ang)$.
The following equations are implied
\ba
&\la\{ \sigma(x), \bar\sigma(x')\}_D - 2\pi\delta(\ang,\ang')\left( H(u,u') - H(u',u) \right), \bar\rho(x)\rho'(x')\ra = 0 \no
&\hspace{25em} \text{\it for all }\, \bar\rho\rho' \in {\bar{\ca F}_1^0} \otimes \ca F_1^0 \no
&\la \{\sigma(x), N(x')\}_D - \fr D_{u'}\{\sigma(x), \bar\sigma(x')\}_D - 2\pi\delta(\ang,\ang') \left(4\delta(u,u') - L_+(u')\right), \bar\rho(x)\alpha'(x')\ra = 0 \no
&\hspace{25em} \text{\it for all }\, \bar\rho\alpha' \in {\bar{\ca F}_1^0} \otimes \ca F_0 \no
&\la \{ \bar N(x), \bar\sigma(x')\}_D - \fr D_{u}\{\sigma(x), \bar\sigma(x')\}_D + 2\pi\delta(\ang,\ang') \left(4\delta(u,u') - L_+(u)\right), \bar\alpha(x)\rho'(x')\ra = 0 \no
&\hspace{25em} \text{\it for all }\, \bar\alpha\rho' \in \bar{\ca F}_0 \otimes \ca F_1^0 \no
& \la\{ \bar N(x), N(x')\}_D + \fr D_{u}\fr D_{u'} \{\sigma(x), \bar\sigma(x')\}_D -\fr D_{u}\{\sigma(x), N(x')\}_D \no
&\hspace{10em} -\fr D_{u'}\{ \bar N(x), \bar\sigma(x')\}_D + 16\pi \fr D_u^-\delta(x,x'), \bar\alpha(x)\alpha'(x')\ra = 0 \no
&\hspace{25em} \text{\it for all }\, \bar\alpha\alpha' \in \bar{\ca F}_0 \otimes \ca F_0
\label{disteqsGammaCbarGammaCproper}
\ea
The first line gives, imposing reality and anti-symmetry,
\be
\{\sigma(x), \bar\sigma(x')\}_D = 2\pi\delta(\ang,\ang')\left( H(u,u') - H(u',u) \right) + \psi(\ang,x') - \bar\psi(\ang',x)
\ee
where $\psi \in (\bar{\ca E}\otimes \ca F_1)^*$.
The second line gives
\be
\{\sigma(x), N(x')\}_D = \fr D_{u'}\{\sigma(x), \bar\sigma(x')\}_D + 2\pi\delta(\ang,\ang') \left(4\delta(u,u') - L_+(u')\right) + \wt\psi(\ang,x')
\ee
where $\wt\psi \in (\bar{\ca E}\otimes \ca F_0)^*$. Evaluating the $u'$-derivative yields
\ba
\{\sigma(x), N(x')\}_D &=  2\pi\delta(\ang,\ang') \left(2\delta(u,u') + L_-(u')\right) + \left(L_+(u') - L_-(u')\right)\bar\psi(\ang',x) \no
&\quad + \fr D_{u'}\psi(\ang,x') + \wt\psi(\ang,x')
\ea
Again, since $\fr D_{u'}\psi(\ang,x')$ is of the same kind as $\wt\psi(\ang,x')$, 
we can redefine
\be
\wt\psi(\ang,x') + \fr D_{u'}\psi(\ang,x') \to \wt\psi(\ang,x')
\ee
so the expression above becomes
\be
\{\sigma(x), N(x')\}_D =  2\pi\delta(\ang,\ang') \left(2\delta(u,u') + L_-(u')\right) + \left(L_+(u') - L_-(u')\right)\bar\psi(\ang',x) +\wt\psi(\ang,x')
\ee
From reality and anti-symmetry, we should have
\be
\{\bar N(x), \bar\sigma(x')\}_D = \ol{\{N(x), \sigma(x')\}_D} = - \ol{\{\sigma(x'), N(x)\}_D }
\ee
which is compatible with the third line of \eqref{disteqsGammaCbarGammaCproper}, so no new information is gained from it.
The fourth line gives
\ba
\{ \bar N(x), N(x')\}_D &= - \fr D_{u}\fr D_{u'} \{\sigma(x), \bar\sigma(x')\}_D +\fr D_{u}\{\sigma(x), N(x')\}_D \no
&\quad\, +\fr D_{u'}\{ \bar N(x), \bar\sigma(x')\}_D - 16\pi \fr D_u^-\delta(x,x')
\ea
Computing all derivatives yields
\ba
\{ \bar N(x), N(x')\}_D &= - 4\pi \fr D_u^-\delta(x,x') \no
&\quad\, +\left(L_-(u) - L_+(u)\right) \wt\psi(\ang,x') - \left(L_-(u') - L_+(u')\right) \ol{\wt\psi}(\ang',x)
\ea
The ``minimal solution'' corresponds to taking $\psi=\wt\psi = 0$, which gives
\ba
\{\sigma(x), \bar\sigma(x')\}_D &= 2\pi\delta(\ang,\ang') \left( H(u,u') - H(u',u)\right) \\
\{\sigma(x), N(x')\}_D &= 4\pi \delta(x,x') + 2\pi\delta(\ang,\ang') L_-(u') \\
\{\bar N(x),N(x')\}_D &= -4\pi\,\fr D_u^-\delta(x,x') 
\ea
All the other brackets can be obtained by reality and anti-symmetry.

\bibliographystyle{JHEPs}
\bibliography{biblio}
%

\end{document}